\def\draft{1}
\def\anon{0}
\def\draft{0} \fi 
\newcommand*{\myfont}{\fontfamily{bch}\selectfont}
\DeclareTextFontCommand{\textmyfont}{\myfont}
\newtheorem{theorem}{Theorem}[section]
\newtheorem{corollary}[theorem]{Corollary}
\newtheorem{lemma}[theorem]{Lemma}
\newtheorem{observation}[theorem]{Observation}
\newtheorem{definition}[theorem]{Definition}
\newtheorem{claim}[theorem]{Claim}
\newtheorem{fact}[theorem]{Fact}
\newtheorem{remark}[theorem]{Remark}
\newtheorem{example}{Example}[theorem]
\newcommand{\annot}[2]{\underset{\textcolor{magenta}{#2}}{#1}}
\newcommand{\F}{\mathbb{F}}
\newcommand{\I}{\mathbb{I}}
\newcommand{\N}{\mathbb{N}}
\newcommand{\Boo}{\{0,1 \}}
\newcommand{\Grobner}{\mathfrak{G}}
\newcommand{\bigO}{\mathcal{O}}
\newcommand{\mc}[1]{\mathcal{#1}}
\newcommand{\cF}{\mathcal{F}}
\newcommand{\cV}{\mathcal{V}}
\newcommand{\calf}{\cF}
\newcommand{\cP}{\mathcal{P}}
\newcommand{\corr}{\mathcal{LC}}
\newcommand{\vldt}{\mathcal{LDT}}
\newcommand{\vvanish}{\mathcal{ZERO}}
\newcommand{\x}{\mathbf{x}}
\newcommand{\y}{\mathbf{y}}
\DeclareMathOperator{\LM}{LM}
\DeclareMathOperator{\poly}{poly}
\DeclareMathOperator{\Span}{span}
\DeclareMathOperator{\rank}{rank}
\newcommand{\paren}[1]{\left( #1 \right)}
\newcommand{\set}[1]{\left\{ #1 \right\}}
\newcommand{\reject}{\texttt{Reject}}
\newcommand{\gc}{k}
\DeclareMathOperator*{\E}{\mathbb{E}}
\definecolor{thmcolor}{RGB}{235, 235, 235}
\definecolor{citecolor}{RGB}{1, 210, 56}
\definecolor{lemmacolor}{RGB}{130, 169, 252}
\newtcolorbox{algobox}{colback=lightgray!5!white,colframe=lightgray!75!black}
\newtcolorbox{thmbox}{colback=thmcolor!5!white,colframe=black!75!black}
\newtcolorbox{lemmabox}{colback=lemmacolor!5!white,colframe=blue!75!blue}
	\newcommand{\anote}[1]{{\color{brown} [Amik: #1]}}
	\newcommand{\mnote}[1]{{\color{red} [Madhu: #1]}}
	\newcommand{\pnote}[1]{{\color{blue} [Prashanth: #1]}}
	\newcommand{\snote}[1]{{\color{teal} [Srikanth: #1]}}
	\newcommand{\swnote}[1]{{\color{purple} [Sophus: #1]}}
	\newcommand{\anote}[1]{}
	\newcommand{\mnote}[1]{}
	\newcommand{\pnote}[1]{}
	\newcommand{\snote}[1]{}
	\newcommand{\swnote}[1]{}
\date{March 24, 2026}
\begin{document}
\title{Ideals, Macaulay Bases, and PCPs}

\if\anon0{\author{Prashanth Amireddy\thanks{School of Engineering and Applied Sciences, Harvard University, Cambridge, Massachusetts, USA. Supported in part by a Simons Investigator Award and NSF Award CCF 2152413 to Madhu Sudan and a Simons Investigator Award to Salil Vadhan. Email: \texttt{pamireddy@g.harvard.edu}} \and
		Amik Raj Behera\thanks{Department of Computer Science, University of Copenhagen, Denmark. Supported by Srikanth Srinivasan's start-up grant from the University of Copenhagen. Email: \texttt{ambe@di.ku.dk} } \and
		Srikanth Srinivasan \thanks{Department of Computer Science, University of Copenhagen, Denmark. Supported by the European Research Council (ERC) under grant agreement no. 101125652 (ALBA). Email: \texttt{srsr@di.ku.dk} } \and
		Madhu Sudan\thanks{School of Engineering and Applied Sciences, Harvard University, Cambridge, Massachusetts, USA. Supported in part by a Simons Investigator Award, NSF Award CCF 2152413 and AFOSR award FA9550-25-1-0112. Email: \texttt{madhu@cs.harvard.edu}} \and Sophus Valentin Willumsgaard \thanks{Department of Computer Science, University of Copenhagen, Denmark. Supported by the European Research Council (ERC) under grant agreement no. 101125652 (ALBA). Email: \texttt{sophus.willumsgaard@di.ku.dk} }   }}\else{
}\fi

\maketitle

\pagenumbering{arabic}

\begin{abstract}
    All known proofs of the PCP theorem rely on multiple ``composition'' steps, where PCPs over large alphabets are turned into PCPs over much smaller alphabets at a (relatively) small price in the soundness error of the PCP. Algebraic proofs, starting with the work of Arora, Lund, Motwani, Sudan, and Szegedy use at least 2 such composition steps, whereas the ``Gap amplification'' proof of Dinur uses $\Theta(\log n)$ such composition steps. In this work, we present the first PCP construction using just one composition step. The key ingredient, missing in previous work and finally supplied in this paper,  is a basic PCP (of Proximity) of size $2^{n^\varepsilon}$, for any $\varepsilon > 0$, that makes $\mathcal{O}_\varepsilon(1)$ queries.

	At the core of our new construction is a new class of alternatives to ``sum-check'' protocols. As used in past PCPs, these provide a method by which to verify that an $m$-variate degree $d$ polynomial $P$ evaluates to zero at every point of some set $S \subseteq \mathbb{F}_q^m$. Previous works had shown how to check this condition for sets of the form $S = H^m$ using $\mathcal{O}(m)$ queries with alphabet $\mathbb{F}_q^d$ assuming $d \geq |H|$.
	Our work improves this basic protocol in two ways: First we extend it to broader classes of sets $S$ (ones closer to Hamming balls rather than cubes). Second, it reduces the number of queries from $\mathcal{O}(m)$ to an absolute constant for the settings of $S$ we consider. Specifically when $S = (\{0,1\}^{m/c}_{\leq 1})^c$, where $T = \{0,1\}^{a}_{\leq b} \subseteq \mathbb{F}_q^a$ denotes the set of Boolean vectors of Hamming weight at most $b$ in $\mathbb{F}_q^a$, we give such an alternate to the sum-check protocol with $\mathcal{O}(1)$ queries with alphabet $\mathbb{F}_q^{\mathcal{O}(c+d)}$, using proofs of size $q^{\mathcal{O}(m^2/c)}$. Our new protocols use the notion of Macaulay bases to extend previously known protocols to these new settings with surprising ease. In doing so, they highlight why these notions from algebra may be of further use in complexity theory.

\end{abstract}

\newpage

\tableofcontents

\newpage


\section{Introduction}

In this paper, we give a new general framework for constructing algebraic PCPs that leads to the first proof of the PCP theorem using only one ``composition'' step. Starting with the work of Arora and Safra~\cite{AroraS}, composition of PCPs has been a key ingredient in all previous PCP constructions. The original proof of the PCP theorem due to Arora, Lund, Motwani, Sudan, and Szegedy~\cite{ALMSS} used two composition steps, while the novel alternate proof due to Dinur~\cite{Dinur} uses $O(\log n)$ composition steps. Compositions improve various parameters of the construction at the cost of making the verifier less transparent. So it is a natural and long-sought goal to try to minimize the number of composition steps (to one, or even zero!). We achieve the weaker goal here.

The key to our construction is a new class of protocols replacing the ``sum-check'' protocol in PCP constructions.
The sum-check protocol, due to Lund, Fortnow, Karloff, and Nisan~\cite{LFKN} (also used in proofs of IP=PSPACE ~\cite{Shamir} and MIP=NEXPTIME~\cite{BFL}) has been a central ingredient in previous PCP constructions. In PCP constructions, the protocol is used to establish that an $m$-variate polynomial $P$ over $\F_q$ given as an oracle from $\F_q^m \to \F_q$ is identically zero on the set $\{0,1\}^m \subseteq \F_q^m$ or more generally, on some set of the form $H^m$ for $H \subseteq \F_q$. We refer to this latter task as ``zero-on-variety testing''. (The reason for the use of the term ``variety'' to describe the set $H^m$ will become clearer later.) Other than the sum check protocol, the only other direct protocol for zero-on-variety testing is a protocol due to Ben-Sasson and Sudan~\cite{BenSassonS}, which also only works for varieties of the form $H^m$. In this work, we establish a new connection between the theory of Macaulay bases and the zero-on-variety test of \cite{BenSassonS} that allows us to get efficient zero-on-variety tests for a much broader class of varieties, including some varieties that are close to Hamming balls of constant radii. This latter setting which had eluded previous works and is key to our PCP construction.

Armed with this new class of protocols, we show how to significantly simplify the ALMSS PCP construction. We start by giving a new PCP construction that works relative to any variety $V \subseteq \F_q^m$ with performance depending on the {\em Macaulay complexity} of the variety, which is defined based on the notion of {\em H-bases} (or {\em Macaulay bases}, and is related to the notion of {\em Gr\"obner bases}) which go all the way back to the work of Macaulay in the 1900s, see~e.g.~\cite{sauer2001grobner, macaulay, cox1997ideals, moeller2000h}. We then show how to specialize this PCP in two different ways by using two different varieties --- the first giving the usual $\bigO(\log n)$ randomness and $\poly(\log n)$ query PCP for $\mathsf{NP}$, and the second giving an $\bigO(n^\varepsilon)$ randomness and $\bigO_\varepsilon(1)$ query PCP for NP, for any $\varepsilon > 0$. No natural PCP (built without composition) was known with the latter setting of parameters, and indeed, this has been the key bottleneck in reducing the number of composition steps in ALMSS. We stress that both our ingredient PCPs are instantiations of the same protocol --- only the choice of the variety is different (and the analysis of the Macaulay complexity of the varieties is straightforward)! And furthermore, our PCPs are already ``robust assignment testers'' in the sense of Dinur and Reingold~\cite{DinurReingold} (or equivalently, Robust PCPs of Proximity in the sense of Ben-Sasson, Goldreich, Harsha, Sudan and Vadhan~\cite{BGHSV}) and thus immediately composable. Putting our two robust assignment testers together yields our final PCP. The resulting proof thus gives the following simplifications to the ALMSS protocol: It eliminates one composition step, it eliminates the need for the ``Hadamard PCP'' entirely, and it eliminates the need for the ``parallelization/robustification'' step in ALMSS~\cite[Section 7]{ALMSS}.

In what follows, we describe our work in greater detail, starting with the basic notion of interest in this paper, the zero-on-variety testing problem, and the resulting PCPs.

\subsection{Overview of Our Construction}

\paragraph{PCPs of Proximity} The central objects of interest in this paper are best described by the umbrella term ``PCP of Proximity" (or equivalently ``assignment tester''). Here a verifier $\cV$ is given oracle access to some oracle $f:D \to \Sigma$  along with a proof oracle $\pi: S \to \Gamma$ where $D,S,\Sigma$ and $\Gamma$ are finite sets. A verifier for a property $\calf \subseteq \{g:D \to \Sigma\}$ queries $(f,\pi)$ and renders a verdict \texttt{Accept}/\texttt{Reject}, with the property that if $f \in \calf$ then there exists a $\pi$ such $\cV^{f,\pi}$ always outputs \texttt{Accept}, while for $f \not\in \calf$ we have that for every proof $\pi$, $\cV^{f,\pi}$ outputs \texttt{Reject} with probability\footnote{Here $\delta_\calf(f) = \min_{g\in \calf}\{\delta(f,g)\}$ and $\delta(f,g) = \Pr_{x\in D} [f(x) \ne g(x)]$ measure the distance of $f$ from $\calf$ is the normalized Hamming metric.} $\Omega(\delta_\calf(f))$. The key parameters associated with the verifier are its randomness (usually $\bigO(\log |S|)$), its locality (or query complexity) $\ell$ which is the total number of queries to $f$ and $\pi$, and the alphabet size $a = \max\{\log |\Sigma|,\log |\Gamma|\}$.

A good example of a PCP of Proximity is the low-degree test. Here $D = \F_q^m$ and $\Sigma = \F_q$. The property $\calf_{m,d,q}$ is the set of evaluations of all $m$-variate polynomials over $\F_q$ of total degree at most $d$. When $d\leq q$ the best known low-degree tests achieve randomness of $\bigO(m\log q)$, locality $\ell = 2$, and alphabet size $d \log q$. (See \Cref{thm:low-degree-testing}.) The fact that the locality is a constant and $d$ and $m$ affect only the randomness and alphabet size is important in their use in PCPs.

\paragraph{Zero-on-variety testing} The zero-on-variety testing problem is also a PCP of Proximity problem. It is described by parameters $\F_q,d,m$ and a variety $V \subseteq \F_q^m$. Here the verifier is given oracle access to a function $f:\F_q^m\to\F_q$ that is promised to be a degree $d$ polynomial, and goal is to test for the property $\calf_{V,d}$ that is the set of all degree $d$ polynomials that are identically zero on $V$, or equivalently if the polynomial $f$ lies in the ideal of polynomials $\I(V)$ vanishing on $V$. (In this description, we opt to describe this as a promise problem - though in PCP applications, the non-promised version of this problem is the one used. The two become essentially equivalent thanks to the existence of low-degree tests.)

Prior to this work, the only natural zero-on-variety tests considered the setting where $V = H^m$ for some $H \subseteq \F_q$. The protocol given by \cite{BenSassonS} uses the following identity. $f \in \I(H^m)$ if and only if there exist polynomials $f_1,\ldots, f_m$ of degree at most $d-|H|$ such that
\begin{equation}
	f(X) = \sum_{i=1}^m f_i(X) Z_H(X_i),   \label{eqn:comb-null}
\end{equation}
where $Z_H(Y) = \prod_{\alpha\in H} (Y - \alpha)$ is the canonical univariate polynomial that vanishes on $H$.
The  identity above follows from Alon's Combinatorial Nullstellensatz~\cite{Alon:Null} and leads to a tester as follows:
The zero-on-variety tester for $H^m$ expects oracles for $f_1,\ldots,f_m$ as proof. It verifies using the low-degree test that each of these oracles has degree at most $d-|H|$ and then verifies \Cref{eqn:comb-null} for a random choice of $X = (a_1,\ldots,a_m)$. Modulo further details (involving local correction), this leads to an $\bigO(m)$ local tester with randomness $\bigO(m\log q)$ and alphabet size $\bigO(d \log q)$.

For our purposes, this choice of variety is insufficient. (Furthermore, the dependence of the locality on $m$ is also problematic, but we'll address this later.) To remedy this, we use an alternate interpretation of the identity above. In this interpretation the identity holds because $\{Z_H(X_1),\ldots,Z_H(X_m)\}$ form a ``Gr\"obner basis'' of the ideal $\I(H^m)$ under a ``graded monomial ordering''. We won't define the exact notion of a Gr\"obner basis under different monomial orderings here --- we don't need to. The notion that suffices for us is the notion of a Macaulay basis $\Grobner$ of an ideal $\I$: $\Grobner$ is a {\em Macaulay basis} for $\I$ if for all polynomials $P \in \I$ there exist polynomials $h_g, g \in \Grobner$ such that $P = \sum_{g \in \Grobner} h_g \cdot g$ and $\deg(h_g \cdot g) \leq \deg(P)$ for every $g \in \Grobner$. And the above identity is the special case where $V = H^m$ with $\{Z_H(X_1),\ldots,Z_H(X_m)\}$ as the Macaulay basis. The zero-on-variety test of \cite{BenSassonS} can now be extended to any variety that has a ``small'' Macaulay basis. Indeed, this motivates our notion of the Macaulay complexity of a variety $V$ (see \Cref{defn:macaulay-complexity} for a formal definition) --- which is the size of the smallest Macaulay basis $\Grobner$ of $\I(V)$. One crucial example for us is the following: The set of polynomials $\{X_iX_j | 1 \leq i < j \leq m\} \cup \{X_i^2 - X_i | i\in [m]\}$ form a Macaulay basis for the variety $\{0,1\}^m_{\leq 1}$ consisting of Boolean points of Hamming weight at most $1$ in $\F_q^m$. (Thus the variety $\{0,1\}^m_{\leq 1}$ has Macaulay complexity $\bigO(m^2)$.) We also use the following basic property of Macaulay bases: If $\Grobner_X \subseteq \F_q[X]$ is a Macaulay basis for variety $V_1$ and $\Grobner_Y \subseteq \F_q[Y]$ is one for $V_2$ then $\Grobner_X \cup \Grobner_Y \subseteq \F_q[X,Y]$ is a Macaulay basis for $V_1 \times V_2$. In particular this establishes that the Macaulay complexity of $(\{0,1\}^m_{\leq 1})^c \subseteq \F_q^{mc}$ is $\bigO(cm^2)$.
Applying the \cite{BenSassonS} protocol to Macaulay bases now gives us a $\bigO(k)$ query protocol for testing zero-on-$V$ for variety $V$ of Macaulay complexity $k$.

While this now gives many new varieties that have natural zero-on-variety tests, the locality of $\bigO(k)$ can be prohibitive. Our second contribution here is to give a new protocol to test this, that shifts the Macaulay complexity into the randomness of the protocol and achieves (a universal) constant locality. Specifically our verifier now expects an oracle for $\mathcal{M}_V(P)(X,Y) = \sum_{g \in \Grobner} Y_g h_g(X)$ where $Y = (Y_g | g\in \Grobner)$ is a new set of $k$ variables. Specifically performing a low-degree test on $\mathcal{M}_V(P)$ along with a test that verifies $\mathcal{M}_V(P)(X,0) \equiv 0$ ensures that $\mathcal{M}_V(P)$ is effectively giving access to all linear combinations of $h_g(X)$ with {\em constant} query complexity. Testing the identity $P(X) = \sum_{g\in \Grobner} h_g(X) g(X)$ at a random choice of $X$ now requires only one query to $P$ and one to $\mathcal{M}_V(P)$! Modulo some standard use of self-correction, this gives us an $\bigO(1)$ locality protocol for zero-on-$V$ testing with alphabet size $\bigO(d \log q)$ and randomness $\bigO((k+m)\log q)$ for any variety $V$ of Macaulay complexity $k$. (See \Cref{sec:vanishing-certificate}.)

\paragraph{PCPs from Zero-on-variety tests}

It is straightforward to build PCPs for NP-hard problems from low-degree tests and zero-on-variety tests. (Recall that a PCP verifier for say graph coloring is given as input a graph $G$ and oracle access to a purported proof $\pi$ with the feature that if $G$ is 3-colorable that there exists a $\pi$ such that $V$ always accepts whereas if $G$ is not 3-colorable then $V$ rejects every proof $\pi$ w.p. at least $1/2$. The parameters of interest to us are the same --- the randomness, the locality, and the alphabet size of the proof.)

For example, the 3-coloring protocol, based on a similar proof from \cite{BenSassonS}, goes as follows: Fix an odd prime power $q$.\footnote{We do this for simplicity here and allow us to assume $\{-1,0,1\}\subseteq \F_q$ can be used to represent $3$ distinct colors. The protocol easily extends to other fields using some other set of $3$ distinct elements of $\F_q$.} For a variety $V \subseteq \F_q^m$ let  its ``extension degree'' be the least integer $d$ such that every function $f:V \to \F_q$ can be extended to a degree $d$ polynomial in $\F_q[X_1,\ldots,X_m]$. Now, given a variety $V \subseteq \F_q^m$ of Macaulay complexity $k$ and extension degree $d$ we consider the $3$-coloring problem on the vertex set $V$ (same $V$). Note that the graph is given by an edge function $E:V \times V \to \{0,1\}$ which can be shown to be extendable to a degree $2d$ polynomial $\hat{E}$ from $\F_q^{2m}\to \F_q$. A proof that $E$ is $3$-colorable includes polynomials $\chi:\F_q^m\to\F_q$, $A:\F_q^m\to\F_q$ and $B:\F_q^{2m}\to\F_q$ satisfying (1) $A(X) = \chi(X)\cdot(\chi(X)-1)\cdot(\chi(X)+1)$, (2) $A$ is zero-on-$V$, (3) $B(X,Y) = \hat{E}(X,Y) \prod_{i \in \{-2,-1,1,2\}} (\chi(X)-\chi(Y)-i)$ and (4) $B$ is zero on $V \times V$. (Items (1) and (2) verify that $\chi$ is a $3$-coloring of $V$ with color set $\{-1,0,1\}$,  while items (3) and (4) verify that $\chi$ is a valid coloring of the edges of $E$.) The $V$-verifier performs low-degree tests on all the four oracles and then tests identities (1) and (3) by picking a random value of the variables, and finally verifies items (2) and (4) using zero-on-$V$ and zero-on-$V^2$ tests. By the aforementioned properties of Macaulay complexity and standard facts about extension degree we get that this PCP verifier achieves $\bigO(1)$ locality with randomness $\bigO((k+m)\log q)$ and alphabet $\bigO(d \log q)$ (matching those of the zero-on-$V$ tests up to constant factors).

Instantiating the verifier above with $V = H^m$ where $|V| =n$, $|H|=\log n$, $m = \bigO(\log n/\log \log n)$ and $q = \bigO(\log^6 n)$ gives an $\bigO(1)$ locality PCP verifier for 3-coloring of $n$ vertex graphs with randomness $\bigO(\log n)$ and alphabet size $\bigO(\poly\log n)$. But a different instantiation {\em of the same PCP} with $m = n^\varepsilon$, $c = \frac1\varepsilon$, $q = \bigO_\varepsilon(1)$ and $V=(\{0,1\}^m_{\leq 1})^c$ gives an $\bigO(1)$ locality PCP verifier with randomness $\bigO(n^{2\epsilon})$ and alphabet size $O_\epsilon(1)$! We note that even using $c=1$ gives a completely new protocol matching the parameters of the Hadamard PCP in \cite[Section 5]{ALMSS}. And using larger values of $c$ gives us our new protocols. (See \Cref{sec:main-thm} for details.)

Furthermore, these PCPs are easily converted to ``Robust PCPs of Proximity'' (or ``Robust assignment testers'') in the sense of \cite{DinurReingold,BGHSV} of constant robustness --- since our PCPs have constant locality. This allows us to compose the PCPs above in a single composition step to get an $\bigO(1)$ locality PCP verifier with $\bigO(\log n)$ randomness and $\bigO(1)$ alphabet size --- and thus the PCP theorem. (See \Cref{sec:composition} for details.)

\paragraph{Related constructions.} Here we will discuss two PCP constructions that are relevant to us; both of them attain the parameters of our inner verifiers (i.e., $\bigO(1)$ locality and $\bigO(n^\varepsilon)$ randomness, and $\bigO_\varepsilon(1)$ alphabet size) using constructions that don't make use of a ``full composition'' but do perform certain transformations/encodings of an initial (sub-optimal) proof to achieve the final parameters.

First, we mention the work of Goldreich~\cite{goldreich2025proving} posted shortly after our paper, which gives a different construction of a PCP matching our inner verifier parameters. This construction is based on using a Reed-Muller PCP at an outer level and encoding the larger field elements using a Hadamard code. The tests of the verifier then resemble that of the usual Hadamard PCP of~\cite{ALMSS}. Composing this inner verifier (rather the PCP of Proximity version of it) with the the (robust version of the) outer verifier of~\cite{ALMSS} also results in the PCP Theorem (in 1.5 proof compositions as Goldreich refers to it).

Second, we mention the work of~\cite{khot2006ruling} studying {\em quasi-random PCPs}, which have a very different behavior from the usual PCPs. While~\cite{khot2006ruling} obtains a constant query quasi-random PCP of size similar to our inner verifier (i.e., $2^{n^\varepsilon}$), the construction (among many other differences) involves using the Hadamard encodings of the elements of a large field. We also note that both the works of~\cite{khot2006ruling} as well as~\cite{goldreich2025proving} also deviate from our techniques in that we do not perform a sum-check protocol in order to do the zero-on-variety tests.

\section{Formal Statement of Our Results}

We first introduce basic definitions needed to state our main result. Throughout this document, $\mathcal{V}^{\Pi}$ means that the algorithm/circuit/verifier has oracle access to the string $\Pi$, i.e., $\mathcal{V}$ can query $\Pi[i]$ for any $1 \leq i \leq |\Pi|$. We use the notation $\mathcal{V}^{\Pi}(x;R)$ to say that the algorithm $\mathcal{V}$ has oracle access to $\Pi$, has input $x$, and access to a random string $R$. In this notation, $\mathcal{V}^{\Pi}(x;R)$ is a deterministic algorithm and the randomness is in the choice of $R$.

\subsection{PCPs}

\begin{definition}[Standard Verifier]\label{defn:standard-verifier}
	For functions $r,\ell,a: \mathbb{Z}^{\geq 0} \to \mathbb{Z}^{\geq 0}$, define a $(r,\ell,a)$-standard verifier $\mathcal{V}$ as follows:\newline
	Let $\Sigma = \Boo^{a(n)}$. On input $x \in \Boo^{n}$ of length $n$, a string\footnote{This string $R$ is the random string fed into $\mathcal{V}$.} $R \in \Boo^{r(n)}$, and oracle access to a string $\Pi \in \Sigma^{\mathrm{size}(n)}$ (i.e. $\Pi$ is a string of length $\mathrm{size}(n)$ on alphabet $\Boo^{a(n)}$), we have:\\
	\begin{itemize}
		\item $\mathcal{V}^{\Pi}(x;R)$ outputs a subset $Q \subseteq [\mathrm{size}(n)]$ of cardinality $\ell(n)$.
		\item $\mathcal{V}^{\Pi}(x;R)$ outputs a Boolean circuit $\mathcal{C}$ (depends on $x$ and $R$) depending on $\ell(n) \cdot a(n)$ bits. The circuit $C$ gets access to entries of $\Pi$ as bits of length $a(n)$.
		\item $\mathcal{V}^{\Pi}(x;R)$ returns \texttt{Accept} if $C(\Pi|_{Q}) = 1$ and returns \texttt{Reject} if $C(\Pi|_{Q}) = 0$.
	\end{itemize}

	The maximum circuit size $|\mathcal{C}|$ over every possible choice of $(x,R)$ will be referred to as the size of the standard verifier $\mathcal{V}$. The running time of the standard verifier $\mathcal{V}$ will be $\poly(n\cdot 2^{r(n)})$.
\end{definition}

\noindent
Observe that in the above definition, $\mathcal{V}$ makes $\ell(n)$ queries to $\Pi$ using the $r(n)$ coin tosses. In particular, $\mathcal{V}$ can only query a coordinate within range of $[0, \ell(n) \cdot 2^{r(n)}-1]$. So from now on, we will always assume that the proof size $|\Pi|$ is $\bigO(\ell(n) \cdot 2^{r(n)})$.\\

\noindent
\begin{definition}[The class PCP]\label{defn:the-class-pcp}
	For functions $r,\ell,a \in \mathbb{Z}^{\geq 0} \to \mathbb{Z}^{\geq 0}$, for $c,s \in (0,1)$, define $\mathsf{PCP}_{c,s}[r,\ell,a]$ to be the class of languages $L$ that have a standard $(r,\ell,a)$ verifier with completeness $\geq c$ and soundness $\leq s$, i.e.
	\begin{itemize}
		\item \textbf{Completeness}: For every $x \in L$, there exists a proof $\Pi$ such that $\Pr_R[\mathcal{V}^\Pi(x;R) = \texttt{Accept}] \geq c$.
		\item \textbf{Soundness}: For every $x \notin L$, for every $\Pi$, we have $\Pr_R[\mathcal{V}^\Pi(x;R) = \texttt{Accept}] \leq s$.
	\end{itemize}

\end{definition}

In this paper, we will usually focus on the language $3$-$\mathsf{COLOR}$ of $3$-colorable graphs.\\

\noindent
Our main theorem (proved in \Cref{sec:main-thm}) shows the following:\\

\begin{thmbox}
	\begin{restatable}[Main Theorem]{theorem}{mainthm}\label{thm:main-thm}
		There exist constants $c$, $\ell$ such that the following holds for every $q,m,d,k$ such that $q \geq c d^3$:\newline

		Let $\F_{q}$ be a field of characteristic $\neq 2$ and let $V \subseteq \F_q^m$ have extension degree $d$ and Macaulay complexity $\gc$.
		Then $3$-$\mathsf{COLOR}$ on vertex set $V$ is in
		$\mathsf{PCP}_{1,1/2}[c (\gc+m) \log q, \; \ell, \; c d \log q]$
		with proofs of size $q^{c(\gc+m)}$.\\
	\end{restatable}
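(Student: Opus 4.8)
The plan is to instantiate the algebraic $3$-coloring PCP sketched in the introduction (following \cite{BenSassonS}), using the constant-query low-degree test of \Cref{thm:low-degree-testing} and the constant-query zero-on-variety test of \Cref{sec:vanishing-certificate} as black boxes, and then to push the soundness down to $1/2$ by a constant number of independent repetitions. First I would fix the proof format. On a $3$-$\mathsf{COLOR}$ instance with vertex set $V\subseteq\F_q^m$, the verifier (which knows $V$ together with a Gr\"obner generating set $G$ of $\I(V)$ with $|G|=\gc$) computes the degree-$2d$ extension $\hat E\colon\F_q^{2m}\to\F_q$ of the edge indicator $E\colon V\times V\to\{0,1\}$ --- possible, and within the running time $\poly(n\cdot 2^{r})$, since $V\times V$ has extension degree $\le 2d$ and Gr\"obner complexity $\le 2\gc$. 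The proof then consists of five folded oracles: a purported degree-$d$ polynomial $\chi\colon\F_q^m\to\F_q$ (the coloring, using colors $\{-1,0,1\}\subseteq\F_q$, distinct since $\mathrm{char}(\F_q)\ne 2$), a degree-$3d$ polynomial $A\colon\F_q^m\to\F_q$, a degree-$6d$ polynomial $B\colon\F_q^{2m}\to\F_q$, and the two vanishing certificates $\mathcal{M}_V(A)$ on $\F_q^{m+\gc}$ and $\mathcal{M}_{V^2}(B)$ on $\F_q^{2m+2\gc}$, all folded into the line format of the low-degree and zero-on-variety tests so that one symbol of $\{0,1\}^{\bigO(d\log q)}$ encodes a univariate restriction of degree $\bigO(d)$; this is the source of the alphabet bound $\bigO(d\log q)$, since every polynomial in play has degree $\bigO(d)$.

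The verifier then (i) low-degree tests each of the five oracles; (ii) picks $a\in\F_q^m$ uniformly and checks $A(a)=\chi(a)(\chi(a)-1)(\chi(a)+1)$, reading $A$ and $\chi$ at $a$ through the self-correction procedure of \Cref{thm:low-degree-testing}; (iii) picks $(a,b)\in\F_q^{2m}$ uniformly and checks $B(a,b)=\hat E(a,b)\prod_{i\in\{-2,-1,1,2\}}(\chi(a)-\chi(b)-i)$, again via self-correction; and (iv) runs the zero-on-$V$ test on $A$ with certificate $\mathcal{M}_V(A)$ and the zero-on-$V^2$ test on $B$ with certificate $\mathcal{M}_{V^2}(B)$. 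For completeness, given a proper $3$-coloring $c\colon V\to\{-1,0,1\}$, I would take $\chi$ to be its degree-$d$ extension, $A=\chi(\chi-1)(\chi+1)$, $B=\hat E\cdot\prod_i(\chi(X)-\chi(Y)-i)$, and the honest certificates; then $A$ vanishes on $V$, $B$ vanishes on $V^2$ (as $\hat E$ vanishes off the edges and the product vanishes on edges of a proper coloring), and all checks accept with probability $1$. Since each sub-test has randomness $\bigO((\gc+m)\log q)$ and $\bigO(1)$ queries, the base verifier has randomness $\bigO((\gc+m)\log q)$, locality $\bigO(1)$, alphabet $\bigO(d\log q)$, and hence proof size $\bigO(\ell\cdot 2^{r})=q^{\bigO(\gc+m)}$.

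The bulk of the work is soundness. Assuming the graph on $V$ is not $3$-colorable, fix any proof and a small constant $\delta_1$. If some oracle is more than $\delta_1$-far from a polynomial of its prescribed degree, its low-degree test rejects with constant probability; otherwise each oracle is $\delta_1$-close to a \emph{unique} correction $\tilde\chi,\tilde A,\tilde B,\dots$ (unique since all degrees are $\bigO(d)<q/2$), and by \Cref{thm:low-degree-testing} self-correction returns the correction's value at a uniformly random point except with probability $\bigO(\delta_1)$. I would then argue in steps. If $\tilde A\ne\tilde\chi(\tilde\chi-1)(\tilde\chi+1)$, these degree-$\bigO(d)$ polynomials differ on a $1-\bigO(d)/q$ fraction of points (Schwartz--Zippel, using $q\ge cd^3$), so check (ii) rejects with constant probability; hence assume $\tilde A=\tilde\chi(\tilde\chi-1)(\tilde\chi+1)$, and similarly $\tilde B=\hat E\cdot\prod_i(\tilde\chi(X)-\tilde\chi(Y)-i)$ from check (iii). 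If $\tilde A\notin\I(V)$, the zero-on-$V$ test (its soundness from \Cref{sec:vanishing-certificate}, composed with self-correction) rejects with constant probability, and similarly for $\tilde B$ and $V^2$; so assume $\tilde A$ vanishes on $V$ and $\tilde B$ on $V^2$. Then $\tilde\chi(a)\in\{-1,0,1\}$ for all $a\in V$, and for every edge $(a,b)$ (where $\hat E(a,b)=1$) we get $\prod_{i\in\{-2,-1,1,2\}}(\tilde\chi(a)-\tilde\chi(b)-i)=0$, forcing $\tilde\chi(a)\ne\tilde\chi(b)$; so $a\mapsto\tilde\chi(a)$ is a proper $3$-coloring, a contradiction. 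Thus the base verifier rejects with probability at least a universal constant $\delta_0>0$, and running it $t=\bigO(1/\delta_0)=\bigO(1)$ times with independent randomness keeps perfect completeness, brings the soundness to $(1-\delta_0)^t\le 1/2$, and changes randomness, locality, alphabet, and proof size only up to constants --- giving the claimed parameters with suitable absolute constants $c,\ell$.

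The main obstacle I anticipate is making the soundness chain airtight: the constant-probability rejection bounds of the (constantly many) sub-tests must be combined without circularity into a single statement that forces the \emph{corrected} polynomials $\tilde\chi,\tilde A,\tilde B$ --- not the raw oracles --- to encode a proper $3$-coloring, which requires that self-correction interact cleanly with both the algebraic identity checks and the zero-on-variety tests and that all of these refer to the \emph{same} low-degree corrections; and one must confirm that every degree arising ($d,3d,6d$, and the certificate degrees $\bigO(d)$) is small enough that the single hypothesis $q\ge cd^3$ suffices uniformly for the low-degree test and for every Schwartz--Zippel step. The remaining items --- folding oracles into the line format, bounding the Gr\"obner complexity and extension degree of $V^2$, and tracking the $\bigO(1)$ losses in locality and from repetition --- should be routine.
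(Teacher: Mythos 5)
Your proposal is correct and takes essentially the same route as the paper's proof: the same oracles ($\widehat{\chi}$, $A$, $B$ with their lines tables and the two vanishing certificates), the same four checks (low-degree tests, the two identity checks at random points, and the zero-on-$V$ / zero-on-$V\times V$ tests), and the same soundness chain combining Schwartz--Zippel with the zero-on-variety soundness, followed by $\bigO(1)$ repetitions to reach soundness $1/2$. The only cosmetic difference is that you read $\chi,A,B$ through self-correction in the identity checks, whereas the paper queries the raw oracles there and transfers the identities to the low-degree corrections by a triangle-inequality/union-bound argument; both handle the issue you flag about making all sub-tests refer to the same corrected polynomials.
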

\end{thmbox}

\noindent
\begin{remark}
	In \Cref{thm:main-thm}, the assumption on characteristic $\neq 2$ is mostly for clarity in the proofs. We assign the vertices colors from the set $\set{-1,0,1}$, and these are three distinct colors only if the field is of characteristic $\neq 2$. For fields of characteristic $2$ and with at least $3$ elements, one could use colors $\set{a,b,c}$, where $a,b,$ and $c$ are three distinct elements from the field. The proof is essentially the same.\\
\end{remark}
With this setup, we first show the following two results.\\

\begin{lemma}
	\label{lem:polylogn-pcp}
	For every $n$, 3-$\mathsf{COLOR}$ on $n$-vertex graphs is in \break $\mathsf{PCP}_{1,1/2}\left[\bigO(\log n), \; \bigO(1), \; \bigO(\log^2n /(\log \log n))\right]$ with proofs of size $n^{\bigO(1)}$.
\end{lemma}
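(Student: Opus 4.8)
The plan is to instantiate \Cref{thm:main-thm} with a ``cube'' variety $V = H^m$ and then reduce general $3$-$\mathsf{COLOR}$ to $3$-$\mathsf{COLOR}$ on vertex set $V$ by a trivial padding. I will assume $n$ is larger than an absolute constant (small $n$ is handled by brute force). First I would set the parameters: let $m$ be the least positive integer with $m^m \ge n$; an elementary estimate using minimality gives $m\log m = \Theta(\log n)$, hence $m = \Theta(\log n/\log\log n)$, and also $n \le m^m = \bigO(mn) = \poly(n)$. Then I would put $d = m(m-1)$, pick a prime $q$ with $c\,d^3 \le q \le 3c\,d^3$ (which exists by Bertrand's postulate and, for $n$ large, is odd, so that $\F_q$ has characteristic $\neq 2$), and take $H \subseteq \F_q$ to be any subset of size $m$ (possible since $q \gg m$). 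Set $V = H^m \subseteq \F_q^m$, so that $|V| = m^m \in [\,n,\ \poly(n)\,]$.

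Next I would record the two complexity parameters of $V$ required by \Cref{thm:main-thm}. Every $f\colon H^m \to \F_q$ agrees with a polynomial of degree at most $|H|-1$ in each variable, hence of total degree at most $m(|H|-1) = d$, so $V$ has extension degree at most $d$; and, as noted in the introduction, $\{Z_H(X_i) : i\in[m]\}$ is a Gr\"obner generating set of $\I(H^m)$, so $V$ has Gr\"obner complexity $\gc \le m$. Now apply \Cref{thm:main-thm} to $V$ with these $q,m,d,\gc$ (using the stated upper bounds on extension degree and Gr\"obner complexity): since $\log q = \Theta(\log d) = \Theta(\log m) = \Theta(\log\log n)$, this yields a standard verifier for $3$-$\mathsf{COLOR}$ on vertex set $V$ with completeness $1$, soundness $1/2$, locality $\ell = \bigO(1)$, randomness $c(\gc+m)\log q = \bigO(m\log m) = \bigO(\log n)$, alphabet size $c\,d\log q = \bigO(m^2\log m) = \bigO(\log^2 n/\log\log n)$, and proof length $q^{c(\gc + m)} = 2^{\bigO(m\log m)} = n^{\bigO(1)}$.

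Finally I would do the reduction: given an $n$-vertex graph $G$, pad it with $|V| - n$ isolated vertices to get a graph $G'$ on $|V|$ vertices (which is $3$-colorable iff $G$ is) and identify its vertex set with $V$ by an arbitrary bijection; the verifier for $n$-vertex $3$-$\mathsf{COLOR}$ computes $m,q,H$ from $n$ in $\poly(n)$ time, forms $G'$, and simulates the verifier above on $G'$. Since $G'$ has description length $\poly(|V|) = \poly(n)$, writing the bounds in terms of $n$ rather than the input length only changes constants, completing the proof. I do not expect a genuine obstacle here --- all the real work is done inside \Cref{thm:main-thm} --- so the only things to be careful about are the elementary parameter estimates for the cube variety $H^m$ with $|H| = m$, and the slightly delicate bookkeeping that the choice $m = \Theta(\log n/\log\log n)$, together with $\log q = \Theta(\log\log n)$ forced by $q \approx d^3$, is exactly what makes the randomness come out to $\bigO(\log n)$ rather than $\bigO(\log n \log\log n)$.
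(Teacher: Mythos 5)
Your proposal is correct and follows essentially the same route as the paper: instantiate \Cref{thm:main-thm} with the cube variety $V=H^m$, whose extension degree $(|H|-1)m$ and Gr\"obner complexity $m$ are exactly the content of \Cref{cor:Hhamm-wt-c}, with $m=\Theta(\log n/\log\log n)$ and $\log q=\Theta(\log\log n)$ so that the randomness is $\bigO(\log n)$. The only differences are cosmetic parameter choices ($|H|=m$ and a prime $q=\Theta(d^3)$ rather than the paper's $|H|=\log n$ and $q$ a power of $3$ of size $\Theta(\log^6 n)$) plus your explicit padding/brute-force bookkeeping, which the paper leaves implicit; if anything, your choice of $|H|$ matches the stated alphabet bound $\bigO(\log^2 n/\log\log n)$ a bit more tightly.
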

\begin{proof}
	In \Cref{cor:Hhamm-wt-c} we show that if $V = H^m$ for some $H \subseteq \F_q$ then it has extension degree $(|H|-1)\cdot m$ and Macaulay complexity $m$. Taking $q \geq c \log^6 n$ a power of $3$, $V=H^m$ for some subset $H\subseteq \F_q$ of size $\log n$ and $m=\log n/\log \log n$, we get the desired bounds.
\end{proof}

\noindent
\begin{lemma}
	\label{lem:n2-pcp}
	For every $n$, 3-$\mathsf{COLOR}$ on $n$-vertex graphs is in \break $\mathsf{PCP}_{1,1/2}\left[\bigO(n^2), \; \bigO(1), \; \bigO(1)\right]$ with proofs of size $2^{\bigO(n^2)}$.
\end{lemma}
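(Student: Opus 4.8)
The plan is to obtain this as a one-line corollary of \Cref{thm:main-thm}, exactly the way \Cref{lem:polylogn-pcp} is, but instantiating the variety with a radius-one Hamming ball instead of a combinatorial cube (this is the $c=1$ case of the family $(\{0,1\}^m_{\leq 1})^c$ mentioned in the introduction). First I would fix parameters: let $c$ be the constant from \Cref{thm:main-thm}, let $\F_q$ be a field of characteristic $\neq 2$ with $q$ a \emph{fixed} constant (any odd prime power $\geq c$ will do), set $m = n-1$, and take $V = \{0,1\}^m_{\leq 1} \subseteq \F_q^m$, the set of Boolean vectors of Hamming weight at most one. This $V$ has exactly $m+1 = n$ points --- the zero vector $\mathbf 0$ and the standard basis vectors $e_1,\dots,e_m$ --- so every $n$-vertex graph can be viewed as a graph on the vertex set $V$.

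Next I would read off the two complexity measures of $V$ that \Cref{thm:main-thm} needs. The extension degree is $d = 1$: an affine polynomial $a_0 + \sum_i a_i X_i$ evaluates to $a_0$ at $\mathbf 0$ and to $a_0 + a_i$ at $e_i$, so setting $a_0 = f(\mathbf 0)$ and $a_i = f(e_i) - f(\mathbf 0)$ interpolates an arbitrary $f\colon V\to\F_q$ (and $d=0$ is impossible since not every such $f$ is constant). With $d=1$ the hypothesis $q \geq cd^3$ of \Cref{thm:main-thm} is met by our constant $q$. The Gr\"obner complexity is $\gc = \bigO(m^2) = \bigO(n^2)$: as noted in the introduction, $\{X_iX_j : 1\le i<j\le m\}\cup\{X_i^2 - X_i : i\in[m]\}$ is a Gr\"obner generating set of $\I(V)$, and it has $\binom m2 + m$ elements.

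Plugging $q = \bigO(1)$, $d = 1$, $m = \bigO(n)$, $\gc = \bigO(n^2)$ into \Cref{thm:main-thm} gives that $3$-$\mathsf{COLOR}$ on vertex set $V$ is in $\mathsf{PCP}_{1,1/2}[\,c(\gc+m)\log q,\ \ell,\ cd\log q\,] = \mathsf{PCP}_{1,1/2}[\bigO(n^2),\ \bigO(1),\ \bigO(1)]$ with proofs of size $q^{c(\gc+m)} = 2^{\bigO(n^2)}$. To finish, I would note that an arbitrary $n$-vertex graph $G$ reduces to a graph on the vertex set $V$ simply by identifying $[n]$ with $V$ via a bijection; this relabeling is computable well within the verifier's running-time budget $\poly(n\cdot 2^{r(n)})$ and preserves $3$-colorability, so the verifier supplied by \Cref{thm:main-thm} for the relabeled instance is the desired $(\bigO(n^2),\bigO(1),\bigO(1))$-verifier for $3$-$\mathsf{COLOR}$ on $n$-vertex graphs.

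There is no real obstacle here --- the statement is a parameter instantiation of the Main Theorem, and all the genuine work sits in that theorem and in the Gr\"obner-basis facts it uses. The only two points that need to be checked carefully are already isolated above: (i) that $q \geq cd^3$ can be satisfied with $q$ an absolute constant, which works precisely because a radius-one Hamming ball has constant extension degree (unlike $H^m$, whose extension degree grows with $m$), and (ii) that $\{X_iX_j\}\cup\{X_i^2-X_i\}$ genuinely forms a Gr\"obner generating set, i.e.\ that every $P\in\I(V)$ can be written $P = \sum_{i<j} h_{ij}X_iX_j + \sum_i h_i(X_i^2-X_i)$ with every summand of degree at most $\deg P$; it is the $\binom m2$ degree-two generators here that are responsible for the $\bigO(n^2)$ in the randomness and the $q^{\bigO(n^2)}$ in the proof length.
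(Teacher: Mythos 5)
Your proposal is correct and follows essentially the same route as the paper: instantiate \Cref{thm:main-thm} with $V = \{0,1\}^m_{\leq 1}$ for $m = \Theta(n)$ and $q$ a constant, using extension degree $1$ and Gr\"obner complexity $\bigO(m^2)$ (the paper cites \Cref{example:hammingweightone} for these facts, which you re-derive directly). The only difference is that you spell out the interpolation argument and the relabeling of $[n]$ onto $V$, which the paper leaves implicit.
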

\begin{proof}
	In \Cref{example:hammingweightone} we show that if $V = \{0,1\}^m_{\leq 1} := \left\{(a_1,\ldots,a_m) \in \{0,1\}^m | \sum_{i=1}^m a_i \leq 1\right\}$ is the set of Boolean points in $\F_q^m$ of Hamming weight at most $1$, then $V$ has extension degree $1$ and Macaulay complexity $\bigO(m^2)$. Picking $q$ to be a large constant, and $V=\{0,1\}^m_{\leq 1}$ we get the desired bounds.
\end{proof}
Note that the above roughly matches the parameters of the Hadamard PCP of \cite{ALMSS} with a completely different proof!\\

\begin{lemma}
	\label{lem:neps-pcp}

	For every $\varepsilon > 0$ and every $n$, 3-$\mathsf{COLOR}$ on $n$-vertex graphs is in \break $\mathsf{PCP}_{1,1/2}\left[\bigO(n^\varepsilon), \; \bigO(1), \; \bigO\left(\frac{1}{\varepsilon}\log \frac{1}{\varepsilon}\right)\right]$ with proofs of size $2^{\bigO(n^\varepsilon)}$.

\end{lemma}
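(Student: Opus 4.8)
The plan is to obtain this lemma by instantiating the Main Theorem (\Cref{thm:main-thm}) on an \emph{iterated Hamming-ball} variety, exactly as outlined in the introduction. Assume $\varepsilon<1$ (the statement is only of interest for small $\varepsilon$). Let $c_0$ and $\ell$ be the absolute constants furnished by \Cref{thm:main-thm}, let $c$ denote the number of ``blocks'' and $m_0$ the dimension of each block (both fixed below), put the ambient dimension $m:=c\,m_0$, and take $q$ to be a power of $3$ with $q=\Theta(c^3)$ large enough that $q\ge c_0 c^3$; in particular $\operatorname{char}\F_q\neq 2$, as \Cref{thm:main-thm} requires. Consider
\[
V \;:=\; \bigl(\{0,1\}^{m_0}_{\le 1}\bigr)^{c}\;\subseteq\;\F_q^{m}.
\]
By \Cref{example:hammingweightone} (this estimate also appears in the proof of \Cref{lem:n2-pcp}) the single block $\{0,1\}^{m_0}_{\le 1}$ has extension degree $1$ and Gr\"obner complexity $\bigO(m_0^2)$. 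Combining this with the product rule for Gr\"obner generating sets recalled in the introduction (a union of generating sets for $V_1$ and $V_2$ is a generating set for $V_1\times V_2$) and the standard fact that extension degree is additive under products (as used implicitly in \Cref{lem:polylogn-pcp}), I get that $V$ has extension degree $d:=c$ and Gr\"obner complexity $\gc=\bigO(c\,m_0^2)$.

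Next I would fix the parameters and apply \Cref{thm:main-thm}. Take $c:=\lceil 2/\varepsilon\rceil=\Theta(1/\varepsilon)$ and $m_0:=\lceil n^{1/c}\rceil$, so that $|V|=(m_0+1)^c\ge m_0^c\ge n$ while $m_0=\bigO(n^{1/c})=\bigO(n^{\varepsilon/2})$ and $|V|=\bigO_\varepsilon(n)$. An arbitrary $n$-vertex graph may be viewed as a graph on the vertex set $V$ by adjoining $|V|-n$ isolated vertices; this is a polynomial-time reduction (legitimate since $|V|=\bigO_\varepsilon(n)$) that preserves $3$-colorability, so it suffices to analyze \Cref{thm:main-thm}'s verifier for $3$-$\mathsf{COLOR}$ on $V$. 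Since $q\ge c_0 d^3$, \Cref{thm:main-thm} applies and places $3$-$\mathsf{COLOR}$ on $n$-vertex graphs in $\mathsf{PCP}_{1,1/2}\bigl[c_0(\gc+m)\log q,\ \ell,\ c_0\,d\log q\bigr]$ with proofs of size $q^{c_0(\gc+m)}$.

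It remains to read off the parameters. Since $m_0\ge 1$ we have $\gc+m=\bigO(c\,m_0^2)$, and $\log q=\Theta(\log c)$, and $d=c$; with $c=\Theta(1/\varepsilon)$ and $m_0^2=\bigO(n^{\varepsilon})$ this yields randomness $\bigO(c\,m_0^2\log c)=\bigO\!\bigl(\tfrac1\varepsilon(\log\tfrac1\varepsilon)\,n^{\varepsilon}\bigr)=\bigO_\varepsilon(n^{\varepsilon})$, locality $\ell=\bigO(1)$, alphabet size $\bigO(c\log c)=\bigO\!\bigl(\tfrac1\varepsilon\log\tfrac1\varepsilon\bigr)$, and proof size $q^{\bigO(c\,m_0^2)}=2^{\bigO(c\,m_0^2\log c)}=2^{\bigO_\varepsilon(n^{\varepsilon})}$ --- precisely the claimed bounds. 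I do not expect a genuine obstacle: the substantive content lives in \Cref{thm:main-thm} and in the Gr\"obner-complexity estimate for the Hamming ball, and this lemma is just a choice of parameters on top of them. The one point that needs care is exactly this last computation --- the Gr\"obner complexity of a single block is \emph{quadratic} in $m_0$, so the quantity driving the randomness is $c\,m_0^2\approx\tfrac1\varepsilon\,n^{2/c}$, and it is precisely the choice $c\ge 2/\varepsilon$ that forces the exponent $2/c$ down to at most $\varepsilon$; one should also check the small-$n$ edge cases and the polynomiality of the padding reduction.
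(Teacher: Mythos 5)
Your proposal is correct and is essentially the paper's own proof: the paper likewise instantiates \Cref{thm:main-thm} with $V=\bigl(\{0,1\}^{m}_{\le 1}\bigr)^{c}$, using \Cref{cor:hamm-wt-c} (i.e.\ \Cref{example:hammingweightone} plus \Cref{lemma:subaddgrob}) to get extension degree $c$ and Gr\"obner complexity $\bigO(cm^2)$, with $c=\bigO(1/\varepsilon)$, $q=\poly(1/\varepsilon)$, and $m=n^{\bigO(\varepsilon)}$. Your version merely spells out the details the paper leaves implicit (padding with isolated vertices, and choosing $c\ge 2/\varepsilon$ so that $c\,m_0^2\log q=\bigO_\varepsilon(n^\varepsilon)$), which is fine.
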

\begin{proof}
	In \Cref{cor:hamm-wt-c} we show that if $V = \left(\{0,1\}^m_{\leq 1}\right)^c$, then $V$ has extension degree $c$ and Macaulay complexity $\bigO(cm^2)$. Given $\varepsilon > 0$ picking $c=\bigO(\frac1\varepsilon)$, $q = \poly(1/\varepsilon)$, $m=n^{\bigO(\varepsilon)}$, and $V=\left(\{0,1\}^m_{\leq 1}\right)^c$ we get the desired bounds.
\end{proof}

The above concludes the description of the atomic PCPs we construct. In \Cref{sec:composition} we show that these PCPs can be strengthened to ``Robust assignment testers'' (see \Cref{def:testers}), and so can be composed together (see \Cref{lem:DRcomposition}) to get the PCP theorem stated below (proved in \Cref{sec:composition}).\\

\begin{theorem}[PCP Theorem]\label{thm:pcp-thm}
	There exist universal constants $\ell$, $a$, $C$ such that for every $n$, $3$-$\mathsf{COLOR}$ on $n$-vertex graphs is in
	$\mathsf{PCP}_{1,1/2}[C \log n, \ell, a]$.
\end{theorem}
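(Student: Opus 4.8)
The plan is to derive \Cref{thm:pcp-thm} by a single \emph{composition step} applied to the two atomic PCPs of \Cref{lem:polylogn-pcp} and \Cref{lem:neps-pcp}, using the robust composition framework of Dinur--Reingold~\cite{DinurReingold} and Ben-Sasson--Goldreich--Harsha--Sudan--Vadhan~\cite{BGHSV}. The first step is to upgrade both atomic constructions to \emph{robust assignment testers} in the sense of \Cref{def:testers}. The key observation is that both verifiers read only $\ell = \bigO(1)$ symbols and their decision circuit is falsified only if at least one of these symbols differs from an accepting local view; hence the relative Hamming distance of the queried view from the nearest accepting view is $\Omega(1/\ell) = \Omega(1)$. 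So constant locality already buys constant robustness, essentially for free; a mild re-encoding makes this precise without disturbing the stated randomness, locality, and alphabet bounds up to constant factors. For the object that will play the role of the \emph{inner} tester we additionally route the input through an oracle-respecting, Cook--Levin-style polynomial-time reduction from circuit satisfiability to $3$-$\mathsf{COLOR}$, so that \Cref{lem:neps-pcp} really yields an assignment tester for arbitrary bounded-size constraints, with an adjustable constant proximity parameter $\delta$.

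Next I fix the parameters of the composition. Use the polylogarithmic PCP of \Cref{lem:polylogn-pcp} as the \emph{outer} verifier: it has randomness $r_{\mathrm{out}} = \bigO(\log n)$, locality $\bigO(1)$, alphabet $\bigO(\log^2 n/\log\log n)$, and therefore a decision circuit of size $s_{\mathrm{out}} = \poly\log n$. Use the $n^\varepsilon$-PCP of \Cref{lem:neps-pcp} (in its robust assignment-tester form) as the \emph{inner} verifier, applied to constraints of size $N := \poly(s_{\mathrm{out}}) = \poly\log n$; its randomness on such inputs is $\bigO(N^\varepsilon) = \log^{\bigO(\varepsilon)} n$ and its alphabet is $\bigO(\tfrac1\varepsilon\log\tfrac1\varepsilon)$. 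Choosing $\varepsilon$ to be a small enough \emph{absolute} constant — so that $\log^{\bigO(\varepsilon)} n = \bigO(\log n)$ — and choosing the inner proximity parameter $\delta$ to be a constant no larger than the (constant) robustness of the outer verifier, the composition lemma \Cref{lem:DRcomposition} produces a single verifier with randomness $r_{\mathrm{out}} + \bigO(N^\varepsilon) = \bigO(\log n)$, locality $\bigO(1)$, and alphabet $\bigO(1)$, with perfect completeness (inherited from both atomic PCPs, each of completeness $1$) and some constant soundness error $s_0 < 1$.

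Finally, to push the soundness error down to $1/2$, I apply $\bigO(1)$ rounds of independent sequential repetition (the verifier runs a constant number of independent copies and accepts iff all of them accept): this multiplies the randomness and locality by a constant, leaving them $\bigO(\log n)$ and $\bigO(1)$ respectively, keeps the alphabet at $\bigO(1)$, preserves perfect completeness, and drives the soundness error below $1/2$. (Equivalently, one may first amplify the soundness of each atomic PCP to a sufficiently small constant and then compose.) Unwinding the construction, $3$-$\mathsf{COLOR}$ on $n$-vertex graphs lies in $\mathsf{PCP}_{1,1/2}[C\log n,\ell,a]$ for absolute constants $C,\ell,a$, as claimed.

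I expect the main obstacle to be the composition bookkeeping rather than any new idea: making the robust assignment-tester versions of the atomic PCPs precise (in particular that the inner construction is an assignment tester for \emph{arbitrary} bounded-size constraints, obtained from the $3$-$\mathsf{COLOR}$ PCP via an oracle-preserving NP-reduction, and that its proximity parameter can be taken to be a constant at most the outer robustness), and verifying that $\varepsilon$ can be chosen so that the inner randomness on inputs of size $\poly\log n$ remains $\bigO(\log n)$. These steps are standard but must be carried out carefully; the full details are given in \Cref{sec:composition}.
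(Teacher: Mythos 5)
Your overall architecture is exactly the paper's: turn the two atomic PCPs of \Cref{lem:polylogn-pcp} and \Cref{lem:neps-pcp} into robust assignment testers, compose them once via \Cref{lem:DRcomposition}, choose the parameter of the inner object so that its randomness on inputs of size $\poly\log n$ is $\bigO(\log n)$, and amplify soundness by constant repetition; the ``constant locality plus re-encoding gives constant robustness'' step is precisely \Cref{lem:robustization}. However, there is a genuine gap in the step where you pass from a PCP for $3$-$\mathsf{COLOR}$ to an \emph{assignment tester}. Routing the input through a Cook--Levin-style reduction does not by itself yield an object satisfying \Cref{def:testers}: the proof oracle of the $3$-$\mathsf{COLOR}$ PCP encodes \emph{some} coloring, and nothing in the soundness statements of \Cref{lem:polylogn-pcp} or \Cref{lem:neps-pcp} ties that coloring to the given assignment $\sigma:X\to\{0,1\}$, so a prover could supply a perfectly valid proof for a satisfying assignment far from $\sigma$ and your tester would accept. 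What is missing is (a) a reduction with the special structure of \Cref{lem:3color-redn}, embedding the variables $X$ as a vertex set $V_0$ and fixing a vertex $v_0$ whose color pins down the $3$-coloring up to the color-permutation ambiguity; (b) explicit consistency checks between $\sigma$ and the proof, which the paper performs in \Cref{algo:AT} by locally correcting the oracle $\hat\chi$ (via \Cref{thm:local-correction}) at $v_0$ and at $\bigO(1/\delta)$ random vertices of $V_0$ and comparing with $\sigma$ --- note this makes the locality of the resulting tester $\bigO(1/\delta)$, not $\bigO(1)$, which is why \Cref{lem:PCPstoRATs} carries the parameter $\ell=\bigO(1/\delta)$; and (c) a strictly stronger soundness guarantee for the base verifier than membership soundness, namely \Cref{claim:robust-soundness-pcp}: if the verifier rejects with probability below a constant, then $\widetilde{\chi}$ must be \emph{close to a low-degree extension of an actual proper $3$-coloring}, which is what makes the local-correction comparison meaningful. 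Without (b) and (c) the ``assignment tester'' never inspects $\sigma$ in a sound way, and the composed verifier would not be sound.

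Two smaller points: in the Dinur--Reingold framework both the outer and the inner object must be (robust) assignment testers, so the upgrade above is needed for the polylogarithmic instantiation as well, not only for the inner one; and your bit-level robustness claim indeed requires the error-correcting-code re-encoding of \Cref{lem:robustization} (a wrong view can differ in a single bit out of $\ell\cdot a$, so raw constant locality alone does not give constant robustness), which you correctly anticipated as a ``mild re-encoding.''
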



\section{Preliminaries}

\paragraph*{}For a field $\F_{q}$, we will use $\F_{q}[x_{1},\ldots,x_{m}]$ to denote the multivariate polynomial ring in variables $x_{1},\ldots,x_{m}$. For a degree parameter $d \in \mathbb{N}$, we will use $\mathcal{P}_{d}(\F_{q}^{m}) \subset \F_{q}[x_{1},\ldots,x_{m}]$ to denote the subspace of degree $\leq d$ polynomials. For a polynomial $P \in \F_{q}[x_{1},\ldots,x_{m}]$ and a set $V \subseteq \F_{q}^{m}$, we denote the restriction of $P$ to $V$ by $P|_{V}$. We will denote by $\F_{q}^{\times}$ the set of invertible elements of $\F_{q}$, i.e. $\F_{q}^{\times} = \F_{q} \setminus \set{0}$.\\

\begin{fact}[Polynomial Distance Lemma]\label{thm:odlsz}
	\cite{ore1922hohere, DL78, Schwartz80, Zippel79}.
	Fix a field $\F_{q}$. For every degree parameter $d \in \mathbb{N}$ with $d \leq q$ and for every non-zero polynomial $P \in \F_{q}[x_{1},\ldots,x_{m}]$, we have:
	\begin{align*}
		\Pr_{\mathbf{a} \sim \F_{q}^{m}}[P(\mathbf{a}) = 0] \; \leq \; \dfrac{d}{q}.
	\end{align*}
\end{fact}

\noindent
An immediate and useful corollary of \Cref{thm:odlsz} is the following: If two degree \(\leq d\) polynomials $P$ and $Q$ agree on strictly more than $d/q$-fraction of $\F_{q}^{m}$, then $P = Q$.

\paragraph*{}For any $c,m \in \mathbb{N}$ with $c \leq m$, we use $\Boo^{m}_{\leq c}$ to denote the set of Boolean strings of Hamming weight $\leq c$. We say that two functions $f,g:S\rightarrow T$ are $\delta$-close or $\delta$-far if they differ on at most or at least a $\delta$-fraction of their inputs, respectively.

\paragraph*{Lines Table}For every $m \in \mathbb{N}$, field $\F_{q}$, and points ${\bf a},{\bf b}\in\F_q^m$,
let $\ell_{{\bf a},{\bf b}}:\F_q \to \F_q^m$ (read as ``line passing through ${\bf a}$ with slope ${\bf b}$'') be defined as $\ell_{{\bf a},{\bf b}}(t):={\bf a}+t {\bf b}$.\\

\begin{definition}[Lines Table]\label{defn:lines-table}
	Fix a field $\F_{q}$.
	Let $d \in \N$ be the degree parameter and $m \in \N$ be the number of variables.
	For every degree \(\leq d\) polynomial $f: \F_{q}^{m} \to \F_{q}$,
	we define the $d^{th}$ lines table for $f$
	$f_{\mathrm{lines}}^{(d)} \; : \; \F_q^{2m} \;
		\longrightarrow \; (\F_{q})^{d+1}$ as the function that maps an input $({\bf a},{\bf b})\in \F_q^{2m}$ to  $f(\ell_{{\bf a},{\bf b}}(t))$, where $t$ is an indeterminate. We note that $f(\ell_{{\bf a},{\bf b}}(t))$ is indeed a univariate degree $d$ polynomial in $t$ and can be specified by the $d+1$ coefficients of $t^0,t^1,\ldots, t^d$.
\end{definition}

\paragraph*{3-colorability}We state the $3$-colorability language below. Note that the choice of $3$-coloring as an $\mathsf{NP}$-complete problem instead of one of many others is simply a matter of convenience.\\

\begin{definition}\label{defn:3-color}
	The decision problem $3$-$\mathsf{COLOR}$ is the following problem:\newline
	Given a graph $G = (V,E)$ with $n$ vertices, decide whether there exists a proper coloring of $G$ using $3$ colors, i.e. for every edge $(u,v) \in E$, the vertices $u$ and $v$ are assigned different colors.
\end{definition}

\noindent
\begin{lemma}
	The decision problem $3$-$\mathsf{COLOR}$ is $\mathsf{NP}$-complete.
\end{lemma}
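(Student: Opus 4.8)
The plan is to verify the two defining conditions of $\mathsf{NP}$-completeness: that $3$-$\mathsf{COLOR}$ lies in $\mathsf{NP}$, and that it is $\mathsf{NP}$-hard.

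For membership in $\mathsf{NP}$, I would use as certificate a purported coloring $\chi : V \to \{1,2,3\}$, which has description length $O(|V|)$. The verifier reads $\chi$ and, looping over the edge set, checks that $\chi(u)\neq\chi(v)$ for every $(u,v)\in E$; this runs in time $O(|V|+|E|)$. If $G$ is $3$-colorable then some such $\chi$ is accepted, and if $G$ is not $3$-colorable then every candidate $\chi$ is rejected on some edge. Hence $3$-$\mathsf{COLOR}\in\mathsf{NP}$.

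For $\mathsf{NP}$-hardness, I would give a polynomial-time many-one reduction from $3$-$\mathsf{SAT}$ (which is $\mathsf{NP}$-complete) to $3$-$\mathsf{COLOR}$ using the classical gadget construction. Given a $3$-CNF formula $\phi$ on variables $x_1,\dots,x_n$ with clauses $C_1,\dots,C_m$, first add a ``palette'' triangle on three vertices $T,F,N$, forcing their colors to be the three distinct values \texttt{true}, \texttt{false}, \texttt{neutral}. For each variable $x_i$ add vertices $v_i,\bar v_i$ with edges $v_i\bar v_i$, $v_iN$, and $\bar v_iN$, so that in any proper coloring exactly one of $v_i,\bar v_i$ is colored \texttt{true} and the other \texttt{false}. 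For each clause $C_j=(\ell_1\vee\ell_2\vee\ell_3)$ attach a fixed-size OR-gadget whose three input terminals are identified with the literal-vertices of $\ell_1,\ell_2,\ell_3$ and whose output terminal is made adjacent to $F$ and $N$; the gadget is designed so that it admits a proper $3$-coloring with the output colored \texttt{true} precisely when at least one input is \texttt{true}, and otherwise cannot be completed. The resulting graph $G_\phi$ has $O(n+m)$ vertices and edges and is clearly constructible from $\phi$ in polynomial time.

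The remaining work is the correctness of the reduction: a satisfying assignment of $\phi$ yields a proper $3$-coloring of $G_\phi$ by coloring each $v_i$ according to the truth value of $x_i$, fixing the palette triangle, and completing each clause gadget using its completion property; conversely a proper $3$-coloring of $G_\phi$ induces a truth assignment via the colors of the $v_i$, and the ``output forces an input'' property of each gadget guarantees every clause is satisfied. The only real (and minor) obstacle is to fix one concrete OR-gadget and check its two defining properties by a finite case analysis; after that, everything is bookkeeping. Since the sole role of this lemma here is to license the use of $3$-$\mathsf{COLOR}$ as a convenient $\mathsf{NP}$-complete target, one may equally just invoke this reduction as folklore (it is in Karp's original list and in every standard text).
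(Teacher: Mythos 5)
Your proof is correct and matches what the paper does: the lemma is stated as a standard fact (with no proof given), and the very reduction you sketch — $3$-$\mathsf{SAT}$ to $3$-$\mathsf{COLOR}$ via the classical palette/literal/OR-gadget construction — is exactly what the paper later invokes as folklore via the Tseitin transformation and \cite{GareyJohnsonStockmeyer} in its Lemma on the Circuit-SAT-to-$3$-$\mathsf{COLOR}$ reduction. Nothing further is needed.
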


\subsection{Low-degree Testing}
In this subsection, we discuss the standard point-vs-line test for low-degree testing from \cite{ALMSS}. We start by recalling the test and state its properties in \Cref{thm:low-degree-testing}. In the following discussion, we will switch between a function $f: \F_{q}^{m} \to \F_{q}$ and its evaluation vector $f \in (\F_{q})^{q^{m}}$, as both are equivalent.\\

\begin{algobox}
	\begin{algorithm}[H]
		\caption{Low-Degree Test $\vldt^{(\cdot)}_{\cdot}$}
		\label{algo:low-degree-test}

		\DontPrintSemicolon
		\KwIn{Degree parameter $d$, randomness $\mathbf{a},\mathbf{b} \in \F_{q}^{m}$, $t \in \F_{q}^{\times}$, and oracle access to $(f,f')$ where $f \in (\F_{q})^{q^{m}}$ and $f' \in (\F_{q}^{d+1})^{q^{2m}}$ }

		Query $f'[({\bf a},{\bf b})]$ and query $f[{{\bf a}+t{\bf b}}]$
		\tcp*{Two queries to $(f,f')$}

		\vspace{2mm}

		\If{$f'[({\bf a},{\bf b})](t) \neq f[{{\bf a}+t{\bf b}}]$ \tcp*{Running time is $\poly(m,d)$}}{
		\vspace{3mm}
		\Return{\reject}
		\Else{
			\Return{\texttt{Accept}}
		}
		}

	\end{algorithm}
\end{algobox}

\noindent
\begin{theorem}[Low-degree Testing (see for instance {\protect \cite[Theorem A5]{ALMSS}})]\label{thm:low-degree-testing}
	There exists absolute constants \(0 < C,\delta_{0}\)
	such that for every	\(\delta < \delta_{0}\),
	for every $d,q \in \mathbb{N}$ with $q > C d^{3}$, the following holds over $\F_{q}$:
	\begin{enumerate}
		\item If $f \in \mathcal{P}_{d}(\F_{q}^{m})$, then $\vldt^{f,f_{\mathrm{lines}}^{(d)}}(;\mathbf{a},\mathbf{b},t)$ returns \texttt{Accept} with probability $1$ over the random choice of $(\mathbf{a},\mathbf{b},t)$.
		\item For every $f: \F_{q}^{m} \to \F_{q}$ and for every $f' : \F_{q}^{2m} \to \F_{q}$, we have:
		      \begin{align*}
			      \Pr_{\mathbf{a},\mathbf{b},t}[\vldt^{f,f'}_{d}( ; \mathbf{a},\mathbf{b},t) \; \text{returns \texttt{Reject}}] \; \leq \; \delta \;
			      \implies \; \delta(f, \, \mathcal{P}_{d}(\F_{q}^{m})) \; \leq \; 4 \delta.
		      \end{align*}
	\end{enumerate}
	Furthermore, $\vldt^{f,f'}_{d}$ makes $2$ oracle queries, uses
	$\bigO(m \log q)$ bits of randomness, and runs in time $\poly(m,d)$.\\
\end{theorem}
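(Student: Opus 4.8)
The plan is the standard self-correction analysis of the point-versus-line test, as in~\cite{ALMSS}. Completeness (item~1) is immediate from the definition of the lines table: if $f\in\mathcal{P}_d(\F_q^m)$, then for every $(\mathbf{a},\mathbf{b})$ the univariate map $t\mapsto f(\mathbf{a}+t\mathbf{b})$ has degree $\le d$, and $f_{\mathrm{lines}}^{(d)}[(\mathbf{a},\mathbf{b})]$ is by definition exactly this polynomial, so the consistency check in \Cref{algo:low-degree-test} succeeds for every choice of $(\mathbf{a},\mathbf{b},t)$. The query count ($2$), randomness ($\bigO(m\log q)$ to name $(\mathbf{a},\mathbf{b})$ and $\bigO(\log q)$ for $t$), and running time ($\poly(m,d)$, to compare two degree-$d$ univariates) are read directly off the algorithm. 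So the content is item~2.

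For item~2, suppose the test rejects with probability $\delta'\le\delta<\delta_0$. Define a self-corrected function $g:\F_q^m\to\F_q$ by declaring $g(\mathbf{x})$ to be the plurality, over a uniformly random line $L$ through $\mathbf{x}$, of the value that the lines-table entry for $L$ predicts at $\mathbf{x}$. Reparametrizing the test's random choice $(\mathbf{a},\mathbf{b},t)$ as a uniform point $\mathbf{x}$ lying on a uniformly random line through it, the rejection bound says this predicted value equals $f(\mathbf{x})$ with probability $\ge 1-\delta'$; since whenever $g(\mathbf{x})\neq f(\mathbf{x})$ the value $f(\mathbf{x})$ is out-voted and hence predicted by at most half the lines through $\mathbf{x}$, a Markov-type argument gives $\delta(f,g)\le 2\delta'\le 4\delta$. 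A second, pairwise-consistency argument (two independent random lines through a fixed point agree at that point with probability $1-O(\delta)$, since up to the reparametrization above both predictions are ``fresh'' test instances) shows that the plurality is in fact overwhelming, so the lines table is essentially consistent with $g$: for all but an $O(\delta)$ fraction of lines $L$, the table entry for $L$ is the degree-$\le d$ univariate $g|_L$.

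The crux, and the step I expect to be the main obstacle, is upgrading ``$g|_L$ is a degree-$\le d$ polynomial for almost every line $L$'' to ``$g$ is a polynomial of total degree $\le d$''. The plan is two-fold. First, bootstrap from almost-every-line to \emph{every} line: for an arbitrary line $L$, recover the intended value of $g$ at each of $d+1$ points of $L$ by a further round of self-correction (voting over random lines through those points) and interpolate; that the resulting degree-$\le d$ univariate agrees with $g$ on all of $L$ follows by intersecting $L$ with many random lines and applying the Polynomial Distance Lemma (\Cref{thm:odlsz}) to control the errors — this is where the hypothesis $q>Cd^3$ is spent, to keep the union bounds over two-dimensional subspaces and over the $d+1$ interpolation nodes under control, and where $\delta<\delta_0$ is needed so that all the majorities involved are genuine majorities. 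Second, invoke the purely algebraic fact that a function $\F_q^m\to\F_q$ whose restriction to every line is a polynomial of degree $\le d$ is itself a polynomial of total degree $\le d$ whenever $q>d$ (by induction on $m$: slice along the last coordinate, interpolate through $d+1$ parallel hyperplanes, and verify the interpolant's degree by restricting to lines). Combining the two, $g\in\mathcal{P}_d(\F_q^m)$ with $\delta(f,g)\le 4\delta$, so $\delta(f,\mathcal{P}_d(\F_q^m))\le 4\delta$, which is item~2.
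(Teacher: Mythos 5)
You should first note what the paper itself does with this statement: it does not prove it. \Cref{thm:low-degree-testing} is imported as a black box, with a citation to \cite[Theorem A5]{ALMSS}, and nothing in the paper depends on the internals of its proof. So there is no in-paper argument to compare against; the question is only whether your sketch would stand on its own. Your completeness and efficiency paragraphs are fine and essentially forced by \Cref{algo:low-degree-test}, and the first half of your soundness outline (define the plurality-corrected $g$, get $\delta(f,g)\le 2\delta'$ by the Markov argument) is the standard and correct opening move.

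The gap is exactly where you say you expect it, and the sketch you give there does not close it. Two concrete problems. First, the intermediate claim that ``two independent random lines through a \emph{fixed} point agree at that point with probability $1-O(\delta)$'' is false as stated: the test's rejection probability only controls an average over the point, so for a fixed $\mathbf{x}$ nothing of the sort follows; per-point (or per-line) guarantees are precisely what the hard part of the \cite{ALMSS} analysis manufactures, via bivariate/plane restrictions (Rubinfeld--Sudan style consistency on planes, Arora--Safra/Polishchuk--Spielman-type lemmas), and this is also where the $q>Cd^3$ hypothesis and the explicit constant $4$ in item~2 are actually earned. Your proposed bootstrap — self-correct $g$ at $d+1$ points of an arbitrary line, interpolate, and ``apply \Cref{thm:odlsz} to control the errors'' — cannot be carried out as written, because at that stage neither $g$ nor the interpolant's relation to $g$ off the sampled points is known to come from low-degree polynomials, which is what the Polynomial Distance Lemma needs; turning this into a proof is the content of several pages of \cite{ALMSS}. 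Second, the ``purely algebraic fact'' is misquoted: $q>d$ does \emph{not} suffice for the line-to-global implication. For $d=q-1$ every function $\F_q^m\to\F_q$ restricts to a degree-$\le d$ univariate on every line, yet most such functions have total degree far exceeding $d$; the characterization requires $d$ bounded away from $q$ (conditions of Friedl--Sudan type), which the regime $q>Cd^3$ does supply, but your induction sketch would need that stronger hypothesis and a more careful argument than ``interpolate through $d+1$ parallel hyperplanes.'' In short: the easy parts are right, but the crux of the theorem is asserted rather than proved, so as a self-contained proof this has a genuine gap — which is also why the paper simply cites \cite{ALMSS} for it.
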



\begin{remark}
	For low-degree testing, there has been a long line of work on achieving better parameters in terms of field size and soundness guarantee. We refer the interested reader to \cite[Section 1]{HKSS-LDT} for a detailed overview of the results of low-degree testing and also for the state-of-the-art parameters (see \cite[Theorem 1.2]{HKSS-LDT}). We use the low-degree testing from \cite{ALMSS} because the algorithm and analysis are done using the lines table.
\end{remark}

\subsection{Local Correction of Low-Degree Polynomials}
In this subsection, we discuss the local correction/self-correction algorithm for degree $d$ polynomials over $\F_{q}^{m}$ from \cite{ALMSS}. We first describe the local corrector and then analyze it in \Cref{thm:local-correction}. In the following discussion, we will switch between a function $f: \F_{q}^{m} \to \F_{q}$ and its evaluation vector $f \in (\F_{q})^{q^{m}}$, as both are equivalent.

\begin{algobox}
	\begin{algorithm}[H]
		\caption{Local Corrector $\corr^{(\cdot)}_{\cdot}$}
		\label{algo:local-correction}

		\DontPrintSemicolon
		\KwIn{Degree parameter $d$, evaluation point $\mathbf{a} \in \F_{q}^{m}$, randomness $\mathbf{b} \in \F_{q}^{m}$, $t \in \F_{q}^{\times}$, and oracle access to $(f,f')$  where $f \in (\F_{q})^{q^{m}}$ and $f' \in (\F_{q}^{d+1})^{q^{2m}}$ }

		\vspace{2mm}

		Query $f'[(\mathbf{a},\mathbf{b})](t)$ and query $f[{{\bf a}+t{\bf b}}]$ \tcp*{Two queries to $(f,f')$}

		\vspace{2mm}

		\If{$f'[({\mathbf{a},\mathbf{b}})](t) \neq f[\ell_{\mathbf{a},\mathbf{b}}(t)]$ \tcp*{Running time is $\poly(m,d)$}}{

		\vspace{2mm}

		\Return{\texttt{Reject}}
		}
		\Return{$f'[(\mathbf{a},\mathbf{b})](0)$}
	\end{algorithm}

\end{algobox}

\noindent
\begin{theorem}[Local Correction (see e.g.~\cite{ALMSS}, Proposition 7.2.2.1)]\label{thm:local-correction}
	There exists an absolute constant $C > 0$ such that for every $d,q \in \mathbb{N}$ satisfying $q > Cd$, the following holds.\newline

	\begin{enumerate}
		\item If $f$ is a polynomial of degree $d$, then for every $\mathbf{a} \in \F_{q}^{m}$, $\corr^{f,f_{\mathrm{lines}}^{(d)}}(\mathbf{a};\mathbf{b},t)$ is equal to $f(\mathbf{a})$ with probability $1$ over the random choice of $(\mathbf{b},t)$.

		\item Let $f: \F_{q}^{m} \to \F_{q}$ be any function with the condition that there exists a degree $d$ polynomial $P$ such that $\delta(f,P) \leq \delta$. Then for every $f' : \F_{q}^{2m} \to \F_{q}^{d+1}$, for every $\mathbf{a} \in \F_{q}^{m}$, we have:\\
		      If $\corr^{(f,f')}_{d}(\mathbf{a})$ does not return \texttt{Reject}, then $\corr^{(f,f')}(\mathbf{a})$ computes $P(\mathbf{a})$ exactly with high probability over the random choice of $(\mathbf{b},t)$, i.e.
		      \begin{align*}
			      \Pr_{\mathbf{b},t}[\corr^{(f,f')}_{d}(\mathbf{a};\mathbf{b},t) \; = \; P(\mathbf{a}) \quad \text{OR} \quad \corr^{(f,f')}_{d}(\mathbf{a};\mathbf{b},t) \; \text{returns \texttt{Reject}} ]
			      \; \geq \;
			      1 - 2\sqrt{\delta} - \dfrac{d}{q-1}.
		      \end{align*}
	\end{enumerate}

	Furthermore, $\corr^{(f,f')}_{d}(\mathbf{a})$ makes $2$ oracle queries, uses $\bigO(m \log q)$ bits of randomness, and runs in time $\poly(m,d)$.
\end{theorem}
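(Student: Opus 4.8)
The plan is to analyze \Cref{algo:local-correction} directly, handling its two parts in turn; the argument is routine, so I will concentrate on the structure rather than on calculations.

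For part~1, one simply unwinds the definition of the lines table. If $f$ has degree $\le d$ and $f' = f_{\mathrm{lines}}^{(d)}$, then by \Cref{defn:lines-table} the entry $f'[(\mathbf{a},\mathbf{b})]$ is precisely the univariate polynomial $t \mapsto f(\mathbf{a} + t\mathbf{b})$, which has degree $\le d$. Hence for every $(\mathbf{b},t)$ we have $f'[(\mathbf{a},\mathbf{b})](t) = f(\mathbf{a}+t\mathbf{b}) = f[\ell_{\mathbf{a},\mathbf{b}}(t)]$, so the test in \Cref{algo:local-correction} passes and the algorithm returns $f'[(\mathbf{a},\mathbf{b})](0) = f(\mathbf{a} + 0\cdot\mathbf{b}) = f(\mathbf{a})$; this happens with probability $1$.

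For part~2, fix $\mathbf{a}$, $f$, a degree-$d$ polynomial $P$ with $\delta(f,P)\le\delta$, and an arbitrary table $f'$. It suffices to bound the probability of the bad event $B$ that the algorithm does not reject yet outputs a value $\ne P(\mathbf{a})$. Set $h_{\mathbf{b}}(t) := f'[(\mathbf{a},\mathbf{b})](t)$, a univariate of degree $\le d$ (by the codomain of $f'$), and $g_{\mathbf{b}}(t) := P(\mathbf{a} + t\mathbf{b})$, also of degree $\le d$. Since the algorithm outputs $h_{\mathbf{b}}(0)$ while $g_{\mathbf{b}}(0) = P(\mathbf{a})$, the event ``output $\ne P(\mathbf{a})$'' forces $h_{\mathbf{b}} \ne g_{\mathbf{b}}$ as polynomials, while ``not reject'' is exactly $h_{\mathbf{b}}(t) = f[\mathbf{a} + t\mathbf{b}]$. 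I would then split $B$ according to whether $f$ agrees with $P$ at the sampled point $\mathbf{z} := \mathbf{a} + t\mathbf{b}$. The key observation is that for each fixed nonzero $t$ the map $\mathbf{b} \mapsto \mathbf{a} + t\mathbf{b}$ is a bijection of $\F_q^m$, so $\mathbf{z}$ is uniform on $\F_q^m$; hence the part of $B$ on which $f[\mathbf{z}] \ne P(\mathbf{z})$ has probability at most $\Pr[f[\mathbf{z}] \ne P(\mathbf{z})] = \delta(f,P) \le \delta$. On the complementary part, where $f[\mathbf{z}] = P(\mathbf{z}) = g_{\mathbf{b}}(t)$, ``not reject'' forces $h_{\mathbf{b}}(t) = g_{\mathbf{b}}(t)$; conditioning on any $\mathbf{b}$ with $h_{\mathbf{b}} \ne g_{\mathbf{b}}$, the difference $h_{\mathbf{b}} - g_{\mathbf{b}}$ is a nonzero univariate of degree $\le d$, which has at most $d$ roots among the $q-1$ nonzero values of $t$ (by \Cref{thm:odlsz}, or simply the univariate root bound), so this part has probability at most $d/(q-1)$. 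Adding the two parts gives $\Pr[B] \le \delta + \tfrac{d}{q-1} \le 2\sqrt{\delta} + \tfrac{d}{q-1}$ (using $\delta \le 2\sqrt{\delta}$ for $\delta \in [0,1]$), which is the claimed bound. Here $q > Cd$ with any $C \ge 2$ suffices: it guarantees $d \le q$ so that \Cref{thm:odlsz} applies and $\tfrac{d}{q-1} < 1$.

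The ``furthermore'' clause is immediate from inspecting \Cref{algo:local-correction}: it makes one query to $f'$ and one to $f$; it samples $\mathbf{b} \in \F_q^m$ and $t \in \F_q^\times$, using $\bigO(m\log q)$ random bits; and evaluating a degree-$\le d$ univariate at two points and comparing one field element runs in $\poly(m,d)$ time. The only mildly delicate point is the case analysis in part~2: the algorithm never needs to certify global consistency of the (possibly corrupted) lines table, only consistency at the single queried point, and the leverage comes from that point being exactly uniform over $\F_q^m$ even though $t$ ranges over $\F_q^\times$.
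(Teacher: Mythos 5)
Your proof is correct, and in fact establishes a slightly stronger bound than the one claimed. Note that the paper does not prove \Cref{thm:local-correction} at all --- it is imported as a black box from \cite{ALMSS} --- so the comparison is with the classical argument rather than with an in-paper proof. Your part~1 is the straightforward computation. For part~2, the decomposition of the bad event into (i) the queried point $\mathbf{z}=\mathbf{a}+t\mathbf{b}$ lying in the disagreement set of $f$ and $P$, and (ii) the table polynomial $h_{\mathbf{b}}$ differing from $g_{\mathbf{b}}=P\circ \ell_{\mathbf{a},\mathbf{b}}$ as a polynomial yet agreeing with it at the sampled $t$, is exhaustive on the bad event; the bijection $\mathbf{b}\mapsto \mathbf{a}+t\mathbf{b}$ (for each fixed $t\neq 0$) makes $\mathbf{z}$ exactly uniform, and the independence of $\mathbf{b}$ and $t$ legitimizes the conditional root-counting bound $d/(q-1)$ for a nonzero univariate of degree at most $d$ over the $q-1$ values of $t$. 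This yields $\delta + d/(q-1)$, which implies the stated $2\sqrt{\delta}+d/(q-1)$ since $\delta\le\sqrt{\delta}\le 2\sqrt{\delta}$ for $\delta\in[0,1]$. The classical ALMSS-style analysis instead runs an averaging/Markov step over lines through $\mathbf{a}$ (bounding the fraction of lines on which $f$ disagrees with $P$ at more than roughly a $\sqrt{\delta}$ fraction of points), which is the source of the $2\sqrt{\delta}$ term and is the natural route when the corrector must trust the entire restriction of $f$ to the line; because \Cref{algo:local-correction} reads the whole line restriction from the (possibly adversarial) table and cross-checks $f$ only at one uniformly distributed point, your direct split avoids that averaging step and gives the cleaner bound. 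Your remarks on the query count, randomness, running time, and on the role of the hypothesis $q>Cd$ (needed only to make the guarantee non-vacuous, not for the validity of the inequality) are also accurate.
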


\section{Macaulay Bases}\label{sec:generatingset} In \Cref{sec:vanishing-certificate},
we define a test to check if a polynomial vanishes on a subset \(V \subseteq \F_q^m\).
In this section, we introduce the relevant parameters of such subsets,
which we use to describe the efficiency of such tests.
We also show that these are well-behaved under Cartesian products. We first define the parameters.\\

\begin{definition}[Extension degree]\label{defn:low-deg-extension}
	For a non-empty set $V \subseteq \F_{q}^{m}$,
	function $f:V \to \F_q$ and polynomial $P \in \F_q[X_1,\ldots,X_m]$
	we say $P$ \emph{extends} $f$ if for every $a \in V$, we have $f(a) = P(a)$.\newline
	We define the \emph{extension degree} of $V$ to be the smallest integer $d \in \mathbb{N}$ such that every function $f:V \to \F_q$ can be extended to a polynomial $\widehat{f}$ of total degree at most $d$.
\end{definition}

\noindent
\begin{definition}[Macaulay Complexity]\label{defn:macaulay-complexity}
	For an ideal
	$\I \subseteq \F_q[x_1,\ldots,x_m]$
	we say that a finite set
	$\Grobner \subseteq \I$
	is a Macaulay basis	of $\I$, if every polynomial $P \in \I$
	can be written as follows:
	\begin{align*}
		P = \sum_{g \in \Grobner} h_g \cdot g, \quad \text{ where } \, h_g \in \F_q[x_1,\ldots,x_m] \text{ and } \deg(h_{g} g) \leq \deg(P).
	\end{align*}

	For a set $V \subseteq \F_{q}^{m}$, let $\mathbb{I}(V)$ denote the ideal of polynomials that vanish on $V$.
	We define the \emph{Macaulay complexity} of $V$ to be the cardinality of the smallest Macaulay basis of $\mathbb{I}(V)$.
\end{definition}

A few remarks about Macaulay bases follow below.

\noindent
\begin{remark}
	Note that a Macaulay basis need not be a ``minimal'' one, i.e., it may be possible to add polynomials to a Macaulay basis while still keeping it a Macaulay basis.
	We also abuse notation slightly and refer to a Macaulay basis of a {\em variety} $V$ as a Macaulay basis of the corresponding ideal $\mathbb{I}(V)$.
	A Gröbner basis in a graded ordering is always a Macaulay basis.
	However a minimal Gr\"obner basis is not necessarily a minimal Macaulay basis, see \Cref{rem:mingrob}.
\end{remark} 

\noindent
\begin{example}\label{example:onedim}
	Let \(H\) be a subset of \(\F_q\).
	Then any polynomial in \(\F_q[x]\) which vanishes on \(H\) is divisible by
	\begin{align*}
		\prod_{h \in H}(x-h).
	\end{align*}
	Furthermore, for any function \(f: H \to \F_q\)
	we can find a degree \(|H|-1\) polynomial extending \(f\).
	It follows \(H\) has Macaulay complexity 1,
	and extension degree \(|H|-1\).
\end{example}


\noindent
\begin{example}\label{example:hammingweightone}
	Let \(\{0,1\}^n_{\leq 1} \subseteq \F_q^n\)
	be the subset of \(\{0,1\}^n\)
	consisting of points with Hamming weight at most one.
	Then
	\begin{align*}
		\left\{ x_{i}^{2}-x_{i}\right\}_{1 \leq i \leq n}
		\cup
		\left\{
		x_{1}x_{2}, \dots, x_{n-1}x_{n}\right\}_{1 \leq i<j \leq n}
	\end{align*}
	is a Macaulay basis.
	To see this, note that any polynomial \(P\)
	can be written in the form
	\begin{align*}
		P(\x) = \sum_{i}^{n} \; h_{i}(\x) \cdot (x_{i}^{2}-x_{i}) \; + \;
		\sum_{i<j}^{n} \; g_{i,j}(\x) \cdot (x_{i}x_{j}) \; + \;
		\ell(\x)
	\end{align*}
	where \(\ell\) is a linear function, and
	\(\deg(h_{i}),\deg(g_{i,j}) \leq \deg(P)-2\).
	The first two summands vanish on \(\{0,1\}^{n}_{\leq 1}\),
	so \(P\) vanishes on \(\{0,1\}^{n}_{\leq 1}\)
	if \(\ell\) also vanishes.
	However,
	if \(\ell\) has a non-zero coefficient for any variable $x_i$,
	then \(\ell\) takes different values on two points which only differ in
	the \(i\)-th coordinate.
	It follows that \(\ell\)
	only vanishes when it is the zero polynomial,
	and so \(P\)
	vanishes on \(\{0,1\}^{n}_{\leq 1}\) if and only if
	it can be written on the form
	\begin{align*}
		P(\x) = \sum_{i}^{n} \; h_{i}(\x) \cdot (x_{i}^{2}-x_{i}) \; + \;
		\sum_{i,j}^{n} \; g_{i,j}(\x) \cdot (x_{i}x_{j}).
	\end{align*}
	It follows that \(\{0,1\}^{n}_{\leq 1}\)
	has Macaulay complexity at most \(\frac{n(n+1)}{2}\)
	and extension degree \(1\).
\end{example}

The above example can be extended to Hamming balls of radius greater than 1.

\noindent
\begin{example}\label{example:ball}
	Let \(\{0,1\}^m_{\leq c} \subseteq \F_q^m\)
	be the subset of \(\{0,1\}^m\)
	consisting of points with Hamming weight at most $c$.
	\begin{align*}
		\left\{x_{i}^{2}-x_{i}\right\}_{1 \leq i \leq n}
		\cup
		\left\{x_{i_{1}}x_{i_{2}} \cdots x_{i_{c+1}}\right\}_{1 \leq i_{1} < \cdots < i_{c+1} \leq n}
	\end{align*}
	is a Macaulay basis.
	To see this, any polynomial can be written on the form
	\begin{align*}
		P(\x) = \sum_{i}^{n} \; h_{i}(\x) \cdot (x_{i}^{2}-x_{i}) \; + \;
		\sum_{i_{1} < \ldots < i_{c+1}}^{n} \; g_{i,j}(\x) \cdot (x_{i_{1}}x_{i_{2}} \cdots x_{i_{c+1}}) \; + \;
		m(\x)
	\end{align*}
	where \(m(\x)\) is a multilinear polynomial of degree at most \(c\),
	and
	\(\deg(h_{i})\leq \deg(P)-2, \deg(g_{i,j}) \leq \deg(P)+1-c\).

	Since the first two summands vanish on \(\{0,1\}^m_{\leq c}\),
	any polynomial is therefore equivalent to a multilinear polynomial of
	degree at most \(c\) on \(P\),
	so multilinear polynomials span the set of functions on
	\(\left\{0,1\right\}_{\leq c}^{m} \).
	However, since both the size of \(\{0,1\}^m_{\leq c}\) and the number of
	multilinear monomials of degree at most \(c\) is \(\)
	\(\sum_{i=0}^{c}\binom{m}{i}\),
	it follows that a multilinear polynomial can only vanish on \(V\) if it is
	0. It follows that \(f_{|V}=0\) if and only if \(m(\x) = 0\) if and only
	if \(f\) lies in the Macaulay basis.

	This set then has Macaulay complexity
	$m+{m\choose c+1} =\Theta_c(m^{c+1})$ and extension degree $c$.
\end{example}

The following lemma shows that we can upper-bound both the Macaulay complexity and the extension degree
of Cartesian products.

\noindent
\begin{restatable}[Subadditivity of Macaulay complexity and extension degree]{lemma}{subaddgrob}\label{lemma:subaddgrob}
	Let \(V_{1} \subseteq  \F_{q}^{m_{1}}\)
	and \(V_{2} \subseteq  \F_{q}^{m_{2}}\),
	and consider their product
	\(V_{1} \times V_{2} \subseteq \F_{q}^{m_{1}+m_{2}}\).
	We then have:
	\begin{enumerate}
		\item if \(\Grobner_1,\Grobner_2\) are Macaulay bases for
		      \(\I(V_{1}),\I(V_{2})\) respectively,
		      then \(\Grobner_1 \cup \Grobner_2\) is a Macaulay basis for \(\I(V_{1} \times V_{2})\).
		\item if \(V_{1},V_{2}\) have extension degrees \(d_{1}, d_{2}\) respectively,
		      then \(V_{1} \times V_{2}\) has extension degree at most \(d_{1}+d_{2}\).
	\end{enumerate}
	In particular,
	both the Macaulay complexity and extension degree
	are subadditive under Cartesian products.
\end{restatable}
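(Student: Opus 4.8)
The plan is to prove the two items separately. Item~2, the extension-degree bound, is a routine interpolation: given $f : V_1 \times V_2 \to \F_q$, fix for each $b \in V_2$ an interpolation polynomial $E_b(y) \in \F_q[y_1,\dots,y_{m_2}]$ of degree at most $d_2$ with $E_b(b) = 1$ and $E_b(b') = 0$ for $b' \in V_2\setminus\{b\}$ (these exist since $V_2$ has extension degree $d_2$), together with a degree-$\le d_1$ polynomial $\widehat{g_b}(x)$ extending $a \mapsto f(a,b)$ on $V_1$; then $\widehat f(x,y) := \sum_{b \in V_2} \widehat{g_b}(x)\, E_b(y)$ extends $f$ and has total degree at most $d_1+d_2$. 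Subadditivity of the extension degree follows.

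The content is in item~1. Write $R := \F_q[x_1,\dots,x_{m_1},y_1,\dots,y_{m_2}]$ and let $\I_1^e, \I_2^e \subseteq R$ be the ideals generated by $\I(V_1)$ and $\I(V_2)$. I would establish three facts. \textbf{(i) Ideal identity:} $\I(V_1 \times V_2) = \I_1^e + \I_2^e$. Since $V_1, V_2$ are finite, evaluation gives isomorphisms $\F_q[x]/\I(V_1) \cong \F_q^{V_1}$ and $\F_q[y]/\I(V_2) \cong \F_q^{V_2}$ (surjectivity is exactly the finiteness of the extension degree, and the kernels are the vanishing ideals by definition); tensoring over $\F_q$ gives $R/(\I_1^e+\I_2^e) \cong \F_q^{V_1}\otimes_{\F_q}\F_q^{V_2} \cong \F_q^{V_1\times V_2}$, again via evaluation, so the kernel of evaluation on $R$ --- which is $\I(V_1\times V_2)$ --- equals $\I_1^e + \I_2^e$. \textbf{(ii) Base change:} if $\Grobner_1$ Gr\"obner-generates $\I(V_1)$ then it also Gr\"obner-generates $\I_1^e$ (and symmetrically for $\Grobner_2$). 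Indeed $\I_1^e = \bigoplus_\beta \I(V_1)\, y^\beta$ (collect $y$-monomial coefficients), so $Q \in \I_1^e$ with $\deg Q = D$ equals $\sum_\beta q_\beta(x) y^\beta$ with each $q_\beta \in \I(V_1)$ and $\deg q_\beta \le D-|\beta|$; decomposing each $q_\beta$ via $\Grobner_1$ and regrouping by $y$-monomial gives $Q = \sum_{g\in\Grobner_1} h_g\, g$ with $\deg(h_g g) \le D$. \textbf{(iii) Degree-controlled splitting:} every $P \in \I(V_1\times V_2)$ of degree $D$ can be written $P = A+B$ with $A \in \I_1^e$, $B \in \I_2^e$, $\deg A,\deg B \le D$. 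Granting these, for $P \in \I(V_1\times V_2)$ of degree $D$ I split $P = A+B$ by (iii) and decompose $A,B$ by (ii), adding the two decompositions to get $P = \sum_{g \in \Grobner_1\cup\Grobner_2} h_g\, g$ with every $\deg(h_g g) \le D$ --- exactly the claim. The degenerate cases $V_1=\emptyset$ or $V_2=\emptyset$ are immediate since then $1 \in \Grobner_1\cup\Grobner_2$ and $\I(V_1\times V_2)=R$. Subadditivity of Gr\"obner complexity then follows by taking $\Grobner_1,\Grobner_2$ of minimum size.

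The main obstacle is (iii): fact (i) only yields a split $P = A'+B'$ with no degree control, and the pieces can have degree above $D$ with their top parts cancelling, so one must produce a split compatible with the degree filtration. The approach I would take is to first reduce $P$ modulo the monic univariate polynomials $\prod_{c\in\pi_i(V_1)}(x_i-c) \in \I(V_1)$ and $\prod_{c\in\pi_j(V_2)}(y_j-c) \in \I(V_2)$ (with $\pi_i$ the projection onto coordinate $i$): monic division peels off degree-$\le D$ members of $\I_1^e$ and $\I_2^e$ and never raises the total degree, so one reduces to $P$ of bounded degree in each single variable. One then finishes by fixing a graded monomial order on $R$ that orders the $x$-block and $y$-block independently and verifying that reduction of $P$ by $\Grobner_1\cup\Grobner_2$ drives it to $0$ without ever increasing total degree; this is the step that invokes the link between Gr\"obner generating sets and graded Gr\"obner bases from \Cref{GrobnerAppendix}, and it is where the degree bookkeeping genuinely needs care.
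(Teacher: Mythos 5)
Your Item~2 argument is correct and is essentially the paper's proof (the paper writes $f$ as a finite sum $\sum_i a_i(\x)b_i(\y)$ and extends each factor; your choice of delta functions $E_b$ is just an explicit such decomposition). For Item~1, your steps (i) and (ii) are correct and cleanly argued: the identity $\I(V_1\times V_2)=\I_1^e+\I_2^e$ via tensoring the evaluation isomorphisms, and the degree-controlled lift of a Gr\"obner generating set of $\I(V_1)$ to one of $\I_1^e$ by collecting $\y$-monomial coefficients, are both fine. The genuine gap is exactly where you place it, in (iii), and the method you sketch for it does not go through. Division of $P$ by $\Grobner_1\cup\Grobner_2$ is only guaranteed to terminate in $0$ on ideal elements when that set is a Gr\"obner basis for the chosen monomial order; a Gr\"obner generating set need not be a Gr\"obner basis for \emph{any} order. \Cref{GrobnerAppendix} proves only the one implication you cannot use here (graded Gr\"obner basis $\Rightarrow$ Gr\"obner generating set), and the closing remark there already illustrates the failure mode: for the Gr\"obner generating set $\{x_1^2,\,x_1x_2-x_2^2\}$, the ideal element $x_2^3$ has a leading monomial divisible by neither leading monomial of the generators, so reduction stalls at a nonzero remainder. (That ideal is not a vanishing ideal of a finite set, but it shows the general principle you invoke is false, and you give no argument that vanishing ideals are special.) Note also the structural tension: if degree-controlled reduction to $0$ by $\Grobner_1\cup\Grobner_2$ were available, it would prove the whole lemma outright and (i)--(iii) would be superfluous. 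Replacing the $\Grobner_i$ by actual graded Gr\"obner bases would make the division argument sound (cross pairs have coprime leading monomials, so Buchberger's first criterion applies), but that only bounds the complexity of $V_1\times V_2$ by the sum of the Gr\"obner \emph{basis} sizes, which by the remark in \Cref{GrobnerAppendix} can exceed the Gr\"obner complexities; so it does not prove Item~1 as stated.

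Fact (iii) itself is true, and the way the paper supplies the missing degree control is worth comparing with your plan. For each $V_i$ the paper builds, degree by degree, a set $S_i$ of monomials that is a basis of the function space on $V_i$ and has the property that every polynomial is congruent modulo $\I(V_i)$ to an $\F_q$-combination of monomials of $S_i$ of no larger degree; Gr\"obner generating sets are then characterized by the existence of representations $P=\sum_{s\in S_i}c_ss+\sum_{g\in\Grobner_i}h_gg$ with all summands of degree at most $\deg(P)$. Multiplying the representations of $m_1(\x)$ and $m_2(\y)$ writes every monomial of $\F_q[\x,\y]$ as an $S_{12}$-combination plus multiples of $\Grobner_1\cup\Grobner_2$, all of degree at most $\deg(m_1m_2)$, and the linear independence of $S_{12}=\{s_1s_2\}$ as functions on $V_1\times V_2$ forces the $S_{12}$-part to vanish identically when $P\in\I(V_1\times V_2)$ --- which is precisely your degree-controlled splitting (iii). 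So either prove (iii) by this degree-filtered basis (or an equivalent dimension-count of $(\I_1^e)_{\le D}+(\I_2^e)_{\le D}$ inside $\I(V_1\times V_2)_{\le D}$), after which your (i)+(ii)+(iii) assembly is a valid alternative route, or abandon the splitting and argue as the paper does; the univariate monic reductions you propose help only when $V_1,V_2$ are grids and do not address the general case. (A last cosmetic point: in the degenerate case you get a nonzero constant in $\Grobner_1\cup\Grobner_2$, not literally $1$, but the conclusion is the same.)
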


\begin{proof}[Proof of \Cref{lemma:subaddgrob}]
	Let \(V\) be a subset of \(\F_{q}^{m}\).
	We first note,
	that we can a find a monomial basis \(S \subseteq \F_{q}[\x]\)
	for the space of functions \(\F_{q}^{V}\),
	such that any polynomial is equivalent to a linear combination of monomials
	from \(S\) of same or lesser degree.

	We argue as follows. Since \(V\) is finite,
	\(\F_{q}^{V}\) is spanned by polynomials and so is also spanned by monomials.
	Then we can create \(S\) iteratively by degree,
	by first setting \(S_{0} = \{1\}\) as a basis of
	\(\F_{q}^{V} \cap \cP_{\leq 0}(\F_{q}^{m})\),
	and \(S_{i+1}\) by extending \(S_{i}\) to a monomial basis of
	\(\F_{q}^{V} \cap \cP_{\leq i+1}(\F_{q}^{m})\),
	and setting \(S = \bigcup_{i} S_{i}\).
	Since any polynomial of degree \(i\) is contained in the span of \(S_{i}\),
	it must then be equivalent to a sum of monomials of degree at most \(i\),
	showing the desired property.

	Now let \(S\) be such a basis.
	Then any polynomial \(P\)
	as a function from \(V\) to \(\F_q\)
	is equivalent to a linear sum
	\begin{align*}
		P \equiv \sum_{s \in S} c_s s
	\end{align*}
	and so
	\begin{align*}
		P - \sum_{s \in S} c_s s \equiv 0
	\end{align*}
	is a polynomial of degree at most \(\deg(P)\) vanishing on \(V\).
	It follows that a subset
	\(\Grobner \subseteq \F_{q}[\x]\)
	is a Macaulay basis of \(\I(V)\)
	if and only if every polynomial \(P\)
	can	be written in the form
	\begin{align*}
		P =\sum_{s \in S} c_{s} s +  \sum_{g \in \Grobner} h_{g}g,
	\end{align*}
	where each non-zero summand has degree at most \(\deg(P)\).

	Now,
	let \(V_{1},V_2\) be subsets
	with \(S_{1}, \Grobner_1 \subseteq \F_{q}[\x]\)
	and
	\( S_{2}, \Grobner_2 \subseteq \F_{q}[\y]\) as above,
	and set
	\begin{align*}
		S_{12}:=
		\{ s_{1}s_{2} \; \vert \;
		s_{1} \in S_{1}, \; s_{2} \in S_{2}\}.
	\end{align*}
	Then \(S_{12}\) is a basis for functions
	\(V_{1} \times V_{2} \to \F_{q}\) with the above mentioned
	property.
	We will show that \(\Grobner_1 \cup \Grobner_2\)
	is a Macaulay basis for $V_1\times V_2$.
	Let \(m_{1}(\x)m_{2}(\y)\) be a monomial in \(\F_{q}[\x,\y]\),
	We can apply the above property to each monomial separately to get
	\begin{align*}
		m_{1}(\x)m_{2}(\y) =
		\left(
		\sum_{s \in S_{1}} c_{s}s +
		\sum_{g \in \Grobner_{1}} h_{g}g
		\right)
		\left(
		\sum_{s' \in S_{2}} c_{s'}s' +
		\sum_{g' \in \Grobner_{2}} h_{g'}g'
		\right) \\ \\
		= \sum_{ss' \in S_{12}} c_{s}c_{s'}ss' \; + \;
		\annot{\sum_{g \in \Grobner_{1}}  h_{g}g}{\leq\deg(m_1)} \;
		\left(
		\annot{\sum_{s' \in S_{2}} c_{s'}s'}{\leq\deg(m_2)} +
		\annot{\sum_{g' \in \Grobner_{2}} h_{g'}g'
		}{\leq\deg(m_2)}\right) \;
		+ \;
		\annot{\sum_{g' \in \Grobner_{2}} h_{g'}g'}{\leq \deg(m_2)} \;
		\annot{\left( \sum_{s \in S_{1}} c_{s}s\right)}{\leq \deg(m_1)}
		.
	\end{align*}

	We see this gives a representation of the monomial as a linear combination of terms from \(S_{12}\) and \(\Grobner_1 \cup \Grobner_2\)
	of degree at most \(\deg(m_1)+\deg(m_2)\).
	Since we can write a polynomial as a sum of monomials, we get that \(\Grobner_1 \cup \Grobner_2\) is a Macaulay basis. This proves the first item of the lemma.

	To show the second item of the lemma, note that any function
	\(f: V_{1} \times V_{2} \to \F_{q}\)
	can be written as a finite sum
	\begin{align*}
		f(\x,\y) = \sum_{i=1}^{n}
		a_{i}(\x) \cdot b_{i}(\y)
	\end{align*}
	where \(a_{i},b_{i}\) are functions from \(V_{1},V_{2}\)
	respectively to \(\F_{q}\).
	Since all the functions \(a_{i}\) can be
	represented as degree \(d_{1}\)-polynomials and likewise for \(b_{i}\),
	the above sum gives a representation of \(f\) as a polynomial of degree at
	most \(d_{1}+d_{2}\).
\end{proof}
\begin{corollary}\label{cor:Hhamm-wt-c}
	For any subset \(H \subseteq \F_{q}\), \(H^{m} \subseteq  \F_{q}^{m}\) has Macaulay
	complexity at most \(m\) and extension degree at most \(\left(|H|-1\right)\cdot m\).
\end{corollary}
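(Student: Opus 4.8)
The plan is to derive this immediately from the base case in Example~\ref{example:onedim} together with the subadditivity established in Lemma~\ref{lemma:subaddgrob}, via a one-line induction on $m$.

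First I would recall from Example~\ref{example:onedim} that a single copy $H \subseteq \F_q$ has Gröbner complexity at most $1$, witnessed by the singleton Gröbner generating set $\Grobner_{1} = \{\prod_{h \in H}(x - h)\}$ of $\I(H)$, and extension degree at most $|H|-1$, since interpolation produces a degree-$(|H|-1)$ polynomial extending any function $f : H \to \F_q$. Next, viewing $H^m$ as the Cartesian product $H \times H \times \cdots \times H$ of $m$ copies of $H$ inside $\F_q^m$, I would iterate Lemma~\ref{lemma:subaddgrob} (stated for a product of two sets) $m-1$ times, at each step writing $H^{k+1} = H^{k} \times H$.

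For the Gröbner complexity: taking $\Grobner_i = \{\prod_{h \in H}(x_i - h)\} \subseteq \F_q[x_i]$ as the generating set for the $i$-th factor, part~(1) of the lemma gives that $\bigcup_{i=1}^{m} \Grobner_i$ is a Gröbner generating set of $\I(H^m)$; since the $m$ polynomials involve pairwise disjoint variable sets they are distinct, so this set has cardinality exactly $m$, giving Gröbner complexity at most $m$. For the extension degree: part~(2) of the lemma gives that the extension degree of $H^{k}\times H$ is at most the extension degree of $H^{k}$ plus that of $H$, so by induction $H^m$ has extension degree at most $m(|H|-1)$.

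There is no genuine obstacle here: all the substantive work is already done in Example~\ref{example:onedim} and Lemma~\ref{lemma:subaddgrob}, and the only thing left to check is the trivial bookkeeping that $m$ singleton generating sets in disjoint variables union to a set of size $m$, and that the subadditivity bound telescopes (adding $|H|-1$ exactly $m-1$ times, starting from $|H|-1$) to $m(|H|-1)$.
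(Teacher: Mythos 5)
Your proposal is correct and matches the paper's proof, which likewise obtains the corollary by combining \Cref{example:onedim} with \Cref{lemma:subaddgrob}; your only addition is to spell out the routine iteration of the two-factor subadditivity statement across the $m$ factors. Nothing further is needed.
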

\begin{proof}
	Combine \Cref{lemma:subaddgrob}
	and \Cref{example:onedim}.
\end{proof}
\begin{corollary}\label{cor:hamm-wt-c}
	For $c\in N$ that divides $m$, the subset \(\left(\{0,1\}^{m/c}_{\leq 1}\right)^{c} \subseteq \F_{q}^{m}\)
	has Macaulay complexity at most \(\frac{m^{2}+ cm}{2c}\)
	and extension degree at most \(c\).
\end{corollary}
\begin{proof}
	Combine \Cref{lemma:subaddgrob}
	and \Cref{example:hammingweightone}.
\end{proof}
\begin{remark}
	The sets given by~\Cref{cor:hamm-wt-c} and~\Cref{example:ball} have similar extension degree (i.e., $c$) and size (i.e., $\Theta_c(m^c)$). We note however that the Macaulay complexity bound for the latter family grows with $c$ as $\Theta(m^{c+1})$ (unlike the former family which is always $O(m^2)$), and so the set given by~\Cref{example:ball} does not yield PCPs matching the parameters we get by using~\Cref{cor:hamm-wt-c} instead.
\end{remark}

The following two lemmas show that given a subset \(V\) of \(\F_{q}^{m}\),
we can efficiently compute its extension degree,
find a polynomial extending a function \(f: V \to \F_{q}\),
and compute a minimal Macaulay basis.
\begin{lemma}[Computing extension degree and low-degree extensions]
	\label{lem:alg-extension-degree}
	For every prime power $q$ and $m \in \N$, there exists:
	\begin{enumerate}

		\item An algorithm, which takes a set of \(n\) points \(V\) in \(\F_{q}^{m}\)
		      as input and gives the extension degree \(d\) as output in time
		      \(\poly\left(n,\log q, \binom{m+d}{m}\right)\).
		\item An algorithm, which takes a set of \(n\) points \(V\) in \(\F_{q}^{m}\)
		      and a function \(f: V \to \F_{q}\) as input
		      and gives a polynomial which extends \(f\) as output in time
		      \(\poly\left(n,\log q, \binom{m+d+1}{m}\right)\).
	\end{enumerate}
\end{lemma}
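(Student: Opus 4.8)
The plan is to reduce both tasks to a single linear-algebra computation over $\F_q$. Fix a degree bound $d$ and let $N_d = \binom{m+d}{m}$ be the number of monomials of total degree at most $d$ in $x_1,\dots,x_m$. The evaluation map sending a polynomial $P = \sum_\alpha c_\alpha x^\alpha \in \cP_d(\F_q^m)$ to its restriction $P|_V \in \F_q^V$ is $\F_q$-linear, and with respect to the monomial basis on the domain and the point basis on the codomain it is represented by the $n \times N_d$ matrix $M_d$ with entries $M_d[a,\alpha] = a^\alpha$, where $a$ ranges over $V$ and $\alpha$ over exponent vectors of degree at most $d$. By the definition of extension degree (\Cref{defn:low-deg-extension}), the extension degree of $V$ is exactly the least $d$ for which this map is surjective, i.e.\ for which $\rank(M_d) = n$.

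For part~1, I would iterate $d = 0, 1, 2, \dots$: at each stage, enumerate the $N_d$ monomials, build $M_d$ (each entry $a^\alpha$ costs $\poly(d,m,\log q)$ field operations, so forming the matrix costs $\poly(n, N_d, \log q)$), and compute $\rank(M_d)$ by Gaussian elimination over $\F_q$ in $\poly(n, N_d, \log q)$ time. Output the first $d$ with $\rank(M_d) = n$. Termination is guaranteed because the extension degree is finite; concretely it is at most $n-1$, since for each $a \in V$ one can take a product of at most $n-1$ affine forms (one per $b \in V \setminus \{a\}$, using a coordinate where $a$ and $b$ differ) that is $1$ at $a$ and $0$ on $V \setminus \{a\}$, and then Lagrange-combine these. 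Since $\rank(M_d)$ is non-decreasing in $d$ (deleting the degree-$(d{+}1)$ columns of $M_{d+1}$ only lowers rank) and capped at $n$, the loop runs for exactly $d^\ast + 1$ stages, where $d^\ast$ is the extension degree, and $d^\ast + 1 \le N_{d^\ast}$, so the total running time is $\poly(n, \log q, N_{d^\ast}) = \poly\!\left(n, \log q, \binom{m+d^\ast}{m}\right)$, as required.

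For part~2, having computed $d = d^\ast$ via part~1 (in time within the stated bound), I would build $M_d$ as above and solve the linear system $M_d c = (f(a))_{a\in V}$ for a coefficient vector $c \in \F_q^{N_d}$ by Gaussian elimination; a solution exists precisely because $M_d$ has full row rank $n$. The output is the polynomial $\widehat f = \sum_\alpha c_\alpha x^\alpha$, which has degree at most $d$ and extends $f$ by construction. All steps cost $\poly(n, \log q, N_d) \le \poly\!\left(n, \log q, \binom{m+d+1}{m}\right)$.

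The routine parts here are just bookkeeping — enumerating monomials, accounting for the $\poly(\log q)$ cost of $\F_q$-arithmetic, and invoking Gaussian elimination. The one point that genuinely needs an argument is that the search in part~1 halts after a number of rounds polynomial in the \emph{output} $d^\ast$ (rather than some a priori worst-case bound), and this is exactly what the monotonicity of $\rank(M_d)$ together with the $d^\ast \le n-1$ bound gives.
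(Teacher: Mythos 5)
Your proposal is correct and follows essentially the same route as the paper: build the evaluation matrix of monomials at the points of $V$, take the extension degree to be the least $d$ at which this matrix attains rank $n$, and obtain the extension by solving the resulting linear system (the paper phrases this via a right inverse of $E_d$, you solve the system directly, which is equivalent). The only cosmetic differences are that the paper saves work by reusing the Gaussian elimination across degrees (each $E_i$ being a submatrix of the next), while you instead bound the number of rounds via monotonicity of the rank and the explicit bound $d^\ast \le n-1$, which the paper leaves implicit.
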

\begin{proof}
	We first show \(1.\)
	\paragraph{Computing extension degree}
	For every \(i\), fix an ordering of the \(\binom{m+i}{i}\)
	monomials of degree at most	\(i\) and also an ordering of the points in $V.$

	Then for every \(i\),
	we can calculate the evaluation matrix \(E_{i}\)
	of dimension \(n \times \binom{m+i}{m}\) where
	\begin{align*}
		(E_{i})_{(jk)} =
		m_{k}(v_{j}),
	\end{align*}
	$v_j$ is the $j$th point of $V$  and $m_k$ the $k$th monomial.
	Then if the rank of \(E_{i}\) is equal to \(n\),
	the monomials of degree at most \(i\)
	span the functions on \(V\),
	and so the extension degree is \(i\).

	We can then find the extension degree
	by calculating the rank of \(E_{i}\) for every \(i\), until \(E_{i}\) has rank \(n\).
	Every \(E_{i}\) is a submatrix of \(E_{i+j}\)
	so we can reuse the computation for every \(i\).
	It follows that the algorithm performs Gaussian elimination on a single
	\(n \times \binom{m+d}{d}\) matrix.
	\paragraph{Finding extending polynomial}
	We first find the extension degree using the previous step.
	Then since \(E_{d}\) has rank \(n\),
	we can find a right inverse \(A\) such that \(E_{d}\cdot A = I_{n}\).
	If we represent the function \(f: V \to \F_{q}\)
	as a \(n\)-dimensional vector \(y\),
	then \(A \cdot y\) gives a vector in the monomial basis,
	which represents a polynomial extending \(y\) since
	\begin{align*}
		E_{d} \cdot \left(A \cdot y\right) = \left(E_{d} \cdot A\right) \cdot y = y.
	\end{align*}
	It follows that the algorithm performs
	one calculation of a right inverse,
	and one matrix-vector multiplication.
\end{proof}
\begin{lemma}[Computing a smallest Macaulay basis]
	\label{lem:alg-macaulay}
	For every prime power $q$ and $m \in \N$,
	there exists an algorithm
	which takes a set of \(n\) points \(V\) in \(\F_{q}^{m}\) as input
	and gives a minimal (w.r.t. size) Macaulay basis of \(\I(V)\) as output in the monomial basis in time
	\(\poly \left(n,\log q,\binom{m+d+1}{m}\right)\),
	where \(d\) is the extension degree of \(V\).
\end{lemma}
\begin{proof}
	Given a set of polynomials \(S\),
	define \(S_{i}\) to be the subset of \(S\)
	of polynomials of degree exactly \(i\),
	and define \(S_{\leq i}\)
	to be the subset of polynomials of degree $\leq i$.

	We will then construct the subsets of the minimal Macaulay basis
	\(\Grobner_{i}\) inductively.
	We first define \(\Grobner_{0} = \emptyset\),
	and then set \(\Grobner_{i}\)
	to be any basis of the quotient space
	\(\I(V)_{\leq i} / L_{i}\), where
	\begin{align*}
		L_{i} =
		\left\{\sum_{j} h_{j} \cdot g_{j}
		\; \middle| \;
		h_{j} \in \F_{q}[x_{1}, \dots, x_{m}],
		g_{j} \in \I(V)_{\leq i-1}, \deg(g_{j}h_{j}) \leq i\right\}.
	\end{align*}
	We repeat this step until \(i = d+1\),
	so in each step, we check whether or not the
	the monomials of degree at most \(i\) span all functions on \(|V|\),
	to know when to stop,
	as expressed in the following algorithm:
	\begin{algobox}
		\begin{algorithm}[H]
			\caption{Constructing Macaulay basis.}
			\KwIn{Subset \(V\) of \(\F_{q}^{m}\) of size \(n\)}
			\KwOut{A Macaulay basis \(\Grobner\) of \(\I(V)\)}

			\vspace{4mm}

			\(\Grobner, \Grobner_{0}, \mathcal{A}_{0}, \mathcal{B}_{0} \gets \emptyset\).

			\For{\(i = 1, \dots\)}{
				Compute the evaluation matrix \(E_i \in \F_q^{n\times \binom{m+i}{i}}\)
				of all monomials of degree at most \(i\) on the points in \(V\)

				Compute a basis \(\mathcal{A}_{i}\) of \(\ker(E_i)\subset \F_q^{\binom{m+i}{i}}\)

				Compute a basis \(\mathcal{B}_{i}\) of
				\(\mathcal{A}_{i-1} +\Span \left(x_{i} a:\ a\in \mathcal{A}_{i-1},\ 1 \leq i\leq m\right)\)

				Compute a minimal basis \(\Grobner_{i}\) so that
				\(\Span \left(\Grobner_i\right) +\Span \left(\mathcal{B}_{i}\right) =
				\Span \left(\mathcal{A}_{i}\right)\)

				\(\Grobner \gets \Grobner \cup \Grobner_{i}\)

				\If{
					\(\rank E_{i-1} = n\)
				}{
					\textbf{break}
				}
			}
			\Return \(\Grobner\)
		\end{algorithm}
	\end{algobox}

	\paragraph{Correctness}
	We will now show any set \(\Grobner\)
	is a minimal Macaulay basis
	if and only if
	\(\Grobner_{i}\) is a basis for the quotient space
	\(\I(V)_{\leq i} / L_{i}\).
	To see this,
	if every \(\Grobner_{i}\) is a spanning set for
	\(\I(V)_{\leq i} / L_{i}\),
	then any polynomial \(P \in \I(V)_{\leq i}\)
	can be written as
	\begin{align*}
		P \equiv \sum_{g \in \Grobner_i} c_{g} \cdot g
		\mod
		L_{i},
	\end{align*}
	which is equivalent to
	\begin{align*}
		P =
		\sum_{g \in \Grobner_i} c_{g} \cdot g
		+
		\sum_{g \in \Grobner_{\leq i-1}} h_{g} \cdot g
	\end{align*}
	where \(c_{g}\) are constants and \(\deg(h_{g}\cdot g) \leq i\).
	This is the condition for \(\Grobner\) being a Macaulay basis,
	so \(\Grobner\) is a Macaulay basis if and only if
	each \(\Grobner_i\) spans \(\I(V)_{\leq i} / L_{i}\).

	Furthermore, note that in the above characterization,
	the conditions on each degree \(i\) is independent of each other,
	so \(\Grobner\) is minimal if and only if each \(\Grobner_{i}\)
	is minimal, which is equivalent to each of them being a basis.

	From the above, we see that a minimal Macaulay basis does not
	contain any degree \(i\) polynomials if
	\begin{align*}
		\I(V)_{\leq i} = L_{i}.
	\end{align*}
	We show this is true for any \(i \geq d +2\).
	Since any monomial \(m\) of degree \(i-1\),
	is equivalent to a polynomial \(P\) of degree \(i-2\), as \(i-2 \geq d\),
	it follows that \(x_{i}m - x_{i}P\) is in \(L_{i}\) for any
	variable \(x_{i}\). In particular, any polynomial of degree $i$ is equal to a polynomial of degree at most $i-1$ modulo $L_i.$
	Together with the fact that \(\I(V)_{i-1} \subseteq L_{i}\),
	we have \(\I(V)_{i} = L_{i}\).

	\paragraph{Runtime}
	To analyze the runtime,
	in each step of the loop
	we perform Gaussian elimination on matrices of size at most
	\(n \times \binom{m +d+1}{m}\),
	so each step takes at most time
	\(\poly(n,\log q,\binom{m+d+1}{m})\),
	and so the total runtime must also be
	\(\poly(n,\log q,\binom{m+d+1}{m})\).
\end{proof}

\begin{claim}
	\label{clm:rt-gauss}
	The running times in \Cref{lem:alg-extension-degree} and \Cref{lem:alg-macaulay} are both $q^{O(m)}.$
\end{claim}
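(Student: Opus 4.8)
The plan is to verify that each of the three quantities that the running times in \Cref{lem:alg-extension-degree} and \Cref{lem:alg-grobner} are polynomial in --- namely $n$, $\log q$, and the binomial coefficient $\binom{m+d}{m}$ (respectively $\binom{m+d+1}{m}$) --- is individually bounded by $q^{O(m)}$, and then to use the trivial fact that a fixed polynomial evaluated on inputs of size $q^{O(m)}$ is again $q^{O(m)}$.

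Two of the three bounds are immediate: since $V \subseteq \F_q^m$ we have $n = |V| \leq q^m = q^{O(m)}$, and $\log q \leq q = q^{O(m)}$ (we may assume $m \geq 1$). The only point that requires an argument is the bound on the binomial coefficient, for which I first need to bound the extension degree $d$ of $V$. Here I would invoke the standard fact that every function $\F_q^m \to \F_q$ is computed by a polynomial whose individual degree in each variable is at most $q-1$ (reduce each $X_i$ modulo $X_i^q - X_i$), hence of total degree at most $m(q-1)$; since any $f : V \to \F_q$ can first be extended arbitrarily to a function on all of $\F_q^m$, this shows the extension degree satisfies $d \leq m(q-1)$. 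Consequently $m + d + 1 \leq m(q+1)$, and the bound $\binom{N}{k} \leq (eN/k)^k$ gives
\[
\binom{m+d+1}{m} \;\leq\; \binom{m(q+1)}{m} \;\leq\; \bigl(e(q+1)\bigr)^m \;\leq\; (6q)^m \;\leq\; q^{4m} \;=\; q^{O(m)},
\]
using $q \geq 2$; the same bound holds a fortiori for $\binom{m+d}{m}$.

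Combining these, every quantity inside the ``$\poly(\cdot)$'' in both lemmas is at most $q^{O(m)}$, so the running times are $\poly\bigl(q^{O(m)}\bigr) = q^{O(m)}$, as claimed. I do not expect any genuine obstacle here: the one substantive ingredient is the crude bound $d \leq m(q-1)$ on the extension degree, and everything else is routine estimation.
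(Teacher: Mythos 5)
Your proposal is correct and follows essentially the same route as the paper: bound $n=|V|\leq q^m$, bound the extension degree by that of all of $\F_q^m$ (i.e.\ roughly $mq$), and finish with the standard estimate $\binom{N}{k}\leq (eN/k)^k$ on the binomial coefficient. The only difference is cosmetic --- you spell out the $d\leq m(q-1)$ argument and the $\log q$ term explicitly, which the paper leaves implicit.
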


\begin{proof}
	This follows from the fact that $n= |V|\leq q^m$, the extension degree $d$ is at most $q(m-1)$ (the extension degree of $\F_q^{m}$) and the following binomial estimate
	\[
		\binom{m+d+1}{m} \leq \binom{mq+1}{m} \leq \left(\frac{emq+e}{m}\right)^m \leq e^{m}(q+1)^m.
	\]
\end{proof}

\section{Zero-on-Variety Test}\label{sec:vanishing-certificate}
In this subsection, we discuss an efficient test to decide whether a given oracle vanishes on a subset of points/variety.
Let $V \subset \F_{q}^{m}$ be a set with a Macaulay basis \(\Grobner(V)\) with extension degree $d_{V}$ and complexity $k$ (see \Cref{sec:generatingset} for formal definitions).
Let $P:\F_{q}[x_{1}, \ldots, x_{m}] \to \F_{q}$ be a polynomial of degree $d$ and say $d \leq d_{V}$. Informally, the main goal of this section is:
\begin{center}
	\textit{Design an efficient standard verifier to decide whether $P$ is zero at all points of $V$.}
\end{center}

\paragraph*{}Before we state our standard verifier, let us first discuss what could constitute as proof for the vanishing of $P|_{V}$.
Suppose $P(\mathbf{x})$ is a degree $d$ polynomial and $P|_{V} \equiv 0$.
Using the definition of the ideal $\mathbb{I}(V)$ and $\Grobner(V)$, we have:\newline
\begin{gather*}
	P|_{V} \equiv 0 \quad \iff \quad P \in \mathbb{I}(V) \\
	\iff \text{There exists polynomials } \; h_{g}\in \F_{q}[x_{1},\ldots,x_{m}] \; \text{ for every } \; g \in \Grobner(V) \; \text{ such that}
\end{gather*}
\begin{equation}\label{eqn:vanishing-certificate}
	P(\mathbf{x}) \; = \; \sum_{g \in \mathfrak{G}(V)} \; h_{g}(\mathbf{x}) \cdot g(\mathbf{x}), \quad \text{ where for every $g$, } \, \deg(h_{g}(\mathbf{x}) \cdot g(\mathbf{x})) \leq d.
\end{equation}
We will refer to the ordered tuple $(h_{g}:g\in \mathfrak{G}(V)) \in (\F_{q}[x_{1},\ldots,x_{m}])^{k}$ in \Cref{eqn:vanishing-certificate}
as a \emph{vanishing certificate}\footnote{There could be multiple vanishing certificates for $P|_{V}$ satisfying \Cref{eqn:vanishing-certificate}.
We only use the fact that there always exists a vanishing certificate where \emph{each} polynomial $h_{g}$ has degree $\leq \deg(P)$.
We are guaranteed of the existence of such a vanishing certificate due to $\Grobner(V)$.} for the polynomial $P|_{V}$.\\

\noindent
\begin{definition}[Vanishing Certificate Polynomial]\label{defn:vanishing-cert-poly}
	Let $P$, $V \subseteq \F_{q}^{m}$ and $\Grobner(V)$ as defined above. A \emph{vanishing certificate polynomial} $\mathcal{M}_{V}(P): \F_{q}^{m+k} \to \F_{q}$ is a polynomial of degree $\leq d$ satisfying the following conditions:
	\begin{itemize}
		\item There exists polynomials $h_{g} \in \F_{q}[x_{1},\ldots,x_{m}]$ for every $g \in \mathfrak{G}(V)$ such that
		      \begin{align*}
			      \mathcal{M}_{V}(P)(\mathbf{x},\mathbf{y}) \; = \; \sum_{g \in \Grobner(V)} \; h_{g}(\mathbf{x}) \cdot y_{g}.
		      \end{align*}
		\item If we substitute $g(\mathbf{x})$ for $y_{g}$ for every $g \in \Grobner(V)$ in the polynomial $\mathcal{M}_{V}(P)(\mathbf{x},\mathbf{y})$, it should be the polynomial $P(\mathbf{x})$, i.e.
		      \begin{align*}
			      P(\mathbf{x}) \; = \;  \mathcal{M}_{V}(P)(\mathbf{x}, \; (g(\mathbf{x}) \, : \, g \in \Grobner(V) )  ),
		      \end{align*}
		      where the above equality is equality as polynomials.
	\end{itemize}
\end{definition}

\noindent
Whenever the subset $V \subseteq \F_{q}^{m}$ is clear from the context, we will use $\mathcal{M}_{P,\mathrm{lines}}^{(d)}$ to refer to the $d^{th}$ lines table for $\mathcal{M}_{V}(P)$ (see \Cref{defn:lines-table} for a formal definition of the lines table).\\
We record our discussion above using \Cref{defn:vanishing-cert-poly} in the following observation. We use the same notation as in \Cref{defn:vanishing-cert-poly}.\\

\begin{observation}\label{obs:valid-proof-vanishing}
	Let $P(\mathbf{x})$ be a degree $d$ polynomial. Then $P|_{V} \equiv 0$ if and only if there exists a vanishing certificate polynomial $\mathcal{M}_{V}(P)$ of degree $\leq d$ (see \Cref{defn:vanishing-cert-poly}). We would like to emphasize that \Cref{defn:vanishing-cert-poly} has a degree restriction on $\mathcal{M}_{V}(P)$ and this will be crucial for us, as we will see soon.
\end{observation}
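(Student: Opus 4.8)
The statement is essentially a definitional repackaging of the chain of equivalences displayed immediately before it, namely $P|_V \equiv 0 \iff P \in \I(V) \iff$ ``there exist $h_g$ with $P = \sum_{g\in\Grobner(V)} h_g g$ and $\deg(h_g g) \le \deg(P)$'', now phrased through the vanishing certificate polynomial of \Cref{defn:vanishing-cert-poly}. So the plan is simply to prove both directions by unwinding the two defining properties of $\mathcal{M}_V(P)$ against the defining property of the Gröbner generating set $\Grobner(V)$.

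For the ``if'' direction, suppose a vanishing certificate polynomial $\mathcal{M}_V(P)$ of degree $\le d$ exists. By its first defining bullet, $\mathcal{M}_V(P)(\mathbf{x},\mathbf{y}) = \sum_{g\in\Grobner(V)} h_g(\mathbf{x})\, y_g$ for some $h_g \in \F_q[x_1,\ldots,x_m]$; by the second bullet, substituting $y_g = g(\mathbf{x})$ yields the polynomial identity $P(\mathbf{x}) = \sum_{g\in\Grobner(V)} h_g(\mathbf{x})\, g(\mathbf{x})$. Since each $g \in \Grobner(V) \subseteq \I(V)$ vanishes on $V$, so does every term, hence $P \in \I(V)$, i.e. $P|_V \equiv 0$. (Note the degree hypothesis is not even needed here.)

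For the ``only if'' direction, suppose $P|_V \equiv 0$, i.e. $P \in \I(V)$. Apply the defining property of the Gröbner generating set $\Grobner(V)$ (\Cref{defn:grobner-complexity}) to obtain polynomials $h_g \in \F_q[x_1,\ldots,x_m]$ with $P = \sum_{g\in\Grobner(V)} h_g g$ and $\deg(h_g g) \le \deg(P) \le d$ for every $g$. Define $\mathcal{M}_V(P)(\mathbf{x},\mathbf{y}) := \sum_{g\in\Grobner(V)} h_g(\mathbf{x})\, y_g$. The first bullet of \Cref{defn:vanishing-cert-poly} holds by construction, and the second holds because substituting $y_g = g(\mathbf{x})$ recovers $\sum_g h_g g = P$. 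It remains only to check $\deg(\mathcal{M}_V(P)) \le d$, which is where the one substantive observation is used: since $V$ is nonempty, $1 \notin \I(V)$, so no $g \in \Grobner(V)$ is a nonzero constant (and any zero generator can be dropped or assigned $h_g = 0$); hence $\deg(g) \ge 1$ for each contributing $g$, and therefore $\deg(h_g\, y_g) = \deg(h_g) + 1 \le \deg(h_g) + \deg(g) \le d$, giving $\deg(\mathcal{M}_V(P)) \le d$ as required.

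I do not expect a real obstacle here — this is a bookkeeping lemma tying together the two prior definitions. The only point requiring a sentence of care is the degree check in the forward direction: one must note that replacing each generator $g$ (of degree $\ge 1$) by a fresh linear variable $y_g$ cannot increase the total degree, which is precisely why the degree restriction imposed in \Cref{defn:vanishing-cert-poly} is attainable from a Gröbner generating set (and would generally fail for an arbitrary generating set).
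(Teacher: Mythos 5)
Your proof is correct and follows essentially the same route as the paper, which treats this observation as a direct repackaging of the preceding discussion ($P|_V\equiv 0 \iff P\in\I(V) \iff$ the Gr\"obner generating set decomposition of \Cref{eqn:vanishing-certificate}) into \Cref{defn:vanishing-cert-poly}. Your extra check that $\deg(\mathcal{M}_V(P))\le d$ — using that $V\neq\emptyset$ forces every nonzero $g\in\Grobner(V)$ to have degree at least $1$ — is a detail the paper leaves implicit, and it is handled correctly.
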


\noindent
\Cref{obs:valid-proof-vanishing} says that if a verifier wants to test whether $P$ vanishes on $V$, a valid proof $\Pi$ is:
\begin{equation}\label{eqn:correct-proof-vanishing-test}
	\Pi \; = \; \paren{\mathcal{M}_{V}(P), \; \mathcal{M}_{P,\mathrm{lines}}^{(d)}}.
\end{equation}
In other words, our verifier for the Zero-on-Variety test will accept the above $\Pi$ with probability $1$. For a degree $d$ polynomial $P$ where $P|_{V} \notequiv 0$, no vanishing certificate polynomial exists, and we require our verifier to reject every ``claimed'' proof $\Pi'$ with high probability.

We next describe a standard verifier $\vvanish$ (recall the definition of standard verifier in \Cref{defn:standard-verifier}) with oracle access to a function $f$ and an arbitrary string $\Pi$ to decide whether $f|_{V} \equiv 0$, in \Cref{algo:vanishing-grid}. For convenience in writing, we define the following map:
\begin{gather*}
	\varphi: \F_{q}^{m} \; \to \; \F_{q}^{k} \\
	(z_{1},\ldots,z_{m}) \; \mapsto \; (g(\mathbf{z}) : g \in \mathfrak{G}(V))
\end{gather*}

\begin{algobox}
	\begin{algorithm}[H]
		\caption{Zero-on-Variety Test for $V$: $\vvanish^{(\cdot)}_{\cdot}$}
		\label{algo:vanishing-grid}

		\DontPrintSemicolon

		\KwIn{Degree parameter $d$, subset $V \subseteq \F_{q}^{m}$, Macaulay basis $\Grobner(V)$,\newline
		randomness $\mathbf{a},\mathbf{b} \in \F_{q}^{m+k}$, $\bm{\alpha} \in \F_{q}^{m}$, element $t \in \F_{q}^{\times}$, and \newline oracle access to $(f,\mathcal{M},\mathcal{M}')$ where $f \in (\F_{q})^{q^{m}}$, $\mathcal{M} \in (\F_{q})^{q^{m+k}}$, and $\mathcal{M}' \in (\F_{q}^{d})^{q^{2(m+k)}}$.}

		\vspace{3mm}
		Run $\vldt^{\mathcal{M},\mathcal{M}'}_{d}(;\mathbf{a},\mathbf{b},t)$ (see \Cref{algo:low-degree-test}) \tcp*{Two queries to $(\mathcal{M},\mathcal{M}')$}

		\vspace{3mm}

		\If{$\vldt^{\mathcal{M},\mathcal{M}'}_{d}(;\mathbf{a},\mathbf{b},t)$ returns \texttt{Reject}}{
			\vspace{2.5mm}
			\Return{\texttt{Reject}}
		}

		\vspace{2mm}

		Run $\corr_{d}^{\mathcal{M},\mathcal{M}'}((\bm{\alpha}, \mathbf{0}); \mathbf{a},t)$ (see \Cref{algo:local-correction}) \tcp*{Two queries to $(\mathcal{M},\mathcal{M}')$}

		\vspace{3mm}

		\If{$\corr_{d}^{\mathcal{M,\mathcal{M}'}}((\bm{\alpha}, \mathbf{0}); \mathbf{a},t) \neq 0$}{
		\vspace{2mm}
		\Return{\texttt{Reject}}
		}

		\vspace{2mm}

		Run $\corr_{d}^{\mathcal{M},\mathcal{M}'}((\bm{\alpha}, \varphi(\bm{\alpha})) ; \mathbf{a},t)$ \tcp*{Two queries to $(\mathcal{M},\mathcal{M}')$ and time to evaluate $\varphi(\bm{\alpha})$ is $\bigO(k \cdot q^{\bigO(m)})$}
		\vspace{2mm}
		Query $f[\bm{\alpha}]$ \tcp*{One query to $f$}

		\vspace{3mm}

		\If{$\corr_{d}^{\mathcal{M},\mathcal{M}'}((\bm{\alpha}, \varphi(\bm{\alpha})) ; \mathbf{a},t) \neq f[\bm{\alpha}]$}{
		\vspace{2.5mm}
		\Return{\texttt{Reject}}
		\Else{
			\Return{\texttt{Accept}}
		}
		}

	\end{algorithm}
\end{algobox}

\begin{thmbox}
	\begin{lemma}[Zero-on-Variety Test]\label{lemma:vanishing-test}
		There exists an absolute constant $C > 0$ such that for every $d,q \in \mathbb{N}$ satisfying $q > Cd^{3}$,
		for every subset $V \subset \F_{q}^{m}$ with extension degree $\leq d$ and Macaulay complexity $k$, the following holds.
		The standard verifier $\vvanish$ satisfies the following properties:\\

		\noindent
		Let $r = (\mathbf{a},\mathbf{b},t,\bm{\alpha})$. Then,\\

		\begin{enumerate}
			\item \textbf{Completeness}: For every degree $d$ polynomial $f : \F_{q}^{m} \to \F_{q}$ with $f|_{V} \equiv 0$, there exists a proof $\Pi$ over alphabet $\F_{q}^{d+1}$ and size $\bigO(q^{2(m+k)})$ such that the following holds:
			      \begin{align*}
				      \Pr_{r}[\vvanish^{(f,\Pi)}_{d}(;r) \, \text{ returns } \, \texttt{Accept}] \; = \; 1.
			      \end{align*}

			\item \textbf{Soundness}: Let $f : \F_{q}^{m} \to \F_{q}$ be any function for which there exists a unique
			      degree $d$ polynomial $P(\mathbf{x})$ such that $\delta(f,P) = \delta < 0.01$ and $P|_{V} \notequiv 0$. Then for every string $\Pi$, the following holds:
			      \begin{align*}
				      \Pr_{r}[\vvanish^{(f,\Pi)}_{d}(;r) \, \text{ returns } \, \texttt{Reject}] \; \ge \; 0.04.
			      \end{align*}
			\item \textbf{Efficiency}: $\vvanish$ uses $\bigO((m + k) \log q)$ bits of randomness, makes $7$ oracle queries to $(f,\Pi)$, and runs in time $\bigO(k \cdot q^{\bigO(m)})$.
		\end{enumerate}
	\end{lemma}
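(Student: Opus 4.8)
The plan is to establish the three claims in turn; completeness and efficiency are direct, and soundness carries the content. For \textbf{completeness}, given a degree-$d$ polynomial $f$ with $f|_V\equiv 0$, invoke \Cref{obs:valid-proof-vanishing} to obtain a vanishing certificate polynomial $\mathcal{M}_V(f)$ of degree $\le d$ and take $\Pi=(\mathcal{M}_V(f),\mathcal{M}_{f,\mathrm{lines}}^{(d)})$; the inner low-degree test accepts a genuine degree-$\le d$ polynomial and its lines table with probability $1$ by \Cref{thm:low-degree-testing}(1), and on them the local corrector returns exact values by \Cref{thm:local-correction}(1), after which one only needs the two defining identities of \Cref{defn:vanishing-cert-poly}, namely $\mathcal{M}_V(f)(\bm{\alpha},\mathbf{0})=\sum_{g}h_g(\bm{\alpha})\cdot 0=0$ and $\mathcal{M}_V(f)(\bm{\alpha},\varphi(\bm{\alpha}))=\sum_g h_g(\bm{\alpha})g(\bm{\alpha})=f(\bm{\alpha})=f[\bm{\alpha}]$, so every check succeeds. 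For \textbf{efficiency}, the randomness $\mathbf{a},\mathbf{b}\in\F_q^{m+k}$, $\bm{\alpha}\in\F_q^m$, $t\in\F_q^\times$ amounts to $\bigO((m+k)\log q)$ bits, the queries are two each for the low-degree test and the two corrector calls to $(\mathcal{M},\mathcal{M}')$ plus one to $f$, i.e. seven, and the running time is dominated by computing $\varphi(\bm{\alpha})$, i.e. evaluating the $k$ polynomials of $\Grobner(V)$---which, taking $\Grobner(V)$ of minimal size so that each element has degree $\le d+1$ by the argument in \Cref{lem:alg-grobner} together with the hypothesis that the extension degree is $\le d$, costs $\bigO(k\cdot q^{\bigO(m)})$ by the binomial estimate of \Cref{clm:rt-gauss}.

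\textbf{Soundness: setup.} Assume for contradiction that $\vvanish$ rejects with probability strictly less than $0.04$ over $r=(\mathbf{a},\mathbf{b},t,\bm{\alpha})$. Since the verifier rejects whenever the inner low-degree test does, \Cref{thm:low-degree-testing}(2) furnishes a degree-$\le d$ polynomial $Q:\F_q^{m+k}\to\F_q$ with $\gamma:=\delta(\mathcal{M},Q)\le 0.16$, and $Q$ is unique because $0.16$ is below half the minimum distance $1-d/q$ of $\mathcal{P}_d(\F_q^{m+k})$ (which holds since $q>Cd^3$). Then by \Cref{thm:local-correction}(2), for every fixed point $\mathbf{p}\in\F_q^{m+k}$ the call $\corr_d^{\mathcal{M},\mathcal{M}'}(\mathbf{p};\mathbf{a},t)$---over the uniform choice of slope $\mathbf{a}$ and $t$, which is independent of $\bm{\alpha}$---either rejects or outputs $Q(\mathbf{p})$, except with probability $\eta:=2\sqrt{\gamma}+d/(q-1)$; note $\eta\le 0.8+o(1)$, so $1-\eta\ge 0.2-o(1)>0.04$, and this constant slack is what drives everything.

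\textbf{Soundness: the two polynomial identities and the collapse.} Using this, I first show $Q(\mathbf{x},\mathbf{0})\equiv 0$: if it were a nonzero polynomial then $Q(\bm{\alpha},\mathbf{0})\ne 0$ for a $(1-d/q)$-fraction of $\bm{\alpha}$ by \Cref{thm:odlsz}, and for each such $\bm{\alpha}$ the verifier's check ``$\corr_d^{\mathcal{M},\mathcal{M}'}((\bm{\alpha},\mathbf{0});\mathbf{a},t)\ne 0$'' rejects unless the corrector returns $0\ne Q(\bm{\alpha},\mathbf{0})$, which happens with probability at most $\eta$, so $\Pr_r[\vvanish\text{ rejects}]\ge(1-d/q)(1-\eta)>0.04$, a contradiction. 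Next I show $R(\mathbf{x}):=Q(\mathbf{x},\varphi(\mathbf{x}))$---a polynomial of degree at most $d(d+1)<q$, using $\deg g\le d+1$ and $q>Cd^3$---satisfies $R\equiv P$: if not, then $R(\bm{\alpha})\ne P(\bm{\alpha})$ for all but a $d(d+1)/q$-fraction of $\bm{\alpha}$ by \Cref{thm:odlsz}, so combining with $\delta(f,P)<0.01$ we get $Q(\bm{\alpha},\varphi(\bm{\alpha}))=R(\bm{\alpha})\ne f[\bm{\alpha}]$ for at least a $0.98$-fraction of $\bm{\alpha}$, and for each such $\bm{\alpha}$ the verifier's final check rejects unless the corrector returns $f[\bm{\alpha}]\ne Q(\bm{\alpha},\varphi(\bm{\alpha}))$, again with probability at most $\eta$, so $\Pr_r[\vvanish\text{ rejects}]\ge 0.98(1-\eta)>0.04$, again a contradiction. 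Finally, since every $g\in\Grobner(V)$ lies in $\I(V)$ we have $\varphi(v)=\mathbf{0}$ for all $v\in V$, whence $P(v)=R(v)=Q(v,\varphi(v))=Q(v,\mathbf{0})=0$ for all $v\in V$; that is, $P|_V\equiv 0$, contradicting the hypothesis that $P|_V\not\equiv 0$. Hence $\vvanish$ rejects with probability at least $0.04$.

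\textbf{Main obstacle.} I expect the delicate point to be exactly the quantitative slack isolated above: the low-degree test only guarantees that $\mathcal{M}$ is $0.16$-close (a constant, but not small) to $Q$, so the local-correction error $2\sqrt{\gamma}$ rises to roughly $0.8$, and the whole argument works only because $1-2\sqrt{0.16}=0.2$ still exceeds the target $0.04$; similarly, the hypothesis $q>Cd^3$ is needed both to keep all additive $d/q$-type terms negligible and to control the degree blow-up $d(d+1)$ that arises when $\varphi(\mathbf{x})$ is substituted into $Q$. Apart from that, the proof is a careful but standard chaining of the low-degree test, local correction, and the polynomial distance lemma, the one new idea being that the elements of $\Grobner(V)$ vanish on $V$, so the substitution $\mathbf{y}\mapsto\varphi(\mathbf{x})$ degenerates to $\mathbf{y}\mapsto\mathbf{0}$ on $V$---which is precisely what links the two evaluation checks to the conclusion $P|_V\equiv 0$.
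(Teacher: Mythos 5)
Your proposal is correct and takes essentially the same route as the paper's proof: use the low-degree test to extract a nearby degree-$d$ polynomial, use local correction at $(\bm{\alpha},\mathbf{0})$ together with the polynomial distance lemma to force $Q(\mathbf{x},\mathbf{0})\equiv 0$, then use local correction at $(\bm{\alpha},\varphi(\bm{\alpha}))$ and the distance lemma to identify $Q(\mathbf{x},\varphi(\mathbf{x}))$ with $P$ and conclude via vanishing on $V$ --- your proof-by-contradiction framing is just the contrapositive of the paper's case analysis on the events $\mathcal{E}_1,\mathcal{E}_2,\mathcal{E}_3$. Your error bookkeeping for the corrector ($2\sqrt{0.16}=0.8$, leaving slack $0.2 - d/(q-1)$) is in fact more careful than the displayed constants in the paper's proof (which writes $0.08$), and both versions comfortably meet the stated $0.04$ rejection bound.
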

\end{thmbox}
\begin{proof}[Proof of \Cref{lemma:vanishing-test}]
	We first note that the efficiency immediately follows from the comments in \Cref{algo:vanishing-grid}. We discuss completeness next.

	\paragraph*{Completeness}Suppose $f$ is a polynomial of degree at most $d$ and $f|_{V} \equiv 0$. As observed in \Cref{obs:valid-proof-vanishing}, we know there exists a vanishing certificate polynomial $\mathcal{M}_{V}(f)$ of degree $d$, and let $\Pi$ be as stated in \Cref{eqn:correct-proof-vanishing-test}. From the first item of \Cref{thm:low-degree-testing}, we know that $\vldt^{\mathcal{M}_{V}(f),\mathcal{M}_{f,\mathrm{lines}}^{(d)}}$ returns \texttt{Accept} with probability $1$. From the first item of \Cref{thm:local-correction}, we know that $\corr^{\mathcal{M}_{V}(f),\mathcal{M}_{f,\mathrm{lines}}^{(d)}}(\bm{\alpha},\mathbf{0})$ is equal to $0$ for every $\bm{\alpha} \in \F_{q}^{m}$ with probability $1$. Similarly, we know that $\corr^{\mathcal{M}_{V}(f),\mathcal{M}_{f,\mathrm{lines}}^{(d)}}(\bm{\alpha},\varphi(\bm{\alpha}))$ is equal to $f(\bm{\alpha})$ for every $\bm{\alpha} \in \F_{q}^{m}$ with probability $1$.
	It is not difficult to see that \Cref{algo:vanishing-grid} accepts $\Pi$ with probability $1$. This finishes the completeness part of \Cref{lemma:vanishing-test}.

	\paragraph*{Soundness}Let $f$ be any function for which there exists a degree $d$ polynomial $P(\mathbf{x})$ such that $\delta(f,P) \leq \delta$ and $P|_{V} \notequiv 0$. Consider the following events from \Cref{algo:vanishing-grid}:
	\begin{enumerate}
		\item $\mathcal{E}_{1}$ denotes the event that $\vldt^{\mathcal{M},\mathcal{M}'}_{d}(;\mathbf{a},\mathbf{b},t)$ returns \texttt{Reject}. It depends on the choice of $(\mathbf{a},\mathbf{b},t)$.
		\item $\mathcal{E}_{2}$ denotes the event that $ \corr_{d}^{\mathcal{M},\mathcal{M}'}((\bm{\alpha}, \mathbf{0}); \mathbf{a},t) \neq 0$. It depends on the choice of $(\bm{\alpha}, \mathbf{a},t)$.
		\item $\mathcal{E}_{3}$ denotes the event that $\corr_{d}^{\mathcal{M},\mathcal{M}'}((\bm{\alpha}, \varphi(\bm{\alpha})); \mathbf{a},t) \neq f[\bm{\alpha}]$. It depends on the choice of $(\bm{\alpha},\mathbf{a},t)$.
	\end{enumerate}
	In the proof below, to avoid cumbersome writing, we will avoid repeatedly mentioning the random bits that each event depends on.

	\paragraph*{}If either of the events $\mathcal{E}_{1}$ or $\mathcal{E}_{2}$ happens with probability greater than $ 0.04$, then we have the desired soundness. Assume that is not the case, i.e.,
	\begin{align*}
		\Pr_{\mathbf{a},\mathbf{b},t}[\mathcal{E}_{1}] \; \leq \; 0.04 \quad \text{ and } \quad \Pr_{\bm{\alpha},\mathbf{a},t}[\mathcal{E}_{2}] \; \leq \; 0.04.
	\end{align*}
	We now want to argue that $\mathcal{E}_{3}$ happens with probability at least $0.04$.\\

	\noindent
	Since $\mathcal{E}_{1}$ happens with probability at most $0.04$,
	from \Cref{thm:low-degree-testing},
	we know that there exists a degree $d$ polynomial $\mathcal{R}(\mathbf{x},\mathbf{y})$ such that
	$\delta(\mathcal{M}, \mathcal{R}) \; \leq \; 0.16$.
	We will show the following claim.\\

	\begin{claim}\label{claim:vanishing-test-soundness}
		Let $\mathcal{R}(\mathbf{x},\mathbf{y}): \F_{q}^{m+k} \to \F_{q}$ be the degree $d$ polynomial such that $\delta(\mathcal{M},\mathcal{R}) \leq 0.16$. Then,
		\begin{align*}
			\mathcal{R}(\mathbf{x},\mathbf{0}) \equiv 0.
		\end{align*}
	\end{claim}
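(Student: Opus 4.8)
\emph{Proof plan.} The plan is to work inside the case of the soundness argument that is already in force when \Cref{claim:vanishing-test-soundness} is invoked, namely $\Pr_{\mathbf{a},\mathbf{b},t}[\mathcal{E}_{1}] \le 0.04$ and $\Pr_{\bm{\alpha},\mathbf{a},t}[\mathcal{E}_{2}] \le 0.04$, with $\mathcal{R}$ the degree-$d$ polynomial satisfying $\delta(\mathcal{M},\mathcal{R}) \le 0.16$ handed to us by \Cref{thm:low-degree-testing}. The strategy is first to use local correction to turn the near-vanishing of the oracle $\mathcal{M}$ on the slice $\{(\bm{\alpha},\mathbf{0}) : \bm{\alpha}\in\F_q^m\}$ into honest vanishing of $\mathcal{R}$ on most of that slice, and then to upgrade ``vanishes on most points'' to ``vanishes identically'' via the Polynomial Distance Lemma (\Cref{thm:odlsz}), noting that $\mathcal{R}(\mathbf{x},\mathbf{0})$ is an $m$-variate polynomial of total degree at most $d < q$.

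For the first step I would fix an arbitrary $\bm{\alpha}\in\F_q^m$ and apply \Cref{thm:local-correction}(2) with the function $\mathcal{M}$, the degree-$d$ polynomial $\mathcal{R}$ (proximity parameter $\delta = 0.16$), ambient dimension $m+k$, evaluation point $(\bm{\alpha},\mathbf{0})$, and randomness $(\mathbf{a},t)$; this is legitimate because $q > Cd^{3} > Cd$. It tells us that over the random choice of $(\mathbf{a},t)$, the value $\corr_{d}^{\mathcal{M},\mathcal{M}'}((\bm{\alpha},\mathbf{0});\mathbf{a},t)$ either equals $\mathcal{R}(\bm{\alpha},\mathbf{0})$ or is \texttt{Reject}, except with probability at most $2\sqrt{0.16} + \tfrac{d}{q-1} = 0.8 + \tfrac{d}{q-1} < 0.81$, where the last inequality uses $q > Cd^{3}$. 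Hence with probability at least $0.19$ the corrector's output lies in $\{\mathcal{R}(\bm{\alpha},\mathbf{0}),\ \texttt{Reject}\}$. Now split on whether $\mathcal{R}(\bm{\alpha},\mathbf{0})=0$: if $\mathcal{R}(\bm{\alpha},\mathbf{0})\ne 0$, then both ``$\corr = \mathcal{R}(\bm{\alpha},\mathbf{0})$'' and ``$\corr$ returns \texttt{Reject}'' are sub-events of ``the corrector fails to output $0$'', which is exactly the event $\mathcal{E}_{2}$ conditioned on this $\bm{\alpha}$; therefore $\Pr_{\mathbf{a},t}[\mathcal{E}_{2}\mid\bm{\alpha}] \ge 0.19$. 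Contrapositively, whenever $\Pr_{\mathbf{a},t}[\mathcal{E}_{2}\mid\bm{\alpha}] < 0.19$ we must have $\mathcal{R}(\bm{\alpha},\mathbf{0})=0$.

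For the second step I would average over $\bm{\alpha}$. Since $\E_{\bm{\alpha}}\bigl[\Pr_{\mathbf{a},t}[\mathcal{E}_{2}\mid\bm{\alpha}]\bigr] = \Pr_{\bm{\alpha},\mathbf{a},t}[\mathcal{E}_{2}] \le 0.04$, Markov's inequality gives $\Pr_{\mathbf{a},t}[\mathcal{E}_{2}\mid\bm{\alpha}] \ge 0.19$ for at most a $0.04/0.19 < 0.22$ fraction of $\bm{\alpha}\in\F_q^m$, so $\mathcal{R}(\bm{\alpha},\mathbf{0})=0$ for more than a $0.78$ fraction of $\bm{\alpha}$. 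Thus $\mathcal{R}(\mathbf{x},\mathbf{0})$, which has total degree at most $d$, vanishes on more than a $0.78$ fraction of $\F_q^m$, and $0.78 > d/q$ because $q > Cd^{3}$; by the corollary to \Cref{thm:odlsz} it is the zero polynomial, i.e. $\mathcal{R}(\mathbf{x},\mathbf{0})\equiv 0$. I expect the only genuinely delicate point to be the bookkeeping in the case split: one must make sure that the \texttt{Reject} outcome of the corrector is counted inside $\mathcal{E}_{2}$ — reading ``$\corr((\bm{\alpha},\mathbf{0}))\ne 0$'' as ``the corrector does not output $0$'', which is precisely what triggers rejection in \Cref{algo:vanishing-grid} — so that the two sub-event containments above are valid. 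Once that is pinned down, the rest is a routine combination of \Cref{thm:local-correction} and \Cref{thm:odlsz}.
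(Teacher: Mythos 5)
Your proof is correct and follows essentially the same route as the paper: apply \Cref{thm:local-correction} to $\mathcal{M}$ and $\mathcal{R}$ at points of the slice $(\bm{\alpha},\mathbf{0})$, combine with the standing assumption $\Pr[\mathcal{E}_2]\le 0.04$, and finish with \Cref{thm:odlsz}. The only differences are bookkeeping: you argue pointwise in $\bm{\alpha}$ and then invoke Markov's inequality where the paper uses a single union bound over the joint distribution, and you explicitly retain the \texttt{Reject} disjunct of the corrector's guarantee (folding it into $\mathcal{E}_{2}$), a detail the paper's displayed inequality elides; the constant you obtain ($\mathcal{R}(\cdot,\mathbf{0})$ vanishing on more than a $0.78$ fraction of $\F_q^m$) is weaker than the paper's stated $0.88-\frac{d}{q-1}$ but still comfortably exceeds $d/q$, so the conclusion is unaffected.
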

	\begin{proof}[Proof of \Cref{claim:vanishing-test-soundness}]
		As mentioned above, we know that $\delta(\mathcal{M},\mathcal{R}) \leq 0.16$. Using \Cref{thm:local-correction}, we get:
		\begin{align*}
			\Pr_{\mathbf{a},t}[\corr^{\mathcal{M},\mathcal{M}'}_{d}((\bm{\alpha},\mathbf{0}); \mathbf{a},t) = \mathcal{R}(\bm{\alpha},\mathbf{0})] \; \geq \; 1 - 2\sqrt{0.16} - \dfrac{d}{q-1}, &  & \text{for every } \bm{\alpha} \in \F_{q}^{m}
		\end{align*}
		\begin{equation}\label{eqn:vanish-soundness-1}
			\Rightarrow \Pr_{\bm{\alpha}, \mathbf{a},t}[\corr^{\mathcal{M},\mathcal{M}'}_{d}((\bm{\alpha},\mathbf{0}); \mathbf{a},t) \neq \mathcal{R}(\bm{\alpha},\mathbf{0})] \; \leq \; 0.8 + \dfrac{d}{q-1}.
		\end{equation}
		Since the event $\mathcal{E}_{2}$ happens with probability $\leq 0.04$, we have,
		\begin{equation}\label{eqn:vanish-soundness-2}
			\Pr_{\bm{\alpha},\mathbf{a},t}[\corr^{\mathcal{M},\mathcal{M}'}_{d}((\bm{\alpha},\mathbf{0}); \mathbf{a},t) \neq 0] \; \leq \; 0.04.
		\end{equation}
		Using union bound on \Cref{eqn:vanish-soundness-1} and \Cref{eqn:vanish-soundness-2}, we get,
		\begin{align*}
			\Pr_{\bm{\alpha},\mathbf{a},t}[\mathcal{R}(\bm{\alpha},\mathbf{0}) \neq 0] \; \leq \; 0.84  + \dfrac{d}{q-1} \\ \\
			\iff \quad \Pr_{\bm{\alpha},\mathbf{a},t}[\mathcal{R}(\bm{\alpha},\mathbf{0}) = 0] \; \geq \; 0.16 - \dfrac{d}{q-1}.
		\end{align*}
		Since the event $(\mathcal{R}(\bm{\alpha},\mathbf{0}) = 0)$ does not depend on the random choice of $(\mathbf{a},t)$, we get,
		\begin{align*}
			\Pr_{\bm{\alpha}}[\mathcal{R}(\bm{\alpha},\mathbf{0}) = 0] \; \geq \; 0.16 - \dfrac{d}{q-1}.
		\end{align*}
		We choose $C$ in the statement of \Cref{lemma:vanishing-test} large enough such that $0.16 - \frac{d}{q-1} > \frac{d}{q}$. The polynomial distance lemma (\Cref{thm:odlsz}) then implies that $\mathcal{R}(\mathbf{x},\mathbf{0}) \equiv 0$. This finishes the proof of \Cref{claim:vanishing-test-soundness}.
	\end{proof}

	\paragraph*{}\Cref{claim:vanishing-test-soundness} implies that $\mathcal{R}(\mathbf{x},\mathbf{y})$ belongs to the ideal $\mathbb{I}(y_{1},\ldots,y_{k})$. This implies the existence of polynomials $R_{1},\ldots,R_{k} \in \F_{q}[\mathbf{x},\mathbf{y}]$ such that  $\mathcal{R}(\mathbf{x},\mathbf{y})$ can be expressed as follows:
	\begin{align*}
		\mathcal{R}(\mathbf{x},\mathbf{y}) \; = \; \sum_{g \in \Grobner(H)} \; R_{g}(\mathbf{x},\mathbf{y}) \cdot y_{g}.
	\end{align*}
	Define the polynomial $R:\F_{q}^{m} \to \F_{q}$ as follow, $R(\mathbf{x}) \; := \; \mathcal{R}(\mathbf{x}, \varphi(\mathbf{x}))$. Observe that $P(\mathbf{x})$ and $R(\mathbf{x})$ are distinct polynomials, otherwise $P|_{V} \equiv 0$.\newline

	\noindent
	Since for every $g \in \mathfrak{G}(V)$, we know that $\deg(g)  \leq d$ and we also have that $\deg(\mathcal{R}(\mathbf{x},\mathbf{y})) \leq d$, we get $\deg(R(\mathbf{x})) \leq d^{2}$. Since $\delta(\mathcal{M},\mathcal{R}) \leq 0.16$, we have
	\begin{align*}
		\Pr_{\mathbf{a},t}[\corr^{\mathcal{M},\mathcal{M}'}_{d}((\bm{\alpha},\varphi(\bm{\alpha})); \mathbf{a},t) = \mathcal{R}(\bm{\alpha},\varphi(\bm{\alpha}))] \; \geq \;
		1 - 2\sqrt{0.16} - \dfrac{d}{q-1}, &  & \text{for every} \; \bm{\alpha} \in \F_{q}^{m} \\ \\
		\Rightarrow  \Pr_{\bm{\alpha},\mathbf{a},t}[\corr^{\mathcal{M},\mathcal{M}'}_{d}((\bm{\alpha},\varphi(\bm{\alpha})); \mathbf{a},t) = \mathcal{R}(\bm{\alpha}, \varphi(\bm{\alpha}))] \; \geq \;
		0.2 - \dfrac{d}{q-1}
	\end{align*}
	\begin{equation}\label{eqn:vanish-soundness-3}
		\Rightarrow \Pr_{\bm{\alpha},\mathbf{a},t}[\corr^{\mathcal{M},\mathcal{M}'}_{d}((\bm{\alpha},\varphi(\bm{\alpha})); \mathbf{a},t) \neq R(\bm{\alpha})] \; \leq \;
		0.8 + \dfrac{d}{q-1}.
	\end{equation}
	Recall that $P \neq R$ and from the polynomial distance lemma (\Cref{thm:odlsz}), we have:
	\begin{equation}\label{eqn:vanish-soundness-4}
		\Pr_{\bm{\alpha}}[R(\bm{\alpha}) = P(\bm{\alpha})] \; \leq \; \dfrac{d^{2}}{q} \quad \Longrightarrow \quad \Pr_{\bm{\alpha}}[R(\bm{\alpha}) = f[\bm{\alpha}]] \; \leq \; \delta + \dfrac{d^{2}}{q}.
	\end{equation}
	Using \Cref{eqn:vanish-soundness-3} and \Cref{eqn:vanish-soundness-4} and applying union bound, we get,
	\begin{gather*}
		\Pr_{\bm{\alpha},\mathbf{a},t}[\corr^{\mathcal{M},\mathcal{M}'}_{d}((\bm{\alpha},\varphi(\bm{\alpha})); \mathbf{a},t) = f[\bm{\alpha}]]
		\; \\ \\
		\leq \; \Pr_{\bm{\alpha}, \mathbf{a},t}[\corr^{\mathcal{M},\mathcal{M}'}_{d}((\bm{\alpha},\varphi(\bm{\alpha})); \mathbf{a},t) \neq R(\bm{\alpha})] \, + \, \Pr_{\bm{\alpha}}[R(\bm{\alpha}) = f[\bm{\alpha}]] \quad
		\leq \quad	0.8 + \dfrac{2d^{2}}{q-1} + \delta.
	\end{gather*}
	By choosing $C$ appropriately in the statement of \Cref{lemma:vanishing-test}, we can set $2d^{2}/(q-1) \leq 0.01$. Thus,
	\begin{align*}
		\Pr_{\bm{\alpha},\mathbf{a},t}[\corr^{\mathcal{M},\mathcal{M}'}_{d^{2}}((\bm{\alpha}, \varphi(\bm{\alpha})); \mathbf{a},t) \neq f[\bm{\alpha}]] \; \geq \; 0.19 - \delta \; \geq \; 0.04,
	\end{align*} where we are using $\delta \leq 0.01$.
	Thus $\mathcal{E}_{3}$ happens with probability $\geq 0.04$. Hence we have showed that either $\mathcal{E}_{1}$ or $\mathcal{E}_{2}$ happens with probability $\geq 0.04$, otherwise $\mathcal{E}_{3}$ happens with probability $\geq 0.04$. This finishes the soundness of \Cref{lemma:vanishing-test} and also the proof of \Cref{lemma:vanishing-test}.
\end{proof}

\section{Proof of the Main Theorem (\Cref{thm:main-thm})}
\label{sec:main-thm}

In this section, we give the proof of \Cref{thm:main-thm}, which we recall below.\\

\mainthm*

\begin{proof}[Proof of \Cref{thm:main-thm}]
	We will consider the $\mathsf{NP}$-complete problem $3$-$\mathsf{COLOR}$ for graphs (see \Cref{defn:3-color} for a formal definition of $3$-$\mathsf{COLOR}$). We will use $\mathcal{V}_{\mathsf{PCP}}$ to denote the standard verifier with parameters as stated in \Cref{thm:main-thm}.\newline

\paragraph*{}From \Cref{lem:alg-macaulay} and \Cref{clm:rt-gauss}, we know that $\mathcal{V}_{\mathsf{PCP}}$ can compute the Macaulay basis of complexity $k$ in time $q^{\bigO(m)}$. Let $E: V \times V \to \Boo \subset \F_{q}$	be the edge function for the input graph $G = (V,E)$, defined as $E(u,v) = 1$ if and only if $(u,v) \in E$. Let $\widehat{E}: \F_{q}^{m} \times \F_{q}^{m} \to \F_{q}$ denote an extension of $E$ and from the second item of \Cref{lemma:subaddgrob}, we know that $\deg(\widehat{E}) \leq 2d$. By \Cref{lem:alg-extension-degree} and \Cref{clm:rt-gauss}, the standard verifier $\mathcal{V}_{\mathsf{PCP}}$ can compute both, the extension degree $d$ and the extension $\widehat{E}$ in time $q^{\bigO(m)}$.

\paragraph*{}For simplicity, we will describe a standard verifier $\mathcal{V}$ for $3$-$\mathsf{COLOR}$ and then $\mathcal{V}_{\mathsf{PCP}}$ will be repeating $\mathcal{V}$ for $\bigO(1)$ times. More particularly, the standard verifier $\mathcal{V}$ will have soundness $\gamma$ for some absolute constant $\gamma \in (0,1)$, i.e. $\mathcal{V}$ rejects with probability at least $\gamma$. The standard verifier $\mathcal{V}_{\mathsf{PCP}}$ will repeat $\mathcal{V}$ for $\bigO(1/\gamma)$ times and return \texttt{Reject} if any one of the iterations return \texttt{Reject}. As $\bigO(1/\gamma) = \bigO(1)$, the number of random bits, queries, and running time of $\mathcal{V}_{\mathsf{PCP}}$ are a constant factor multiple of the number of random bits, queries, and running time of $\mathcal{V}$ respectively. So for rest of the proof, it will be sufficient to describe a standard verifier $\mathcal{V}$ which uses $c'(k+m) \log q$ random bits, makes $\ell'$ queries to proofs over alphabets of size $c' \cdot (d \log q)$, have soundness guarantee of $\gamma$, and has running time $q^{\bigO(m+k)}$, for some constants $c',\ell',$ and $\gamma \in (0,1)$. From the previous paragraph, we know that $\mathcal{V}_{\mathsf{PCP}}$ can compute the Macaulay basis, extension degree $d$, and extension $\widehat{E}$, all in time $q^{\bigO(m)}$. So we will assume that our standard verifier $\mathcal{V}$ has access to all of them.

	\paragraph*{Oracles}
	We now describe oracles that the standard verifier $\mathcal{V}$ expects in a proof $\Pi$. In particular, if $G \in 3$-$\mathsf{COLOR}$, then our standard verifier always returns \texttt{Accept}. Our oracles will be evaluation tables of polynomials and their corresponding lines table. In the following, we also mention the size of each oracle that appears in the proof.

	\begin{enumerate}

		\item Let $\chi: V \to \set{-1,0,1} \subset \F_{q}$ be a coloring assignment to every vertex in the input $G = (V,E)$.
		      Here we use $\set{-1,0,1}$ to denote three distinct colors.\\

		      \noindent
		      Let $\widehat{\chi}: \F_{q}^{m} \to \F_{q}$ denote an extension of $\chi$ of degree $d$. Let $\widehat{\chi}_{\mathrm{lines}}^{(d)}$ be the $d^{th}$ lines table for $\widehat{\chi}$.\newline
		      \vspace{2mm}
		      \textcolor{magenta}{\texttt{Size of $(\widehat{\chi}, \widehat{\chi}_{\mathrm{lines}}^{(d)})$ is $ 2^{\bigO(m \log q)}$ over alphabet of size $\bigO(d \log q)$.}}

		\item Define the polynomial $A: \F_{q}^{m} \to \F_{q}$ as follows:
		      \begin{align*}
			      A(\mathbf{x}) \; := \; \widehat{\chi}(\mathbf{x}) \cdot (\widehat{\chi}(\mathbf{x}) - 1) \cdot (\widehat{\chi}(\mathbf{x}) + 1).
		      \end{align*}
		      We have $\deg(A) \leq 3 \cdot \deg(\widehat{\chi}) \leq 3d$. Let $A_{\mathrm{lines}}^{(3d)}$ be the $(3d)^{th}$ lines table for $A$.\newline
		      \vspace{2mm}
		      \textcolor{magenta}{\texttt{Size of $(A,A_{\mathrm{lines}}^{(3d)})$ is $2^{\bigO(m \log q)}$ over alphabet of size $\bigO(d \log q)$.\\}}

		      \begin{observation}\label{obs:A-vanishes}
			      For a vertex $\mathbf{u} \in V$, $A(\mathbf{u}) = 0$ if and only if $\widehat{\chi}(\mathbf{u}) \in \set{-1,0,1}$. This implies that $A|_{V} \equiv 0$ if and only if for every vertex $\mathbf{u} \in V$, we have $\widehat{\chi}(\mathbf{u}) \in \set{-1,0,1}$.\\
		      \end{observation}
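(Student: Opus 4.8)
The plan is to unwind the definition of $A$ and use the fact that $\F_q$ is an integral domain (it is a field, hence has no zero divisors). First I would fix a vertex $\mathbf{u} \in V$ and substitute into the defining identity $A(\mathbf{x}) = \widehat{\chi}(\mathbf{x}) \cdot (\widehat{\chi}(\mathbf{x})-1) \cdot (\widehat{\chi}(\mathbf{x})+1)$ to obtain $A(\mathbf{u}) = \widehat{\chi}(\mathbf{u}) \cdot (\widehat{\chi}(\mathbf{u})-1) \cdot (\widehat{\chi}(\mathbf{u})+1)$, which is a product of three elements of $\F_q$. Since a product of field elements is zero exactly when one of the factors is zero, $A(\mathbf{u}) = 0$ holds if and only if $\widehat{\chi}(\mathbf{u}) = 0$, or $\widehat{\chi}(\mathbf{u}) = 1$, or $\widehat{\chi}(\mathbf{u}) = -1$, i.e.\ if and only if $\widehat{\chi}(\mathbf{u}) \in \set{-1,0,1}$. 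This establishes the first claim.

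For the second claim, I would simply quantify over $\mathbf{u}$: by definition $A|_V \equiv 0$ means $A(\mathbf{u}) = 0$ for every $\mathbf{u} \in V$, and applying the equivalence just proved pointwise, this is the same as saying $\widehat{\chi}(\mathbf{u}) \in \set{-1,0,1}$ for every $\mathbf{u} \in V$. There is no real obstacle here — the only thing worth noting is that we never need to know whether $\widehat{\chi}$ agrees with the intended coloring $\chi$ on $V$ for this observation; that consistency is enforced separately by the low-degree extension machinery, and the characteristic $\neq 2$ assumption (which guarantees $-1,0,1$ are three genuinely distinct colors) is used elsewhere rather than in this argument.
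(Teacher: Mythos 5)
Your proof is correct and matches the (implicit) argument in the paper: the observation is stated there without proof precisely because it follows immediately from substituting $\mathbf{u}$ into the defining identity for $A$ and using that $\F_q$ has no zero divisors, exactly as you do. Your closing remark correctly notes that the characteristic $\neq 2$ assumption plays its role elsewhere (in making $-1,0,1$ distinct colors), not in this equivalence.
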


		      \noindent
		      Let\footnote{Recall $k$ is the Macaulay complexity of $\mathfrak{G}(V)$.} $\mathcal{M}_{A}: \F_{q}^{m+k} \to \F_{q}$ denote a vanishing certificate polynomial for $A|_{V}$ (see \Cref{defn:vanishing-cert-poly} for a formal definition). Let $d_{1} := \deg(\mathcal{M}_{A}) \leq \deg(A) \leq 3d$. Let $\mathcal{M}_{A,\mathrm{lines}}^{(3d)}$ be the $(3d)^{th}$ lines table for $\mathcal{M}_{A}$.\newline
		      \vspace{2mm}
		      \textcolor{magenta}{\texttt{Size of $(\mathcal{M}_{A}, \mathcal{M}_{A,\mathrm{lines}}^{(3d)})$ is $2^{\bigO(m\log q + k \log q)}$ over alphabet of size $\bigO(d \log q)$.}}

		\item Define the polynomial $B: \F_{q}^{m} \times \F_{q}^{m} \to \F_{q}$ as follows:
		      \begin{align*}
			      B(\mathbf{x}, \mathbf{y}) \; := \; \widehat{E}(\mathbf{x},\mathbf{y}) \cdot \prod_{a \in \set{\pm 1, \pm 2}} (\widehat{\chi}(\mathbf{x}) - \widehat{\chi}(\mathbf{y}) - a) \cdot
		      \end{align*}
		      We have $\deg(B) \leq \deg(\widehat{E}) + 4 \deg(\widehat{\chi}) \leq 6 d$. Let $B_{\mathrm{lines}}^{(6d)}$ be the $(6d)^{th}$ lines table for $B$.\newline
		      \vspace{2mm}
		      \textcolor{magenta}{\texttt{Size of $(B,B_{\mathrm{lines}}^{(6d)})$ is $2^{\bigO(m \log q)}$ over alphabet of size $\bigO(d \log q)$.\\}}

		      \begin{observation}\label{obs:B-vanishes}
			      Suppose $\widehat{\chi}(\mathbf{u}) \in \set{-1,0,1}$ for every $\mathbf{u} \in V$. For any two vertices $\mathbf{u}$ and $\mathbf{v}$, $B(\mathbf{u},\mathbf{v}) = 0$ if and only if either $(\mathbf{u},\mathbf{v}) \notin E$ or $\widehat{\chi}(\mathbf{u}) \neq \widehat{\chi}(\mathbf{v})$.\\

		      \end{observation}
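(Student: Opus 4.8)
The plan is to prove this observation by a direct case analysis, using only that $\F_q$ is a field (hence an integral domain) together with the fact that $\widehat{E}$ and $\widehat{\chi}$ are extensions of $E$ and $\chi$. First I would note that since $\mathbf{u},\mathbf{v}\in V$ we have $\widehat{E}(\mathbf{u},\mathbf{v}) = E(\mathbf{u},\mathbf{v})$, which equals $1$ if $(\mathbf{u},\mathbf{v})\in E$ and $0$ otherwise; in particular $\widehat{E}(\mathbf{u},\mathbf{v})\neq 0$ exactly when $(\mathbf{u},\mathbf{v})\in E$. Since $B(\mathbf{u},\mathbf{v})$ is the product of $\widehat{E}(\mathbf{u},\mathbf{v})$ and $\prod_{a\in\{\pm1,\pm2\}}\bigl(\widehat{\chi}(\mathbf{u})-\widehat{\chi}(\mathbf{v})-a\bigr)$, and $\F_q$ has no zero divisors, $B(\mathbf{u},\mathbf{v})=0$ if and only if one of these two factors is $0$. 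So it suffices to determine when the second factor vanishes.

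For the second factor, I would write $z := \widehat{\chi}(\mathbf{u})-\widehat{\chi}(\mathbf{v})$. By the hypothesis $\widehat{\chi}(\mathbf{u}),\widehat{\chi}(\mathbf{v})\in\{-1,0,1\}$, so $z$ lies in $\{-2,-1,0,1,2\}\subseteq\F_q$. Again because $\F_q$ is an integral domain, $\prod_{a\in\{\pm1,\pm2\}}(z-a)=0$ if and only if $z=a$ for some $a\in\{\pm1,\pm2\}$, i.e. if and only if $z\in\{\pm1,\pm2\}$. Here the characteristic-$\neq 2$ hypothesis enters: it guarantees $2\neq 0$ (and of course $1\neq 0$), so $0\notin\{\pm1,\pm2\}$, while conversely every element of $\{-2,-1,0,1,2\}$ other than $0$ lies in $\{\pm1,\pm2\}$ (this remains true in characteristics $3$ and $5$, where $\{-2,-1,0,1,2\}$ collapses to all of $\F_q$ and $\{\pm1,\pm2\}$ collapses to $\F_q\setminus\{0\}$). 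Hence, for $z$ restricted to the range $\{-2,-1,0,1,2\}$, the second factor vanishes if and only if $z\neq 0$. Finally, since $\mathrm{char}(\F_q)\neq 2$ the three elements $-1,0,1$ are pairwise distinct, so $z=0$ is equivalent to $\widehat{\chi}(\mathbf{u})=\widehat{\chi}(\mathbf{v})$; thus the second factor vanishes if and only if $\widehat{\chi}(\mathbf{u})\neq\widehat{\chi}(\mathbf{v})$.

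Putting the two parts together yields exactly the claim: $B(\mathbf{u},\mathbf{v})=0$ if and only if $\widehat{E}(\mathbf{u},\mathbf{v})=0$ or the product factor is $0$, i.e. if and only if $(\mathbf{u},\mathbf{v})\notin E$ or $\widehat{\chi}(\mathbf{u})\neq\widehat{\chi}(\mathbf{v})$. The argument is entirely routine; the only place that demands any attention — and hence the closest thing to an \emph{obstacle} — is the small amount of bookkeeping about the field characteristic, namely checking that $\{\pm1,\pm2\}$ avoids $0$, that it covers every nonzero value attainable by $z$, and that $-1,0,1$ stay distinct, all of which are consequences of $\mathrm{char}(\F_q)\neq 2$.
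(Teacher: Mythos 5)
Your argument is correct and is exactly the routine verification the paper omits: the paper states this observation without proof, implicitly relying on the same reasoning (a field has no zero divisors, $\widehat{E}$ agrees with $E$ on $V\times V$, and the factor $\prod_{a\in\{\pm1,\pm2\}}(\widehat{\chi}(\mathbf{u})-\widehat{\chi}(\mathbf{v})-a)$ vanishes precisely when the two colors in $\{-1,0,1\}$ differ, which uses $\mathrm{char}(\F_q)\neq 2$). Your extra bookkeeping about characteristics $3$ and $5$ is a correct and welcome sanity check, but it does not change the approach.
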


		      \noindent
		      Let $\mathcal{M}_{B}$ denote a vanishing certificate polynomial for $B|_{V}$. Let $d_{2} := \deg(\mathcal{M}_{B}) \leq \deg(B)  \leq 6d$. Let $\mathcal{M}_{B,\mathrm{lines}}^{(6d)}$ be the $(6d)^{th}$ lines table for $\mathcal{M}_{B}$.\newline
		      \vspace{2mm}
		      \textcolor{magenta}{\texttt{Size of $(\mathcal{M}_{B}, \mathcal{M}_{B,\mathrm{lines}}^{(6d)})$ is $2^{\bigO(m\log q + k \log q)}$ over alphabet of size $\bigO(d \log q)$.}}

	\end{enumerate}

	The proof $\Pi$ consists of the following oracles:
	\begin{equation}\label{eqn:correct-proof}
		\Pi = \paren{\widehat{\chi}, \; \widehat{\chi}_{\mathrm{lines}}^{(d)}, A, \; A_{\mathrm{lines}}^{(3d)}, \; \mathcal{M}_{A}, \; \mathcal{M}_{A,\mathrm{lines}}^{(3d)}, \; B, \; B_{\mathrm{lines}}^{(6d)}, \; \mathcal{M}_{B}, \; \mathcal{M}_{B,\mathrm{lines}}^{(6d)}}
	\end{equation}
	As we have mentioned, the size of each of the components in $\Pi$, we get that the size of the proof $\Pi$ is $2^{\bigO(m \log q + k \log q)} = q^{\bigO(m+k)}$ over an alphabet of size $\bigO(d \log q)$.

	\paragraph*{Description of the standard verifier $\mathcal{V}$}We are now ready to describe the standard verifier $\mathcal{V}$ to test whether a graph $G$ is $3$-colorable or not. In the following description, we interpret that the proof $\Pi$ consists of the oracles as stated in \Cref{eqn:correct-proof}, i.e., $\mathcal{V}$ will interpret the proof $\Pi$ as a long string with sub-strings forming the structure in \Cref{eqn:correct-proof}. We will show that $\mathcal{V}$ is a standard verifier which uses $\bigO(m+k)$ random bits, makes $\bigO(1)$ queries to proofs over alphabets of size $\bigO(d \log q)$, have soundness guarantee of $\gamma$, and has running time $q^{\bigO(m+k)}$, for some constant  $\gamma \in (0,1)$.

	\begin{algobox}
		\begin{algorithm}[H]

			\caption{Test by the Verifier $\mathcal{V}$}
			\label{algo:verifier-tests}

			\DontPrintSemicolon

			\KwIn{Degree parameter $d$, subset $V$, Macaulay basis for $V$, polynomial $\widehat{E}$,\newline
            randomness $\mathbf{a},\mathbf{b} \in \F_{q}^{m}$, $\bm{\alpha}, \bm{\beta} \in \F_{q}^{2m}$, $\bm{\gamma}_{1},\bm{\gamma}_{2} \in \F_{q}^{m+k}$, $\bm{\mu}_{1},\bm{\mu}_{2} \in \F_{q}^{2(m+k)},$ $t \in \F_{q}^{\times}$, and 
				oracle access to $\Pi = (\widetilde{\chi}, \; \widetilde{\chi}', \; \widetilde{A}, \; \widetilde{A}', \mathcal{\widetilde{M}}_{A}, \mathcal{\widetilde{M}}_{A}', \; \widetilde{B}, \; \widetilde{B}', \mathcal{\widetilde{M}}_{B}, \mathcal{\widetilde{M}}_{B}')$}

			\vspace{4mm}

			Run $\vldt^{\widetilde{\chi}, \widetilde{\chi}'}_{d}(;\mathbf{a},\mathbf{b},t)$,   $\vldt^{\widetilde{A},\widetilde{A}'}_{3d}(;\mathbf{a},\mathbf{b},t)$, and $\vldt^{\widetilde{B},\widetilde{B}'}_{6d}(; \bm{\alpha},\bm{\beta},t)$ (see \Cref{algo:low-degree-test}) \tcp*{$6$ queries to $\Pi$ and runs in time $\poly(m,d)$}

			\vspace{4mm}

			\If{either of the above three $\vldt$ test returns \texttt{Reject}}{
				\Return{\texttt{Reject}}
			}

            \vspace{4mm}

			Query $\widetilde{\chi}[\mathbf{a}], \widetilde{\chi}[\mathbf{b}], \widetilde{A}[\mathbf{a}], \widetilde{B}[\mathbf{a},\mathbf{b}]$ \tcp*{$4$ queries to $\Pi$}
            
			\vspace{4mm}
			\If{$\widetilde{A}[\mathbf{a}] \neq \widetilde{\chi}[\mathbf{a}] \cdot (\widetilde{\chi}[\mathbf{a}]-1) \cdot (\widetilde{\chi}[\mathbf{a}]+1)$ OR $\widetilde{B}[\mathbf{a},\mathbf{b}] \neq \widehat{E}(\mathbf{a},\mathbf{b}) \cdot \prod_{i \in \set{\pm 1, \pm 2}} (\widetilde{\chi}[\mathbf{a}] - \widetilde{\chi}[\mathbf{b}] - i)$ \tcp*{Runs in time $\poly(n,q^{\bigO(m)})$}}{
				\Return{\texttt{Reject}}
			}

			\vspace{4mm}

			Run $\vvanish^{\widetilde{A}, \mathcal{\widetilde{M}}_{A}, \mathcal{\widetilde{M}}_{A}'}_{3d}(;\bm{\gamma}_{1}, \bm{\gamma}_{2}, \mathbf{a},t)$ and $\vvanish^{\widetilde{B}, \mathcal{\widetilde{M}}_{B}, \mathcal{\widetilde{M}}_{B}'}_{6d}(; \bm{\mu}_{1}, \bm{\mu}_{2}, \bm{\alpha}, t)$ (see \Cref{algo:vanishing-grid}) \tcp*{$14$ queries to $\Pi$ and running time $\poly(n,q^{\bigO(m+k)})$}

			\vspace{4mm}

			\If{either of the above two $\vvanish$ tests return \texttt{Reject}}{
				\Return{\texttt{Reject}}
			}

			\vspace{2mm}
			\Return{\texttt{Accept}}

		\end{algorithm}
	\end{algobox}

	\paragraph*{Efficiency}The random string used by $\mathcal{V}$ is the tuple $(\mathbf{a},\mathbf{b},\bm{\alpha},\bm{\beta},\bm{\gamma}_{1},\bm{\gamma}_{2}, \bm{\mu}_{1}, \bm{\mu}_{2},t)$. It is clear from here that these are $\bigO((m+k) \log q)$ random bits. From the comments in \Cref{algo:verifier-tests}, it is clear that $\mathcal{V}$ makes $\bigO(1)$ queries to the string $\Pi$, which is over an alphabet of size $\bigO(d \log q)$. From the comments in \Cref{algo:verifier-tests}, it is also clear that the running time of $\mathcal{V}$ is $q^{\bigO(m+k)}$.

	\paragraph*{Completeness}Let $G = (V,E) \in 3$-$\mathsf{COLOR}$. This means there exists a coloring $\chi: V \to \set{-1,0,1}$ such that for every edge $(\mathbf{u},\mathbf{v}) \in E$, we have $\chi(\mathbf{u}) \neq \chi(\mathbf{v})$. Let $\Pi$ be as stated in \Cref{eqn:correct-proof}. From the first item of \Cref{thm:low-degree-testing}, we know that all three low-degree tests $\vldt$ return \texttt{Accept} with probability $1$. From the definition of $\widehat{E}, \widehat{\chi}, A,$ and $B$, we know that $\mathcal{V}$ never returns \texttt{Reject} in Line 6 of \Cref{algo:verifier-tests}. Using \Cref{obs:A-vanishes} and \Cref{obs:B-vanishes}, we know that both $A|_{V} \equiv 0$ and $B|_{V \times V} \equiv 0$. From the completeness part of \Cref{lemma:vanishing-test}, we know that both $\vvanish^{A,\mathcal{M}_{A},\mathcal{M}_{A,\mathrm{lines}}^{(3d)}}$ and $\vvanish^{B,\mathcal{M}_{B},\mathcal{M}_{B,\mathrm{lines}}^{(6d)}}$ return \texttt{Accept} with probability $1$. Hence $\mathcal{V}^{\Pi}(G)$ always return \texttt{Accept} and thus has completeness $1$.

	\paragraph*{Soundness}Instead of the basic soundness claim, we prove a stronger claim that will also be useful in \Cref{sec:composition}.

	\begin{claim}[Soundness]
		\label{claim:robust-soundness-pcp}
		For any constant $\varepsilon > 0$ there is a constant $\gamma > 0$ such that the following holds. Suppose there exists no proper 3-coloring $\psi : V \to \set{-1,0,1}$ of $G$ such that $\delta(\widetilde{\chi}, \widehat{\psi}) \leq \varepsilon$, where $\widehat{\psi}$ is a degree $d$ extension of $\psi$. Then $\mathcal{V}$ returns \texttt{Reject} with probability at least $\gamma$. In particular, if $G$
		is not $3$-colorable, then $\mathcal{V}$ rejects with some constant probability.
	\end{claim}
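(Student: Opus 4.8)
The plan is to prove the contrapositive. Fix $\varepsilon>0$ and suppose, towards a contradiction, that $\mathcal{V}^{\Pi}(G)$ accepts with probability $>1-\gamma$ for a constant $\gamma=\gamma(\varepsilon)$ to be fixed (small) at the end; I will extract from the oracles a \emph{proper} $3$-coloring $\psi$ of $G$ together with a degree-$\le d$ extension $\widehat{\psi}$ with $\delta(\widetilde{\chi},\widehat{\psi})\le\varepsilon$, contradicting the hypothesis of the claim. Since \Cref{algo:verifier-tests} returns \texttt{Reject} as soon as any subroutine fails, acceptance with probability $>1-\gamma$ forces each of the three $\vldt$ calls, the Line-6 arithmetic check, and each of the two $\vvanish$ calls to fail with probability at most $\gamma$. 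The first step is then to run this through the soundness half of \Cref{thm:low-degree-testing} (valid since $q>Cd^3$ and $\gamma<\delta_0$): $\widetilde{\chi},\widetilde{A},\widetilde{B}$ are $4\gamma$-close to polynomials $\widehat{\chi}^{*},A^{*},B^{*}$ of degrees $d,3d,6d$ respectively, and these are the unique such nearby polynomials once $\gamma$ is small, by \Cref{thm:odlsz}.

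The second step is to promote the pointwise consistency check of Line 6 to polynomial identities. For a uniformly random $\mathbf{a}$ (resp.\ $(\mathbf{a},\mathbf{b})$), a union bound over the closeness estimates gives that with probability $\ge 1-O(\gamma)$ one simultaneously has $\widetilde{\chi}[\mathbf{a}]=\widehat{\chi}^{*}(\mathbf{a})$, $\widetilde{\chi}[\mathbf{b}]=\widehat{\chi}^{*}(\mathbf{b})$, $\widetilde{A}[\mathbf{a}]=A^{*}(\mathbf{a})$, $\widetilde{B}[\mathbf{a},\mathbf{b}]=B^{*}(\mathbf{a},\mathbf{b})$ and the Line-6 check passes; hence $A^{*}(\mathbf{a})=\widehat{\chi}^{*}(\mathbf{a})(\widehat{\chi}^{*}(\mathbf{a})-1)(\widehat{\chi}^{*}(\mathbf{a})+1)$ holds on more than a $3d/q$ fraction of $\mathbf{a}\in\F_q^m$, and likewise $B^{*}(\mathbf{a},\mathbf{b})=\widehat{E}(\mathbf{a},\mathbf{b})\prod_{i\in\{\pm 1,\pm 2\}}(\widehat{\chi}^{*}(\mathbf{a})-\widehat{\chi}^{*}(\mathbf{b})-i)$ on more than a $6d/q$ fraction of $(\mathbf{a},\mathbf{b})$. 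Both sides of the first equation are polynomials of degree $\le 3d$, and of the second of degree $\le 6d$, so \Cref{thm:odlsz} (using $q>Cd^3$) upgrades each approximate equality to an identity of polynomials in $\F_q[\mathbf{x}]$ and in $\F_q[\mathbf{x},\mathbf{y}]$.

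The third step is to invoke the soundness half of \Cref{lemma:vanishing-test}. The first $\vvanish$ call rejects with probability $\le\gamma<0.04$ and $\widetilde{A}$ is $4\gamma<0.01$-close to the unique degree-$3d$ polynomial $A^{*}$, and $V\subseteq\F_q^m$ has extension degree $\le d\le 3d$ and Gr\"obner complexity $k$, so \Cref{lemma:vanishing-test}(2) (in contrapositive) forces $A^{*}|_{V}\equiv 0$; identically, using that $V\times V\subseteq\F_q^{2m}$ has extension degree $\le 2d\le 6d$ and Gr\"obner complexity $\le 2k$ (by \Cref{lemma:subaddgrob}, with generating set $\Grobner(V)\cup\Grobner(V)$ on the two disjoint blocks of variables), the second $\vvanish$ call forces $B^{*}|_{V\times V}\equiv 0$. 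Combining with the identities of the previous step: $A^{*}|_{V}\equiv 0$ gives $\widehat{\chi}^{*}(\mathbf{u})(\widehat{\chi}^{*}(\mathbf{u})-1)(\widehat{\chi}^{*}(\mathbf{u})+1)=0$, i.e.\ $\widehat{\chi}^{*}(\mathbf{u})\in\{-1,0,1\}$, for every $\mathbf{u}\in V$ (\Cref{obs:A-vanishes}), so $\psi:=\widehat{\chi}^{*}|_{V}$ is a genuine $3$-coloring assignment $V\to\{-1,0,1\}$; and $B^{*}|_{V\times V}\equiv 0$ together with $\widehat{E}(\mathbf{u},\mathbf{v})=E(\mathbf{u},\mathbf{v})=1$ for every edge $(\mathbf{u},\mathbf{v})\in E$ forces $\prod_{i\in\{\pm1,\pm2\}}(\psi(\mathbf{u})-\psi(\mathbf{v})-i)=0$, hence $\psi(\mathbf{u})\ne\psi(\mathbf{v})$, on every edge (\Cref{obs:B-vanishes}); so $\psi$ is a proper $3$-coloring. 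Finally $\widehat{\psi}:=\widehat{\chi}^{*}$ is a degree-$\le d$ extension of $\psi$ with $\delta(\widetilde{\chi},\widehat{\psi})\le 4\gamma$, so choosing $\gamma\le\varepsilon/4$ contradicts the hypothesis. The ``in particular'' statement follows by taking, say, $\varepsilon=1/2$ and noting that a non-$3$-colorable $G$ admits no proper coloring at all.

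The only genuinely delicate point is the accounting of constants: the single $\gamma=\gamma(\varepsilon)$ must simultaneously satisfy $\gamma<\delta_0$ and $4\gamma<0.01$ (so that \Cref{thm:low-degree-testing} applies and the closeness parameters stay within the hypotheses of \Cref{lemma:vanishing-test}), $\gamma<0.04$ (so that a rejection probability $\le\gamma$ contradicts \Cref{lemma:vanishing-test}(2)), $O(\gamma)<1-6d/q$ (so that \Cref{thm:odlsz} upgrades the approximate identities to exact ones, which is why the constant $c$ in \Cref{thm:main-thm} must be taken large enough that $q>Cd^3$ dominates every $O(d/q)$ term, including after replacing the degree parameter $d$ by $3d$ or $6d$), and $\gamma\le\varepsilon/4$. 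Every one of these thresholds is an absolute positive constant (with the $d/q$ contributions uniformly controlled by $q\ge cd^3$), so such a $\gamma$ exists; apart from this bookkeeping the argument is just a chaining of the three black boxes \Cref{thm:low-degree-testing}, \Cref{thm:odlsz}, \Cref{lemma:vanishing-test}, together with the elementary observations \Cref{obs:A-vanishes} and \Cref{obs:B-vanishes}.
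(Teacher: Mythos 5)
Your proof is correct and follows essentially the same route as the paper: low-degree-test soundness to get nearby polynomials, the Line-6 check plus the polynomial distance lemma to lift the arithmetic relations to identities, and the zero-on-variety soundness (\Cref{lemma:vanishing-test}) applied with the subadditivity facts for $V\times V$. The only cosmetic difference is that you argue by contrapositive, applying \Cref{lemma:vanishing-test} to the $B$-test as well to extract a proper $3$-coloring whose extension is $4\gamma$-close to $\widetilde{\chi}$, whereas the paper runs the last step forward, concluding directly that the second $\vvanish$ test rejects with probability at least $0.04-4\gamma>\gamma$.
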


	\begin{proof}[Proof of \Cref{claim:robust-soundness-pcp}]
		Consider the following events:
		\begin{itemize}
			\item $\mathcal{E}_{1}$ denotes the event that at least one of the three $\vldt$ test returns \texttt{Reject}. This event depends on the choice of $(\mathbf{a},\mathbf{b},\bm{\alpha},\bm{\beta},t)$.

			\item $\mathcal{E}_{2}$ denotes the event that
			      \begin{align*}
				      \widetilde{A}[\mathbf{a}] \neq \widetilde{\chi}[\mathbf{a}] \cdot (\widetilde{\chi}[\mathbf{a}]-1) \cdot (\widetilde{\chi}[\mathbf{a}]+1) \quad \text{ OR } \quad \widetilde{B}[\mathbf{a},\mathbf{b}] \neq \widehat{E}(\mathbf{a},\mathbf{b}) \cdot \prod_{i \in \set{\pm 1, \pm 2}} (\widetilde{\chi}[\mathbf{a}] - \widetilde{\chi}[\mathbf{b}] - i).
			      \end{align*}
			      This event depends on the choice of $(\mathbf{a},\mathbf{b})$.

			\item $\mathcal{E}_{3}$ denotes the event that $\vvanish^{\widetilde{A}, \mathcal{\widetilde{M}}_{A}, \mathcal{\widetilde{M}}_{A}'}_{3d}$ returns \texttt{Reject}. This event depends on the choice of $(\bm{\gamma}_{1},  \bm{\gamma}_{2}, \mathbf{a}, t)$.

			\item $\mathcal{E}_{4}$ denotes the event that $\vvanish^{\widetilde{B}, \mathcal{\widetilde{M}}_{B}, \mathcal{\widetilde{M}}_{B}'}_{6d}$ returns \texttt{Reject}. This event depends on the choice of $(\bm{\mu}_{1},\bm{\mu}_{2},\bm{\alpha},t)$.
		\end{itemize}

		\paragraph*{}Let $0 < \gamma < 0.01$ be an appropriate constant that we will choose later. If either of events $\mathcal{E}_{1}, \mathcal{E}_{2}$, or $\mathcal{E}_{3}$ happens with probability $> \gamma$, then we are done. Assume each of the events $\mathcal{E}_{1}, \mathcal{E}_{2}$, and $\mathcal{E}_{3}$ happens with probability $\leq \gamma$. We will show that $\mathcal{E}_{4}$ happens with probability $> \gamma$.

		\paragraph*{}Since $\mathcal{E}_{1}$ happens with probability $\leq \gamma$,
		\Cref{thm:low-degree-testing} implies:
		\begin{itemize}
			\item There exists degree $d$ polynomial $P_{\widetilde{\chi}}(\mathbf{x})$ such that $\delta(P_{\widetilde{\chi}}, \widetilde{\chi}) \leq 4 \gamma$.
			\item There exists degree $(3d)$ polynomial $P_{\widetilde{A}}(\mathbf{x})$ such that $\delta(P_{\widetilde{A}}, \widetilde{A}) \leq 4 \gamma$.
			\item There exists degree $(6d)$ polynomial $P_{\widetilde{B}}(\mathbf{x})$ such that $\delta(P_{\widetilde{B}}, \widetilde{B}) \leq 4 \gamma$.
		\end{itemize}

		\noindent
		We show the following claim on the relation between $P_{\widetilde{\chi}}, P_{\widetilde{A}}$, and $P_{\widetilde{B}}$.\\

		\begin{claim}\label{claim:relation-diff-P}
			Let the polynomials $P_{\widetilde{\chi}}, P_{\widetilde{A}}$, and $P_{\widetilde{B}}$ be as mentioned above. Then for every $\mathbf{x},\mathbf{y} \in \F_{q}^{m}$,
			\begin{equation}\label{eqn:P-chi-A-agree}
				P_{\widetilde{A}}(\mathbf{x}) \; = \; P_{\widetilde{\chi}}(\mathbf{x}) \cdot (P_{\widetilde{\chi}}(\mathbf{x})-1) \cdot (P_{\widetilde{\chi}}(\mathbf{x})+1).
			\end{equation}
			and
			\begin{equation}\label{eqn:P-chi-A-B-agree}
				P_{\widetilde{B}}(\mathbf{x},\mathbf{y}) \; = \; \widehat{E}(\mathbf{x},\mathbf{y}) \cdot \prod_{i \in \set{\pm 1, \pm 2}} (P_{\widetilde{\chi}}[\mathbf{x}] - P_{\widetilde{\chi}}[\mathbf{y}] - i).
			\end{equation}
		\end{claim}
		\begin{proof}[Proof of \Cref{claim:relation-diff-P}]
			The idea is to show that each pair of polynomials in either \Cref{eqn:P-chi-A-agree} or \Cref{eqn:P-chi-A-B-agree} agree on a large fraction of their respective domains. Since these are all low-degree polynomials, the polynomial distance lemma will imply that they are, in fact, equal.\\

			\noindent
			Since the event $\mathcal{E}_{2}$ happens with probability $\leq \gamma$, we have the following two inequalities:
			\begin{equation}\label{eqn:A-chi-agree}
				\Pr_{\mathbf{a} \sim \F_{q}^{m}}\left[\widetilde{A}[\mathbf{a}] \neq \widetilde{\chi}[\mathbf{a}] \cdot (\widetilde{\chi}[\mathbf{a}]-1) \cdot (\widetilde{\chi}[\mathbf{a}]+1)\right] \; \leq \; \gamma
			\end{equation}
			and also
			\begin{equation}\label{eqn:B-chi-agree}
				\Pr_{\mathbf{a},\mathbf{b} \sim \F_{q}^{m}}\left[\widetilde{B}[\mathbf{a},\mathbf{b}] \neq \widehat{E}(\mathbf{a},\mathbf{b}) \cdot \prod_{i \in \set{\pm 1, \pm 2}} (\widetilde{\chi}[\mathbf{a}] - \widetilde{\chi}[\mathbf{b}] - i)\right] \; \leq \; \gamma.
			\end{equation}
			Using $\delta(P_{\widetilde{\chi}}, \widetilde{\chi}) \leq 4 \gamma$, $\delta(P_{\widetilde{A}}, \widetilde{A}) \leq 4 \gamma$, and \Cref{eqn:A-chi-agree} together, via triangle inequality, we get:
			\begin{align*}
				\Pr_{\mathbf{a} \sim \F_{q}^{m}}\left[P_{\widetilde{A}}(\mathbf{a}) \neq P_{\widetilde{\chi}}(\mathbf{a}) \cdot (P_{\widetilde{\chi}}(\mathbf{a})-1) \cdot (P_{\widetilde{\chi}}(\mathbf{a})+1) \right] \; \leq \; 9 \gamma
			\end{align*}
			Since both the polynomials in the above inequality have degree $\leq (3d)$ and because $3d/q < 9 \gamma$, the polynomial distance lemma (\Cref{thm:odlsz}) implies \Cref{eqn:P-chi-A-agree}. An analogous argument shows \Cref{eqn:P-chi-A-B-agree}. This finishes the proof of \Cref{claim:relation-diff-P}.
		\end{proof}

		\paragraph*{}We assumed that event $\mathcal{E}_{3}$ happens with probability $\leq \gamma$. Recall that $\gamma < 0.01 < 0.04$. From \Cref{lemma:vanishing-test}, we can infer that $P|_{\widetilde{A}}$ vanishes on $V$. In other words, using \Cref{eqn:P-chi-A-agree}, we know that $P_{\widetilde{\chi}}$ is an extension of a valid vertex coloring $\widetilde{\chi}$ (i.e. $\widetilde{\chi}$ assigns each vertex a color from the set $\set{-1,0,1}$). We will now show that event $\mathcal{E}_{4}$ happens with probability $> \gamma$.\\

		\noindent
		Recall that $\widetilde{\chi}$ is $\varepsilon$-far from any degree $d$ extension of a proper $3$-coloring of $G$. In particular, for $\gamma < \varepsilon/4$, we know that $P_{\widetilde{\chi}}$ is not an extension of a proper $3$-coloring. In particular, there exists vertices $\mathbf{u}, \mathbf{v}$ such that $(\mathbf{u},\mathbf{v}) \in E$ and $P_{\widetilde{\chi}}(\mathbf{u}) = P_{\widetilde{\chi}}(\mathbf{v})$. In other words, $P_{\widetilde{B}}$ does not vanish on $V \times V$. From \Cref{lemma:vanishing-test}, we know that event $\mathcal{E}_{4}$ happens with probability $\geq 0.04 > \gamma$. Hence we have shown that either one of $\mathcal{E}_{1}, \mathcal{E}_{2}$, or $\mathcal{E}_{3}$ happens with probability $> \gamma$, otherwise $\mathcal{E}_{4}$ happens with probability $> \gamma$. This finishes the proof of \Cref{claim:robust-soundness-pcp}.

	\end{proof}
Hence we have shown that the standard verifier $\mathcal{V}$ has completeness $1$, soundness $\gamma$ for some constant $\gamma \in (0,1)$, uses $\bigO((m+k) \log q)$ random bits, makes $\bigO(1)$ queries to a proof of size $q^{\bigO(m+k)}$ over an alphabet of size $\bigO(d \log q)$, and runs in time $q^{\bigO(m+k)}$. As we discussed earlier, $\mathcal{V}_{\mathsf{PCP}}$ repeats $\mathcal{V}$ for $\bigO(1/\gamma) = \bigO(1)$ times to achieve soundness of $1/2$, and the other parameters remain the same upto $\bigO(1)$ factor.	This finishes the proof of \Cref{thm:main-thm}.
\end{proof}

\section{The PCP Theorem with One Composition}
\label{sec:composition}

In this section, we use the main theorem (\Cref{thm:main-thm}) to give a proof of the PCP theorem with a single composition, composing two different instantiations of our basic PCP.

More precisely, we need an extension of our PCP to a \emph{Robust Assignment-Tester} (or equivalently a \emph{Robust PCP of Proximity} \cite{BGHSV}). The definition below is due to Dinur and Reingold \cite{DinurReingold} with minor changes. In particular, our notion of robust soundness is defined for {\em every} distance parameter instead of fixed quantities. We will assume throughout this section that there is a single growing parameter $n$ and all other parameters ($r,\ell,a$) are (possibly constant) functions of $n.$

\paragraph{Notation.}Recall that two functions $f,g:S\rightarrow T$ are said to be \emph{$\delta$-close} if they differ on at most a $\delta$-fraction of their inputs and \emph{$\delta$-far} if they differ on at least a $\delta$-fraction of their inputs. For a predicate $A$ and an input $x$ in its domain, we define $\delta(x,A^{-1}(1))$ to be $\min_{y:A(y)=1} \delta(x,y)$.\\

\begin{definition}[Robust Assignment-Testers, combining Definitions 3.1 and 3.4 from \cite{DinurReingold}]
	\label{def:testers}
	A \emph{Robust Assignment-Tester} with parameters $(r,\ell,a,\rho)$ is a reduction whose input is a Boolean circuit $\varphi$ of size $n$ over Boolean variables $X$. Letting $R=R(n):=2^{r(n)}$, the reduction runs in time $\poly(n,R)$ and outputs a system of $R$ Boolean circuits $\psi = \{\psi_1, . . . , \psi_R\}$, each $\psi_{i}$ of size at most $a(n)$ over Boolean variables $X$ and auxiliary variables $Y$ such that the following conditions hold.
	\begin{itemize}
		\item Each $\psi_i$ depends on a set of $\ell$ variables from $X \cup Y$, which we denote $\mathrm{Vars}(\psi_i)$. The variables in Y take values in an
		      alphabet $\Sigma$, and are accessible to $\psi_i$ as a tuple of $ \lceil\log_2 (|\Sigma|)\rceil$ bits\footnote{In particular, this means that $\lceil \log_2(|\Sigma|)\rceil\le a(n)$.}.
		\item For every Boolean assignment $\sigma:X\rightarrow \{0,1\}$,
		      \begin{enumerate}
			      \item Completeness: If $\sigma$ satisfies the input circuit $\varphi$, then there exists an assignment $\tau:Y\rightarrow \Sigma$ such that $\sigma\cup \tau$ satisfies all of $\psi_1,\ldots, \psi_R$, i.e.:
                  $$\Pr_{i\in [R]}[\psi_i((\sigma \cup \tau)|_{\mathrm{Vars}(\psi_i)})=1]=1.$$
                  
			      \item {Robust Soundness}: For every assignment $\tau:Y\rightarrow \Sigma$, it holds that
                  $$\E_{i\in [R]}[\delta((\sigma \cup \tau)|_{\mathrm{Vars}(\psi_i)}, \psi_i^{-1}(1))] \ge \rho\cdot \delta(\sigma, \varphi^{-1}(1)).\footnote{Here the relative Hamming distance $\delta()$ is defined over bits since $\psi_i$ is defined to be a {\em Boolean} circuit.}$$
		      \end{enumerate}
	\end{itemize}
	A (not necessarily robust) Assignment-Tester with parameters $(r,\ell,a, \rho)$ is defined as above, except that we replace the Robust Soundness condition with the following weaker Soundness condition.
	\begin{itemize}
		\item Soundness: For every assignment $\tau:Y\rightarrow \Sigma$, it holds that
                  $$\Pr_{i\in [R]}[\psi_i((\sigma \cup \tau)|_{\mathrm{Vars}(\psi_i)})\ne 1] \ge \rho\cdot \delta(\sigma, \varphi^{-1}(1)).$$
	\end{itemize}
\end{definition}

\cite{DinurReingold} showed that the existence of one assignment-tester with specific parameters is enough to prove the PCP theorem. We state the observation formally below, adjusting to our definition.\\

\begin{observation}[Section 3.1 in~\cite{DinurReingold}]
	\label{obs:testersPCPs}
	To prove the PCP theorem i.e.,
	\begin{align*}
		\mathsf{NP}\subseteq \mathsf{PCP}_{1,1-\Omega(1)}[\bigO(\log n), \, \bigO(1), \, \bigO(1)],
	\end{align*}
	it suffices to show that there is an assignment-tester with parameters $r(n) = \bigO(\log n)$, $\ell=\bigO(1),a = \bigO(1)$ and $\rho = \Omega(1).$
\end{observation}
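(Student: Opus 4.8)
The plan is to observe that, since the promised assignment-tester already outputs circuits of \emph{constant} size reading a \emph{constant} number of variables over a \emph{constant}-size alphabet, a single application of it to (a circuit encoding of) an arbitrary $\mathsf{NP}$ instance directly yields a standard PCP verifier in the sense of \Cref{defn:standard-verifier} with the parameters demanded in the statement; no iteration or further composition is required (this is exactly why getting a \emph{constant}-size assignment-tester, rather than a $\mathrm{polylog}$-size one, is the point).

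Concretely, I would fix $L \in \mathsf{NP}$ and use the standard polynomial-time Karp reduction from $L$ to $\mathsf{CircuitSAT}$: an instance $x$ of length $n$ is mapped to a Boolean circuit $\varphi_x$ of size $\poly(n)$ over a variable set $X$ that is satisfiable if and only if $x \in L$. Feeding $\varphi_x$ to the assignment-tester of \Cref{def:testers} produces, in time $\poly(n, R(n)) = \poly(n)$, the system $\psi_1, \dots, \psi_R$ over $X \cup Y$, where $R(n) = \poly(n)$, each $\psi_i$ has size at most $s = \bigO(1)$ and depends on $\ell = \bigO(1)$ variables from $X \cup Y$, and $Y$ ranges over an alphabet $\Sigma$ with $w = \lceil \log_2 |\Sigma| \rceil = \bigO(1)$. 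The PCP verifier $\mathcal{V}$ on input $x$ first recomputes $\varphi_x$ and all the circuits $\psi_i$ (well within the $\poly(n)$ running time allowed by \Cref{defn:standard-verifier}), views its proof oracle $\Pi$ as an assignment to $X \cup Y$ over the alphabet $\{0,1\}^{w}$ (Boolean $X$-variables using a single bit, the rest padded), samples $i \in [R]$ uniformly using $\lceil \log_2 R \rceil = \bigO(\log n)$ random bits, queries the $\ell$ symbols of $\Pi$ indexed by $\mathrm{Vars}(\psi_i)$, and accepts iff those values satisfy $\psi_i$. This is an $(r,\ell,a)$-standard verifier with $r(n) = \bigO(\log n)$, $\ell(n) = \ell = \bigO(1)$, $a(n) = w = \bigO(1)$, and (effective) proof length at most the number of variables appearing in some $\psi_i$, which is at most $R \cdot \ell = \poly(n)$; if $R$ is not a power of two, pad it to one by duplicating circuits, which leaves the rejecting fraction unchanged.

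For completeness, if $x \in L$ then $\varphi_x$ has a satisfying assignment $\sigma$, and the completeness clause of \Cref{def:testers} furnishes $\tau : Y \to \Sigma$ such that $\sigma \cup \tau$ satisfies every $\psi_i$; taking $\Pi = \sigma \cup \tau$ makes $\mathcal{V}^{\Pi}(x)$ accept with probability $1$. For soundness, if $x \notin L$ then $\varphi_x$ is unsatisfiable, so under the standard convention that the distance from any string to the empty set is $1$ (in particular, more than $\delta$), every assignment $\sigma$ to $X$ is $\delta$-far from the set of satisfying assignments of $\varphi_x$; the soundness clause of \Cref{def:testers} then says that for any $\tau$, hence for any proof $\Pi = \sigma \cup \tau$, a $(1-\varepsilon)$-fraction of the indices $i$ have $(\sigma \cup \tau)|_{\mathrm{Vars}(\psi_i)}$ failing $\psi_i$, so $\mathcal{V}$ rejects with probability at least $1-\varepsilon = \Omega(1)$. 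Hence $L \in \mathsf{PCP}_{1,\,1-\Omega(1)}[\bigO(\log n),\ \bigO(1),\ \bigO(1)]$, and since $L$ was an arbitrary language in $\mathsf{NP}$ this is exactly the PCP theorem as stated (one may then amplify the rejection probability to $1/2$ by $\bigO(1)$ independent repetitions, at no asymptotic cost to any parameter).

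The only genuinely delicate point — and the one I would be careful to spell out — is the degenerate case $x \notin L$, where $\varphi_x$ has \emph{no} satisfying assignment: one must invoke (or check that the definition in \cite{DinurReingold} invokes) the convention that ``$\delta$-far from every satisfying assignment'' holds vacuously when there are none, so that the soundness guarantee of the assignment-tester applies. Everything else is routine bookkeeping: encoding $\sigma \cup \tau$ over a constant-size alphabet, translating $R(n) = \poly(n)$ into $\bigO(\log n)$ coins, and noting that the verifier's allotted $\poly(n)$ running time suffices to construct all of $\psi_1, \dots, \psi_R$.
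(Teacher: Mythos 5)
Your proposal is correct, and it is essentially the standard argument: the paper itself gives no proof of \Cref{obs:testersPCPs}, citing Section~3.1 of \cite{DinurReingold} as a black box, and what you write out (Karp-reduce to CircuitSAT, run the assignment-tester, have the verifier pick a random $\psi_i$ with $\bigO(\log n)$ coins and check the $\ell$ queried symbols, with the vacuous ``$\delta$-far from the empty set of satisfying assignments'' convention handling $x \notin L$) is exactly the argument behind that citation. The only nitpick is the power-of-two padding step: duplicating circuits does not literally preserve the rejecting fraction unless done uniformly, but mapping $\lceil \log_2 R\rceil$-bit strings onto $[R]$ changes the rejection probability by at most a constant factor, which still gives soundness $1-\Omega(1)$, so nothing is lost.
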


We will prove the existence of such an assignment-tester by composing two \emph{robust} assignment-testers with suitable parameters. The process of composition starts with two robust assignment-testers $\mathcal{A}_1$ and $\mathcal{A}_2$ and produces a robust assignment-tester $\mathcal{A}_3$ with parameters that are dictated by the parameters of $\mathcal{A}_1$ and $\mathcal{A}_2$. The robust assignment-tester $\mathcal{A}_{3}$ tests assignments to the Boolean circuit $\varphi$ that is an input to $\mathcal{A}_1$.\newline
The construction is simple: The tester $\mathcal{A}_3$ first runs $\mathcal{A}_1$ with input $\varphi$ and then runs $\mathcal{A}_2$ on each of the circuits produced by $\mathcal{A}_1.$ The high-level idea behind it is that we test an assignment to the input circuit $\varphi$ to $\mathcal{A}_1$ by testing that each of the `local views' of the various circuits produced by $\mathcal{A}_1$ is close to a satisfying assignment, where the latter process is carried out by the circuits produced by reduction $\mathcal{A}_2.$ This leads us to the following observation of Dinur and Reingold~\cite[Lemma 3.5]{DinurReingold}.\\

\begin{lemma}[Composing robust assignment-testers: Lemma 3.5 in \cite{DinurReingold}]
	\label{lem:DRcomposition}
	Suppose there exist robust assignment-testers $\mathcal{A}_1$ and $\mathcal{A}_2$ with parameters $(r_1,\ell_1,a_1,\rho_1)$ and  $(r_2,\ell_2,a_2,\rho_2)$ respectively. Then there also exists a robust assignment-tester $\mathcal{A}_3$ with parameters $(r_3,\ell_3,a_3,\rho_3)$
	where
	\begin{itemize}
		\item $r_3(n) = r_1(n) + r_2(a_1(n)),$
		\item $\ell_3(n) = \ell_2(a_1(n)),$
		\item $a_3(n) = a_2(a_1(n)),$
		\item $\rho_3(n) = \rho_1(n)\cdot\rho_2((a_1(n))).$
	\end{itemize}
\end{lemma}

\begin{proof}
    Following the high-level idea above, for a given Boolean circuit $\varphi$ of size $n$, the reduction $\mathcal{A}_3$ outputs a certain number of Boolean circuits as follows. The reduction $\mathcal{A}_3$ first runs $\mathcal{A}_1$ to produce Boolean circuits $\psi_1,\psi_2,\dots, \psi_{R_1}$, where $R_1=2^{r_1(n)}$, each of size at most $a_1(n)$. Now for each $\psi_i$, the reduction now runs $\mathcal{A}_2$ with input as $\psi_i$, to produce circuits $\psi_{i,j}$ for $j=1,2,\dots, R_2$, where $R_2=2^{r_2(a_1(n))}$ (since the circuit input to $\mathcal{A}_2$ is of size at most $a_1(n)$). The final output of $\mathcal{A}_3$ will be the circuits $(\psi_{i,j})_{i\in [R_1], j\in [R_2]}$ (hence, $r_3(n)=r_1(n)+r_2(a_1(n))$). We note that the running time of $\mathcal{A}_3$ is $\poly(R_1R_2,n)$, the size of the output circuits is at most $a_2(a_1(n))$, and each output circuit depends on at most $\ell_2(a_1(n))$ variables. 
    
    Let $\Sigma_1$ be the alphabet corresponding to auxiliary variables of $\mathcal{A}_1$ for input circuits of size $n$ and $\Sigma_2$ be the alphabet corresponding to the auxiliary variables of $\mathcal{A}_2$ for input circuits of size $a_1(n)$. Let $X$ denote the variables of $\varphi$, $Y$ denote the auxiliary variables of $\mathcal{A}_1$ for the circuit $\varphi$, and $Y_i$ denote the auxiliary variables of $\mathcal{A}_2$ for the circuit $\psi_i$ (for $i\in [R_1]$). Thus, the auxiliary variables of $\mathcal{A}_3$ are $Y \cup \cup_{i\in [R_1]} Y_i$ with alphabet $\Sigma_2$; i.e., we will treat the bits corresponding to $Y$ as being over $\Sigma_2$ as well (this can be without increasing the size and the number of variables each final circuit depends on). We will now show completeness and soundness.  
    
    If $\varphi(\sigma)=1$ for some $\sigma:X\to \{0,1\}$, by the completeness of $\mathcal{A}_1$, we note that there exists $\tau:Y\to \Sigma_1$ such that $\psi_i(\alpha_i)=1$ for all $i\in [R_1]$, where $\alpha_i := (\sigma \cup \tau)|_{\textrm{Vars}(\psi_i)}$. Now, by the completeness of $\mathcal{A}_2$, this means that there exists $\tau_i:Y_i \to \Sigma_2$ such that $\psi_{i,j}(\beta_{i,j})$ for all $j\in [R_2]$, where $\beta_{i,j}:=(\alpha_i \cap \tau_i)|_{\mathrm{Vars}}(\psi_{i,j})$. Therefore, we see that for the assignment $\sigma \cup \tau \cup_{i\in [R_1]}\tau_i$, all the circuits $\psi_{i,j}$ are satisfied. 

    To show soundness, let $\delta:=\delta(\sigma, \varphi^{-1}(1))$ and $\tau:Y\to \Sigma_1$ and $\tau_{i}:Y_i\to \Sigma_2$ be arbitrary assignments to the auxiliary variables of $\mathcal{A}_1$ and $\mathcal{A}_2$, for $i\in [R_1]$. Again, let $\alpha_i := (\sigma \cup \tau)|_{\textrm{Vars}(\psi_i)}$ and $\beta_{i,j}:=(\alpha_i \cap \tau_i)|_{\mathrm{Vars}}(\psi_{i,j})$. By the robust soundness condition of $\mathcal{A}_1$, we have that
    $$\E_{i\in [R_1]}[\delta(\alpha_i, \psi_i^{-1}(1))] \ge \rho_1 \delta .$$ Similarly, for each $i\in [R_1]$, by the robust soundness of $\mathcal{A}_2$, we have that 
    $$\E_{j\in [R_2]}[\delta(\beta_{i,j}, \psi_{i,j}^{-1}(1))] \ge \rho_2 \delta(\alpha_i, \psi_i^{-1}(1)),$$ where $\rho_2=\rho_2(a_1(n))$ since the size of $\psi_i$ is at most $a_1(n)$.
    Taking expectation over a uniform random $i\in [R_1]$ on both sides of the above equation, we get the desired robust soundness for $\mathcal{A}_3$:
    $$\E_{i\in [R_1], j\in [R_2]} [\delta(\beta_{i,j}, \psi_{i,j}^{-1}(1))] \ge \rho_2 \E_{i\in [R_1]}[\delta(\alpha_i, \psi_{i}^{-1}(1))] \ge \rho_1 \rho_2 \delta.$$ This finishes the proof of the composition lemma. 
\end{proof}

\paragraph{Turning our PCPs into robust assignment-testers.} To use the above lemma, we instantiate our main theorem with the two different choices of varieties used in \Cref{lem:polylogn-pcp} and \Cref{lem:neps-pcp}. While these statements only yield PCPs (a weaker object than assignment-testers) for the $3$-$\mathsf{COLOR}$ problem, we show that they easily yield assignment-testers for Boolean circuits using two basic ingredients: basic properties of standard reductions showing $3$-$\mathsf{COLOR}$ is NP-hard and the local correctability of the polynomial witnesses in the PCP proof of \Cref{thm:main-thm} using \Cref{thm:local-correction}.

The following is a basic property of standard reductions form Circuit-SAT to $3$-$\mathsf{COLOR}$. One can prove it e.g. by using the standard Tseitin transformation (reducing Circuit-SAT to $3$-SAT) followed by the reduction from $3$-SAT to $3$-$\mathsf{COLOR}$ in \cite[Theorem 2.1]{GareyJohnsonStockmeyer}.\\

\begin{lemma}
	\label{lem:3color-redn}
	There is a polynomial-time reduction from Circuit-SAT to 3-coloring that satisfies the following. On input a Boolean circuit $\varphi$, the graph $G = (V,E)$ produced by the reduction has the property that for any satisfying assignment $\sigma:X\rightarrow \{0,1\}$ of $\varphi$, there is a proper $3$-coloring $\chi:V\rightarrow \{-1,0,1\}$ such that $|V| = O(|\varphi|)$, $\chi$ restricts to $\sigma$ on a fixed subset $V_0$ of $V$ (here $V_0$ is in $1$-$1$ correspondence with $X$) and $\chi(v_0) = -1$ for some fixed $v_0\in V\setminus V_0$. Furthermore, \emph{any} proper $3$-coloring $\chi:V\rightarrow \{-1,0,1\}$ satisfying $\chi(v_0) = -1$, upon restriction to $V_0$, yields a satisfying assignment $\sigma':X\rightarrow \{0,1\}$ of the circuit $\varphi.$
\end{lemma}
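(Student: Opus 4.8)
The plan is to build the reduction as the composition of two classical and well-understood reductions and then to check that the two extraction guarantees survive the composition. First I would apply the standard \emph{Tseitin transformation} to the input circuit $\varphi$ over input variables $X$: introduce one fresh variable $z_g$ for each gate $g$, and for each gate write $O(1)$ clauses of width at most $3$ enforcing that $z_g$ equals the gate operation applied to the (variables of the) inputs of $g$; finally add a unit clause forcing the output-gate variable to $1$. This produces a $3$-CNF formula $\psi$ over $X \cup Z$ with $|\psi| = O(|\varphi|)$ clauses such that every satisfying $\sigma \colon X \to \{0,1\}$ of $\varphi$ extends \emph{uniquely} to a satisfying assignment $(\sigma,\zeta)$ of $\psi$, and conversely every satisfying assignment of $\psi$ restricts on $X$ to a satisfying assignment of $\varphi$.

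Next I would feed $\psi$ into the reduction from $3$-SAT to $3$-$\mathsf{COLOR}$ of \cite[Theorem 2.1]{GareyJohnsonStockmeyer}: take a base triangle $\{v_0, v_T, v_F\}$ (whose three vertices therefore receive the three distinct colours in any proper colouring); for each variable $y \in X \cup Z$ a pair of vertices $v_y, v_{\bar y}$ forming a triangle with $v_0$ (so $\{v_y, v_{\bar y}\}$ receive the two colours other than $\chi(v_0)$, encoding a truth value and its negation); and for each clause the standard OR-gadget on $O(1)$ fresh vertices, wired to $v_T, v_F$, with the property that, given the colours of its three incident literal-vertices, the gadget extends to a proper $3$-colouring if and only if not all three literal-vertices carry the ``false'' colour. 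Set $V_0 := \{v_x : x \in X\}$ with the obvious bijection to $X$, and let $v_0$ be the base vertex; then $v_0 \in V \setminus V_0$ and $|V| = O(|X| + |Z| + |\psi|) = O(|\varphi|)$, so the whole reduction runs in polynomial time.

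For the forward direction, given a satisfying $\sigma$ I would extend it to $(\sigma,\zeta) \models \psi$, colour $v_0, v_T, v_F$ by $-1, 1, 0$ respectively, colour each $v_y$ by the value of $y$ and $v_{\bar y}$ by its complement, and colour each OR-gadget by the completion guaranteed by its defining property (a satisfied clause has a literal-vertex of colour $1$); this $\chi$ is a proper $3$-colouring with $\chi(v_0) = -1$ and $\chi|_{V_0} = \sigma$. For the reverse direction, from \emph{any} proper $3$-colouring $\chi$ with $\chi(v_0) = -1$ I would read $\sigma'(x) := \chi(v_x) \in \{0,1\}$ (and $\zeta'(z) := \chi(v_z)$); the gadget property forces every clause of $\psi$ to contain a literal whose vertex is coloured $1$, i.e.\ a true literal under $(\sigma',\zeta')$, so $(\sigma',\zeta') \models \psi$ and hence $\sigma' \models \varphi$. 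The one point that needs genuine care — and the main obstacle — is that a proper $3$-colouring with $\chi(v_0) = -1$ only determines truth values \emph{up to the swap of the two non-base colours}, since graph $3$-colouring is invariant under colour permutations; so ``$v_x$ has colour $1$'' is meaningful as a truth value only once this ambiguity is pinned. The standard device is to treat $v_T$ as a second distinguished vertex whose colour is fixed to $1$ (so that $\{v_0, v_T\}$ together break the symmetry), which the reduction — and, in the composition in \Cref{sec:composition}, the verifier of \Cref{thm:main-thm} that reads off $\widehat{\chi}$ — can both enforce; with this normalisation the two displayed arguments go through verbatim. Everything else is routine, and the size bound $|V| = O(|\varphi|)$ is immediate from the gadget sizes.
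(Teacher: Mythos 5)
Your route is exactly the one the paper intends (the paper only sketches it): Tseitin from Circuit-SAT to $3$-SAT, then the $3$-SAT-to-$3$-$\mathsf{COLOR}$ reduction of \cite{GareyJohnsonStockmeyer}, with $V_0$ the variable vertices and $v_0$ a palette vertex. The point you flag as ``needing genuine care'' is indeed the crux, and it is worth being explicit that it is an obstacle to \Cref{lem:3color-redn} \emph{as stated}, not merely to this particular construction. If $\chi$ is a proper $3$-coloring with $\chi(v_0)=-1$ whose restriction to $V_0$ is a satisfying assignment $\sigma$ with values in $\{0,1\}$, then composing $\chi$ with the transposition of the colors $0$ and $1$ gives another proper $3$-coloring, still with value $-1$ at $v_0$, whose restriction to $V_0$ is the complement $\neg\sigma$; the ``furthermore'' clause would then force $\neg\sigma$ to satisfy $\varphi$ as well, which already fails for $\varphi = x$. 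So no reduction can satisfy the statement verbatim with a single anchored vertex: one must either anchor a second vertex (e.g.\ additionally require $\chi(v_T)=1$, as you suggest) or decode truth values relative to the color of such a reference vertex rather than reading off $\chi|_{V_0}$ literally.

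Where your write-up does not close the gap is the claim that the \emph{reduction} ``can enforce'' $\chi(v_T)=1$: a reduction outputs only a graph, and no graph can pin the color of $v_T$ once only the color of $v_0$ is externally checked --- the same transposition argument applies. The fix has to live in the statement and in its downstream use: the lemma should distinguish two vertices $v_0, v_T$ (with $\chi(v_0)=-1$ and $\chi(v_T)=1$, say), and the Tester of \Cref{lem:PCPstoRATs} (\Cref{algo:AT}) must spend one extra local-correction query (via \Cref{thm:local-correction}) to check the value at $v_T$; with that amendment your completeness and soundness arguments, and the rest of \Cref{sec:composition}, go through with only $O(1)$ additional queries. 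In short: same approach as the paper, and you correctly identified a genuine flaw in the lemma as stated, but your patch as phrased (attributing the symmetry-breaking to the reduction alone) does not prove the stated lemma --- it proves, and the paper needs, the two-anchor variant together with the corresponding extra check in the tester.
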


Given the above, we can easily modify the PCPs from the proofs of \Cref{thm:main-thm} to yield assignment-testers for CircuitSAT. We will show how to do this below.

Moreover, we will use an idea of Dinur and Reingold to make these assignment-testers \emph{robust} using the simple process of encoding the input symbols by an explicit asymptotically good error-correcting code. More precisely, we use the following lemma.\\

\begin{lemma}[Robustization: Lemma 3.6 in~\cite{DinurReingold}]
	\label{lem:robustization}
	There is an absolute constant $c_1$ such that if there is an assignment-tester $\mathcal{A}$ with parameters $(r,\ell,a,\rho)$, then there is also a \emph{robust} assignment-tester $\mathcal{A}'$ with parameters $(r' = r,\ell' = c_1 \ell a, a'=c_1 \ell a, \rho'= \rho/(c_1 \ell))$.
\end{lemma}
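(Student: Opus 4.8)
The plan is to carry out the standard distance‑amplification argument, whose only real ingredient is a fixed asymptotically good binary error‑correcting code. Concretely, I would fix a family of codes $E : \{0,1\}^{w} \to \{0,1\}^{w'}$ with $w' = O(w)$ (constant rate) and relative minimum distance at least an absolute constant $\rho_0 > 0$ --- for instance an explicit Justesen or expander code --- extended to act coordinate‑wise on tuples of $\Sigma$‑symbols and usable on single bits by padding to a length‑$w$ message. The robust tester $\mathcal{A}'$ first runs $\mathcal{A}(\varphi)$ to obtain the constraints $\psi_1,\dots,\psi_R$ over $X\cup Y$. It then replaces each $\psi_i$ by a single constraint $\psi_i'$ over the same $\ell$ positions but over the larger alphabet $\Sigma' = \{0,1\}^{w'}$: the auxiliary variables $Y'$ consist of re‑encoded copies of the old auxiliary variables (and of private copies of the input bits that each constraint reads), where a variable over $\Sigma'$ is \emph{supposed} to hold the $E$‑encoding of the corresponding $\mathcal{A}$‑value, and $\psi_i'$ accepts an assignment to its $\ell$ positions iff every position holds a legal codeword \emph{and} the tuple of decoded values satisfies $\psi_i$. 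To tie the private copies of input bits back to the genuine Boolean variables in $X$, one also keeps, for each such pair, a plain two‑bit equality constraint; these are trivially $\tfrac12$‑robust and hence cost nothing in the final robustness parameter. The output parameters are then $R' = R$, $\ell' = \ell$ (or $\ell+O(1)$), $s' = O(s)$, $w' = O(w) = \poly(w)$, $\delta' = \delta$, $\varepsilon' = \varepsilon$, and $\rho' = \Theta(1/\ell)$.

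Completeness is immediate: from a satisfying $\sigma$ for $\varphi$, take the $\tau$ promised by completeness of $\mathcal{A}$ and set every new variable to the $E$‑encoding of the corresponding old value; every $\psi_i'$ and every equality constraint is then satisfied. For robust soundness, assume $\sigma$ is $\delta$‑far from every satisfying assignment of $\varphi$ and fix an arbitrary assignment to $X\cup Y'$. Decode each $\Sigma'$‑value to its unique nearby codeword if it lies within the decoding radius $\rho_0 w'/2$, and otherwise mark that position ``corrupt''; reading off the decoded values defines an assignment $\tau : Y\to\Sigma$ to plug into the soundness of $\mathcal{A}$. For a $(1-\varepsilon)$‑fraction of the indices $i$, the decoded local view $(\sigma\cup\tau)|_{\mathrm{Vars}(\psi_i)}$ fails $\psi_i$. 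For such an $i$, look at the length‑$\le\ell w'$ local view of $\psi_i'$: if any of its $\ell$ positions is ``corrupt'', that position alone is $\ge\rho_0/2$‑far (in relative Hamming distance) from every legal codeword, so the whole view is $\ge\rho_0/(2\ell)$‑far from any satisfying assignment of $\psi_i'$; otherwise every position is within decoding radius of a codeword, and since the decoded tuple fails $\psi_i$, reaching any satisfying assignment of $\psi_i'$ forces at least one position to move to a \emph{different} codeword, costing at least $\rho_0 w' - \rho_0 w'/2 = \rho_0 w'/2$ bits, again $\ge\rho_0/(2\ell)$ of the view. Taking $c_1$ large enough that $1/(c_1\ell)\le\rho_0/(2\ell)$ and that $c_1 s$ dominates the size blow‑up yields the claimed parameters.

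The step I expect to be the real obstacle is making the robustness a genuine absolute constant times $1/\ell$ with no hidden dependence on $w$, and in particular handling constraints that must ``see'' raw input bits: since a single raw bit is a vanishing fraction of a $\Theta(\ell w')$‑bit local view, a one‑bit flip could otherwise cheaply repair a constraint and destroy robustness. The design above is tailored to avoid this --- raw input variables never enter the large‑alphabet constraints $\psi_i'$ directly, only through encoded private copies whose legality is checked inside $\psi_i'$, and the unavoidable cheap consistency checks are isolated into two‑variable equality constraints that are automatically $\tfrac12$‑robust --- but verifying that this does not shrink the $(1-\varepsilon)$ happy fraction (e.g.\ by suitably weighting, or bounding the number of, equality constraints) and that all alphabet and size bounds come out as $\poly(w)$ and $O(s)$ is exactly the bookkeeping carried out in Lemma~3.6 of \cite{DinurReingold}, which we invoke as a black box.
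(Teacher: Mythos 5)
The paper does not actually prove this lemma: it is imported verbatim as Lemma~3.6 of Dinur--Reingold and used purely as a black box, so there is no in-paper argument to compare yours against. Your sketch does reconstruct the standard encoding-by-a-good-code idea that underlies the cited lemma, and your closing move (deferring the bookkeeping to DR's Lemma~3.6) coincides with what the paper itself does --- but that makes your text a citation, not a proof: the final appeal is to exactly the statement being established, so as a standalone argument it is circular.

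Moreover, the part you flag as ``the real obstacle'' is genuinely not resolved by the construction you describe. First, adding separate two-variable equality constraints changes the number of output constraints and dilutes the fraction of violated ones, so the claimed parameters $R'=R$ and $\varepsilon'=\varepsilon$ are no longer met without a further argument. Second, those equality constraints are not ``trivially $\tfrac12$-robust'' under the bit-level distance that robustness must be measured in for composition (Lemma~3.5 feeds the \emph{bits} of the view to the inner tester): an equality check between a raw bit and a $w'$-bit encoded copy has views that are only $1/(w'+1)$-far from satisfying, since flipping the single raw bit repairs it. Third, there is a concrete adversary your soundness argument does not handle: let the prover write valid encodings of some satisfying assignment $\sigma^*$ of $\varphi$ that is $\delta$-far from $\sigma$. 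Then every $\psi_i'$ (which sees only encoded private copies) is satisfied outright, and the only violated constraints are the equality checks at positions where $\sigma\neq\sigma^*$, whose views are $1/(w'+1)$-close to satisfying --- so essentially no constraint has a $\rho'=1/(c_1\cdot\ell)$-far view, and robust soundness fails. Fixing this (e.g.\ by giving queried input bits weight comparable to a code block, or folding the consistency checks into $\psi_i'$ in a way that forces any repair to alter an entire codeword) is precisely the content of DR's Lemma~3.6; without it, the proposal has a real gap rather than mere bookkeeping.
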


We are now ready to state the main lemma of this section, which says that the PCPs from the previous section can be turned into robust assignment-testers with suitable parameters.\\

\begin{lemma}
	\label{lem:PCPstoRATs}
	There exist absolute constants $c,\rho>0$ such that the following holds. Assume that for every $n\geq 1$, there exist $q = q(n),m = m(n),d = d(n),k = k(n)$ such that $q \geq c d^3$ and a
	variety $V_n\subseteq \F_q^m$ of size $\omega(n)$ constructible in time $\poly(n)$ with extension degree $d$ and Macaulay complexity $k$. Then, there is a robust assignment-tester with parameters
	\[\left( r={O((k+m)\log q)},\ell = O(d\log q),a=O(d\log q),\rho\right)\]  
\end{lemma}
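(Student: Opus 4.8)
The plan is to realize the tester as the PCP of \Cref{thm:main-thm} for $3$-$\mathsf{COLOR}$ on the vertex set $V_n$, augmented with a constant-overhead input-consistency gadget, with its soundness amplified by a constant number of parallel repetitions, and then robustized via \Cref{lem:robustization}. Given a Boolean circuit $\varphi$ of size $n$ over variables $X$ (assuming $n$ large enough that $|V_n|\ge|V|$, the finitely many small cases being trivial), first run the reduction of \Cref{lem:3color-redn} to get a graph $G=(V,E)$ with $|V|=O(n)$, a subset $V_0\subseteq V$ in bijection with $X$, and a distinguished vertex $v_0$. Fix an injection $V\hookrightarrow V_n$ and regard $G$ as a graph on $V_n$ whose extra vertices are isolated. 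The reduction also computes, in time $q^{O(m)}$ using \Cref{lem:alg-grobner}, \Cref{lem:alg-extension-degree}, and \Cref{clm:rt-gauss}, a Gr\"obner generating set of $\I(V_n)$ of size $k$, the extension degree $d$, and a degree-$\le 2d$ extension $\widehat E$ of the edge function (degree bound from \Cref{lemma:subaddgrob}). The auxiliary variables $Y$ are the proof oracles $\Pi=(\widehat\chi,\widehat\chi_{\mathrm{lines}}^{(d)},A,A_{\mathrm{lines}}^{(3d)},\mathcal M_A,\mathcal M_{A,\mathrm{lines}}^{(3d)},B,B_{\mathrm{lines}}^{(6d)},\mathcal M_B,\mathcal M_{B,\mathrm{lines}}^{(6d)})$ of \Cref{eqn:correct-proof}, over the alphabet $\F_q^{O(d)}$.

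Each output circuit is indexed by a random string with three parts: the randomness of one run of the verifier of \Cref{algo:verifier-tests}; $T=\lceil 2/\delta\rceil$ i.i.d.\ uniform samples $v^{(1)},\dots,v^{(T)}\in V_0$, each with its own local-correction randomness; and local-correction randomness for $v_0$. On this string the circuit runs the verifier of \Cref{algo:verifier-tests} and, additionally, for each $j$ uses \Cref{algo:local-correction} to read $\widehat\chi$ at $v^{(j)}$ and checks the corrected value equals the input bit $\sigma(x_{v^{(j)}})$, and checks the correction of $\widehat\chi$ at $v_0$ equals $-1$; it accepts iff all checks pass. Finally the whole verifier is repeated $t=O(1)$ times in parallel with fresh randomness. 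By \Cref{defn:standard-verifier} the verification circuit does only $\F_q^{O(d)}$-arithmetic and comparisons, so it has size $t\cdot(T+O(1))\cdot\poly(d\log q)=\poly((d\log q)/\delta)$; it uses $t\cdot\big(O((k+m)\log q)+T\cdot O(m\log q)\big)=O((k+m/\delta)\log q)$ random bits, giving $R=q^{O(k+m/\delta)}$; and it queries $t\cdot(O(T)+O(1))=O(1/\delta)$ positions of $X\cup Y$. It is crucial here that the query locations $\varphi(\bm\alpha)$ demanded by $\vvanish$ and the scalars $\widehat E(\mathbf a,\mathbf b)$ are computed by the reduction (in time $q^{O(m)}\le\poly(R)$) and only the resulting indices and constants enter the circuit, so the possibly-large $k$ and the $q^{O(m)}$ precomputation do not inflate $s$.

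Completeness is direct: if $\sigma$ satisfies $\varphi$, \Cref{lem:3color-redn} gives a proper $3$-coloring $\chi:V\to\{-1,0,1\}$ with $\chi|_{V_0}=\sigma$ and $\chi(v_0)=-1$; extend $\chi$ by $\{-1,0,1\}$-values on the isolated vertices and then to a degree-$d$ polynomial $\widehat\chi$ (possible since $V_n$ has extension degree $d$), and assemble $\Pi$ as in the proof of \Cref{thm:main-thm}. Then the verifier of \Cref{algo:verifier-tests} accepts with probability $1$, every consistency check passes exactly, and $\widehat\chi(v_0)=-1$, so $\sigma\cup\Pi$ satisfies all the circuits. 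For soundness I would first prove the core fact that there is a universal constant $\gamma'>0$ such that, if $\sigma$ is $\delta$-far from every satisfying assignment of $\varphi$, then for any $\tau$ one un-amplified run rejects with probability $\ge\gamma'$; the $t=O(1/\gamma')$ repetitions push rejection to $\ge 3/4$, which is exactly assignment-tester soundness with $\varepsilon=1/4$, and \Cref{lem:robustization} then delivers the robust tester with $s\mapsto c_1 s$, alphabet $q^{O(d)}$, and $\rho=1/(c_1\ell)$. For the core fact, suppose a run accepts with probability $>1-\gamma'$; then each sub-test fails with probability $<\gamma'$. Choosing $\gamma'$ below the constant $\gamma(\varepsilon_0)$ of \Cref{claim:robust-soundness-pcp} for a small universal $\varepsilon_0$, that claim supplies a proper $3$-coloring $\psi$ of $G$ and a degree-$d$ polynomial $P_{\widetilde\chi}$ with $P_{\widetilde\chi}|_{V_n}=\psi$ and $\delta(\widetilde\chi,P_{\widetilde\chi})\le\varepsilon_0$. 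Applying \Cref{thm:local-correction} at $v_0$ forces $\psi(v_0)=-1$ (otherwise the $v_0$-check rejects with probability $\ge 1-2\sqrt{\varepsilon_0}-d/(q-1)>\gamma'$), and applying it at the $v^{(j)}$'s forces $\psi$ to disagree with $\sigma$ on at most a $\delta$-fraction of $V_0$, since a disagreement fraction $\eta>\delta$ makes each trial fail with probability $\ge 0.9\eta$ and hence makes all $T=\lceil 2/\delta\rceil$ trials succeed with probability $\le(1-0.9\delta)^T\le e^{-1.8}<1-\gamma'$. The ``furthermore'' clause of \Cref{lem:3color-redn} then turns $\psi|_{V_0}$ into a satisfying assignment $\sigma'$ of $\varphi$ with $\delta(\sigma,\sigma')\le\delta$, contradicting the hypothesis.

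The step I expect to require the most care is the two-sided bookkeeping around the parameter $\delta$ and \Cref{defn:standard-verifier}: one must see that certifying an $\delta$-fraction of disagreement between $\widehat\chi|_{V_0}$ and $\sigma$ on the \emph{sparse} subset $V_0\subseteq V_n$ genuinely needs $\Theta(1/\delta)$ independent local-correction queries, which is the sole origin of the $1/\delta$ factors in $R$, $s$, and $\ell$, while at the same time routing the Gr\"obner data, the evaluations $\varphi(\bm\alpha)$, the scalars $\widehat E(\mathbf a,\mathbf b)$, and the embedding $V\hookrightarrow V_n$ into the reduction's $q^{O(m)}$-time precomputation (hence into $R$) rather than into the verification circuit, so that $s=\poly((d\log q)/\delta)$ even when $k\gg d$.
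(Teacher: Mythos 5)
Your proposal is correct and follows essentially the same route as the paper: the PCP of \Cref{thm:main-thm} on the variety $V_n$ augmented with local-correction checks of $\widehat\chi$ at $v_0$ and at $\Theta(1/\delta)$ random vertices of $V_0$, soundness via \Cref{claim:robust-soundness-pcp}, \Cref{thm:local-correction} and the reduction of \Cref{lem:3color-redn}, and a final application of \Cref{lem:robustization}; your explicit bookkeeping that the $q^{O(m)}$ precomputation ($\varphi(\bm\alpha)$, $\widehat E(\mathbf a,\mathbf b)$, the Gr\"obner data) is hard-wired by the reduction so that $s=\poly((d\log q)/\delta)$ matches the paper's intent. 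The only cosmetic differences are that you amplify by repeating the whole tester $t=O(1)$ times rather than only the PCP part, and in the consistency step you should conclude directly from $\eta\ge\delta$ (which farness gives once $\psi|_{V_0}$ is satisfying) that the block accepts with probability at most $e^{-\Omega(1)}$, rather than deriving $\eta\le\delta$ and invoking farness, which leaves the harmless boundary case $\eta=\delta$.
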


\begin{proof}
	The constant $c$ is chosen from \Cref{thm:main-thm}. We can assume without loss of generality that $c \geq 100.$

	We can now describe the reduction $\mc{A}$ behind the assignment-tester (we will make it robust below). The reduction first reduces the given instance $\varphi$ of CircuitSAT to an instance $G = (V,E)$ of $3$-$\mathsf{COLOR}$ using the polynomial-time reduction described in \Cref{lem:3color-redn} (so $|V| = O(n)$). Fix $V_0\subseteq V$ (which is $1$-$1$ correspondence with the set of variables $X$ of $\varphi$) and $v_0\in V\setminus V_0$ as mentioned above.

	The reduction $\mathcal{A}$ now constructs a variety $V_n$ as in the hypothesis of the lemma. Note that $|V_n| = \omega(n) \geq |V|$. After adding some isolated vertices to $G$ if required, we can identify $V$ with $V_n$.

	We now consider the assignment-tester specified in \Cref{algo:AT} with $c_2$ being an absolute constant that we will choose below. For any choice of this constant, the algorithm uses $r_1 = O((k+m)\log q)$ many random bits, makes $\ell_1 = O(1)$ queries to its input oracles $\sigma$ and $\tau$ where the former is defined over the Boolean alphabet and the latter is defined over an alphabet of size $O(d\log q).$ The reduction $\mc{A}$ iterates over all sequences $b$ of $r_1$ random coin tosses used by the algorithm and for each it produces a Boolean circuit $\psi_b$ that performs the checks specified in the algorithm.

	\begin{algobox}
		\begin{algorithm}[H]
			\caption{Assignment-Tester}
			\label{algo:AT}

			\DontPrintSemicolon

			\KwIn{Degree parameter $d$ and oracle access to $(\sigma,\tau)$ where $\sigma:X\rightarrow \{0,1\}$, $\tau:Y\rightarrow \Sigma$, where $\Sigma$ is the alphabet of the PCP verifier from \Cref{thm:main-thm} and $|Y|=q^{O(k+m)}$ is the length of the proof.}

			\vspace{3mm}
			Run the PCP verifier from \Cref{thm:main-thm} on $\tau$ independently $c_2$ times. If the PCP verifier rejects even once, then \Return{\texttt{Reject}}
			\tcp*{Random bits $O(c_2(k+m)\log q)$, number of queries $O(1)$, alphabet size $O(d\log q)$.}

			\vspace{3mm}

			Run $\mathcal{LC}_d^{\hat{\chi},\hat{\chi}'}(v_0)$ and if the algorithm rejects or returns anything other than $-1$, then \Return{\texttt{Reject}}
			\tcp*{Random bits $O(m\log q)$, number of queries $O(1)$.}
			\vspace{2mm}

			Sample $v\in V_0$ uniformly at random and run $\mathcal{LC}_d^{\hat{\chi},\hat{\chi}'}(v)$. If the algorithm rejects or returns anything other than $\sigma(v)$, then \Return{\texttt{Reject}}
			\tcp*{Random bits $O(m\log q)$, number of queries $O(1)$.}
			\vspace{2mm}

			\Return{\texttt{Accept}}
		\end{algorithm}
	\end{algobox}

	The number of circuits produced is $2^{r_1} = q^{O(k+m)}$, each of which queries $\ell_1 = O(1)$ locations in the string $\sigma \cup \tau.$ The computations performed by $\psi_b$ are all efficiently computable functions of the queried bits, and hence the circuit $\psi_b$ has size that is polynomial in the number of bits queried, which is $O(d\log q).$ The alphabet $\Sigma$ is the same as that of the PCP verifier and hence has cardinality $q^{O(d)}.$

	To show that this is an assignment-tester, it suffices to argue the completeness and soundness criteria. Completeness is trivial from the definition of the PCP and the properties of the NP-completeness reduction argued above.

	For soundness, let $\sigma:X\rightarrow \{0,1\}$ be an arbitrary assignment and let $\delta:=\delta(\sigma,\varphi^{-1}(1))$. We need to show that the probability that \Cref{algo:AT} rejects is at least $3\delta/4$, since this also means that at least $3\delta/4$ of the circuits $\psi_b$ reject.

	By the soundness of the PCP verifier \Cref{claim:robust-soundness-pcp} we know that it rejects with constant probability unless $\hat{\chi}$ is at distance at most $\eta = 0.01$ from a degree $d$ polynomial $P:\F_q^m\rightarrow \F_q$ that is a low-degree extension of a proper $3$-coloring of $G.$ Since we repeat the tests of the PCP verifier $c_2$ many times, the acceptance probability in case this does not hold is $\exp(-\Omega(c_2))\leq 1/4$ as long as $c_2$ is large enough.

	So from now on, we assume that $\hat{\chi}$ is $\eta$-close to a degree $d$ polynomial $P$ which is a low-degree extension of a proper $3$-coloring $\chi:V\rightarrow \{-1,0,1\}.$ If $\chi(v_0) \neq -1,$ then by \Cref{thm:local-correction}, the next step rejects with probability at least
	$$1-2\sqrt{\eta}-\frac{d}{q-1}\geq 1-2\sqrt{\eta}-\frac{1}{c}\geq \frac{3}{4}.$$
	Thus, we can assume that $\chi(v_0) = -1.$

	Finally, since $\chi$ is a proper $3$-coloring of $G$ and $\chi(v_0) = -1$, it follows that the restriction of $\chi$ to $V_0$ defines a satisfying assignment of $\varphi$. Since $\sigma$ is $\delta$-far from any satisfying assignment of $\varphi$, it follows that for a random $v\in V_0$ chosen in the next step, the probability that $\sigma$ and $\chi$ differ at $v$ is at least $\delta$. Further, by \Cref{thm:local-correction}, the probability that $\mathcal{LC}_d^{\hat{\chi},\hat{\chi}'}(v)$ returns $\chi(v)$ or rejects is at least $3/4$ (as in the previous paragraph). Thus, the chance that this step rejects is at least
	$3\delta/4 $. This concludes the proof of the soundness of $\mathcal{A}.$

	We have thus shown that $\mathcal{A}$ is an assignment-tester with parameters
	$$(O(k+m)\log q, \ell_1=O(1), O(d\log q),3/4).$$
	In order to make the assignment-tester robust, we simply apply the Robustization lemma~\Cref{lem:robustization}. This yields a {\em robust} assignment-tester with parameters $$(O(k+m)\log q, O(d\log q), O(d\log q),3/(4c_1\ell_1))$$ for some constant $c_1>0$, and concludes the proof of \Cref{lem:PCPstoRATs}.
\end{proof}

\paragraph{Proving the PCP theorem.} To conclude the proof of the PCP theorem, we apply the above lemma to the PCPs given by specific instantiations of $V$ already seen above. \\

\begin{lemma}[Outer robust assignment-tester]
	\label{lem:polylogn-rat}
	There exists a robust assignment-tester with parameters $$(r_1=O(\log n),\ell_1 = O(\log n)^2, a_1=O(\log n)^2, \rho_1 = \Omega(1)).$$
\end{lemma}

\begin{proof}
	Set $q = (\log n)^6$ a power of $3,$ $V = H^m$ for $H\subseteq \F_q$ of size $\log n$, and $m = O(\log n/\log \log n)$ such that $|V|=\poly(n)$. By~\Cref{cor:Hhamm-wt-c}, we thus have that the extension degree of $V$ is $d=O((\log n)^2/\log \log n)$ and Macaulay complexity is $k=O(\log n/\log\log n)$. The lemma then follows by applying \Cref{lem:PCPstoRATs}.
\end{proof}

\begin{lemma}[Inner robust assignment-tester]
	\label{lem:neps-rat}
	For every $\alpha>0$, there exists a robust assignment-tester with parameters
	$$(r_2 = O(n^\alpha),\ell_2 = O(1), a_2 = O(1) ,\rho_2 = \Omega(1)).$$
\end{lemma}

\begin{proof}
	Set $V = (\{0,1\}^{m/c}_{\leq 1})^c\subseteq \F_q^m$ for suitably large constant integers $c \ge 2/\alpha$ (i.e., we are treating $\alpha$ and $c$ as constants) and $q$ which a power of 3, and $m = O(n^{1/c})$ such that $|V|=\poly(n)$. By~\Cref{cor:hamm-wt-c}, we thus have that the extension degree of $V$ is $d\le c$ and Macaulay complexity is $k=O(m^2)$. Taking $q\ge cd^3$ to be a sufficiently large constant, by~\Cref{lem:PCPstoRATs}, we obtain a robust assignment-tester with parameters $(O(n^\alpha), O(1),O(1),\Omega(1))$
\end{proof}

We can now prove the PCP theorem.

\begin{proof}[Proof of \Cref{thm:pcp-thm}]
	We compose the $(r_1,\ell_1,a_1,\rho_1)$ robust assignment-testers from~\Cref{lem:polylogn-rat} with the $(r_2,\ell_2,a_2,\rho_2)$ robust assignment-tester from~\Cref{lem:neps-rat} corresponding to $\alpha=1/2$, by using~\Cref{lem:DRcomposition}. This leads to a robust assignment-tester with parameters $(r_3, \ell_3,a_3,\rho_3)$, where $r_3(n)=r_1(n) + r_2(a_1(n)) = O(\log n)$, $\ell_3=\ell_2(a_1(n))=O(1), a_3=a_2(a_1(n))=O(1)$, and $\rho_3=\rho_1\rho_2 =\Omega(1)$. Combined with \Cref{obs:testersPCPs}, this implies the PCP theorem.
\end{proof}

\medskip

\section*{Acknowledgments}

We are thankful to Yassine Ghannane for pointing us to the relevant definitions of Macaulay bases. We are also thankful to the anonymous STOC reviewers for their valuable comments and bringing the work of Khot~\cite{khot2006ruling} to our attention. 

\printbibliography[
	heading=bibintoc,
	title={References}
] 

\appendix
\section{Relation between Gröbner Bases and Macaulay Bases}\label{GrobnerAppendix}
In \Cref{sec:generatingset}
we introduced Macaulay bases.
We here show that all Gr\"obner bases are Macaulay bases,
and that Macaulay bases are also well-behaved under Cartesian products.\\

\begin{definition}
	An ordering of monomials in \(k[\x]\) is called admissible,
	if every monomials \(M, N, L \in k[\x]\) satisfies
	\begin{enumerate}
		\item \(M \leq N \) implies \(ML \leq NL\).
		\item \(M \leq ML\).
	\end{enumerate}
	if an admissible ordering further satisfies
	\begin{enumerate}
		\setcounter{enumi}{2}
		\item \(\deg(M) < \deg(N)\) implies \(M < N\)
	\end{enumerate}
	we call it a graded ordering.
	For a polynomial \(P\), we define \(\LM(P)\)
	as the monomial in \(P\) of maximal order.
\end{definition}

\begin{example}
	For the lexicographic ordering
	we have \(M < N\)
	if there exist \(i\) such that the exponent of \(x_{j}\) in \(M\) is
	equal to the exponent of \(x_{j}\) in \(N\) for \(j <i\)
	and the exponent of \(x_{i}\) in \(M\) is strictly smaller than the
	exponent of \(x_{i}\) in
	\(N\).

	For the graded lexicographic,
	we have \(M < N\) if \(\deg(M) < \deg(N)\) or if
	\(\deg(M)=\deg(N)\) and \(M < N\) in the lexicographic ordering.
\end{example}

\begin{definition}
	fix an admissible ordering of monomials in \(k[\x]\).
	A generating set \(G\) of an ideal
	\(I \subseteq k[\x]\)
	is a Gröbner basis with respect to that ordering,
	if the leading monomial of every polynomial in \(I\)
	is a multiple of a leading monomial
	of a polynomial in \(G\).

	If the ordering is graded, we say that \(G\) is a graded Gröbner basis.
\end{definition}

\begin{lemma}
	Let \(\Grobner\) be a Gröbner basis for a graded ordering.
	Then \(\Grobner\) is a Macaulay basis.
\end{lemma}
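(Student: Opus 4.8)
The plan is to run the multivariate division (reduction) algorithm of an arbitrary $P \in I$ modulo $\Grobner$ and track degrees carefully, exploiting the fact that the ordering is graded.

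\textbf{Setup.} First I would record the one place where ``graded'' is used: for a graded ordering, every nonzero polynomial $r$ satisfies $\deg(\LM(r)) = \deg(r)$, since a monomial of $r$ of strictly larger degree than $\LM(r)$ would, by axiom (3), exceed $\LM(r)$ in the ordering, contradicting maximality. Next I would set up the reduction step. Write $\Grobner = \{g_1,\dots,g_s\}$ and $r_0 = P$. Given $r_j \ne 0$ with $r_j \in I$, the Gr\"obner basis property guarantees that $\LM(g_i)$ divides $\LM(r_j)$ for some index $i = i(j)$; let $M_j := \LM(r_j)/\LM(g_i)$ (a monomial) and let $c_j \in \F_q^{\times}$ be the ratio of the leading coefficients, and set $r_{j+1} := r_j - c_j M_j g_i$. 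By construction the leading term of $r_j$ cancels, so either $r_{j+1} = 0$ or $\LM(r_{j+1}) < \LM(r_j)$; moreover $r_{j+1} \in I$ since $r_j, g_i \in I$.

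\textbf{Degree bookkeeping.} Here I would push through the key estimate. Since the ordering is graded, $\deg(g_i) = \deg(\LM(g_i))$, and by the division rule $\deg(M_j) + \deg(\LM(g_i)) = \deg(\LM(r_j)) = \deg(r_j)$; hence $\deg(M_j g_i) = \deg(r_j)$. In particular every monomial occurring in $c_j M_j g_i$ has degree at most $\deg(r_j)$, and every monomial of $r_j$ has degree at most $\deg(r_j)$, so $\deg(r_{j+1}) \le \deg(r_j)$. Inductively $\deg(r_j) \le \deg(P)$ for all $j$, and therefore at each step $\deg(M_j\, g_{i(j)}) = \deg(r_j) \le \deg(P)$.

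\textbf{Termination and assembly.} The monomials $\LM(r_0) > \LM(r_1) > \cdots$ form a strictly decreasing sequence, and an admissible monomial ordering is a well-ordering (no infinite strictly descending chains, a standard consequence of axioms (1)--(2) via Dickson's lemma), so the process halts after finitely many steps; it must halt with $r_N = 0$, since a nonzero $r_N \in I$ could be reduced once more by the Gr\"obner basis property. Then $P = \sum_{j=0}^{N-1} c_j M_j\, g_{i(j)}$. For each $g \in \Grobner$ I would collect $h_g := \sum_{j : i(j) = g} c_j M_j$, so that $P = \sum_{g \in \Grobner} h_g g$. Each monomial $M_j$ contributing to $h_g$ satisfies $\deg(M_j g) \le \deg(P)$, i.e.\ $\deg(M_j) \le \deg(P) - \deg(g)$, hence $\deg(h_g) \le \deg(P) - \deg(g)$ and $\deg(h_g g) \le \deg(P)$ (with the convention $\deg 0 = -\infty$ when $h_g = 0$). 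This is exactly the defining property of a Gr\"obner generating set.

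\textbf{Main obstacle.} The argument has little conceptual content; the only points requiring care are invoking the graded axiom to identify $\deg(\LM(r))$ with $\deg(r)$ (which is what converts the leading-monomial cancellation of the division algorithm into the uniform degree bound $\deg(h_g g)\le\deg(P)$), and citing the well-ordering of admissible orderings to guarantee termination. Both are routine but should be stated explicitly.
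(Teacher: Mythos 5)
Your proof is correct and takes essentially the same route as the paper: the paper also runs the lead-reduction (division) algorithm, notes that in a graded ordering $\deg(\LM(f'))=\deg(f')$, and concludes each subtracted term $\frac{\LM(f')}{\LM(g)}g$ has degree at most $\deg(f)$. Your version merely spells out the termination and the assembly of the $h_g$ in more detail than the paper does.
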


\begin{proof}
	If \(\Grobner\) is a Gröbner basis for \(\I\),
	then a polynomial \(f\) is in \(\I\)
	if every complete lead reduction of \(f\) results in the zero polynomial.
	Every step of the reduction will be of the form
	\begin{align*}
		f' - \frac{\LM(f')}{\LM(g)}g
	\end{align*}
	for some \(g \in \Grobner\) and \(f'\) being an intermediate result in the reduction.

	For a graded ordering we have by definition \(\deg(\LM(f))=\deg(f)\) for every \(f\), so
	\begin{align*}
		\deg \left(\frac{\LM(f')}{\LM(g)}g
		\right)
		=
		\deg(f')
		\leq
		\deg(f)
	\end{align*}
\end{proof}

\begin{remark}\label{rem:mingrob}
	If \(\Grobner\) is a Gr\"obner basis of smallest
	size,
	then is not necessarily a Macaulay basis of smallest size.
	For example, the two polynomials spanning the ideal
	\begin{align*}
		\left( x_{1}^{2}, x_{1}x_{2}-x_{2}^{2}\right)
	\end{align*}
	form a Macaulay basis, as they both are
	homogeneous of degree 2.
	However, we also have
	\begin{align*}
		x_{2}^{3}= x_{2} \cdot  x_{1}^{2} -
		\left(x_{1}+x_{2}\right) \cdot \left( x_{1}x_{2}-x_{2}^{2}\right)
	\end{align*}
	so \(x_{2}^{3} \in \left(x_{1}^{2}, x_{1}x_{2}-x_{2}^{2}\right)\),
	and so in the graded lexicographic ordering
	\(x_{1}^{2}, x_{1}x_{2}, x_{2}^{3}\) are all leading monomials.
	Since no monomials of degree 2 can divide two of these monomials,
	a Gr\"obner basis must contain at least 3 elements.
\end{remark}

\addtocontents{toc}{\protect\setcounter{tocdepth}{1}}

\end{document}